\newcommand{\simiid}{\stackrel{\text{i.i.d.}}{\sim}}
\newtheorem{thm}{Theorem}
\newtheorem{defi}{Definition}
\newtheorem{prop}{Proposition}
\newtheorem{coro}{Corollary}
\newtheorem{lemma}{Lemma}
\newcommand{\rr}{\mathbb R}
\newcommand{\p}{\mathbb P}
\newcommand{\ind}{\mathbf{1}}
\newcommand{\di}[1]{\mathop{d#1}}
\newcommand{\Xno}[1]{X_{\text{-}{#1}}}
\newcommand{\xno}[1]{x_{\text{-}{#1}}}
\DeclareMathOperator{\Var}{Var}
\DeclareMathOperator{\Cov}{Cov}
\DeclareMathOperator{\Expo}{Expo}
\DeclareMathOperator{\Bern}{Bern}
\DeclareMathOperator{\diag}{diag}
\DeclareMathOperator{\MAC}{MAC}
\newcommand{\e}{\mathbb E}
\newcommand{\ci}{\mathrel{\perp\mspace{-10mu}\perp}}
\providecommand{\abs}[1]{\lvert#1\rvert}
\newcommand{\Xk}{\tilde{X}}
\newcommand{\xk}{\tilde{x}}
\newcommand{\bXk}{\tilde{X}}
\newcommand{\bxk}{\tilde{x}}
\newcommand{\eqd}{\stackrel{d}{=}}
\newcommand{\law}{\mathcal{L}}
\newcommand{\swap}{\textnormal{swap}}
\newcommand{\ch}{\textnormal{ch}}
\newcommand{\density}{f}
\newcommand\independent{\ci}
\title{Metropolized Knockoff Sampling}
\author[1]{
Stephen Bates\thanks{Authors are listed in alphabetical order.}}
\author[1,2]{Emmanuel Cand\`es}
\author[3]{Lucas Janson}
\author[3]{Wenshuo Wang}
\date{\today}
\affil[1]{Department of Statistics, Stanford University, Stanford, CA 94305}
\affil[2]{Department of Mathematics, Stanford University, Stanford, CA 94305}
\affil[3]{Department of Statistics, Harvard University, Cambridge, MA 02138}
\begin{document}

\maketitle

\begin{abstract}
Model-X knockoffs is a wrapper that transforms essentially any feature importance measure 
into a variable selection algorithm, which discovers true effects while rigorously controlling 
the expected fraction of false positives.  
A frequently discussed  
challenge to apply this method is to construct knockoff variables, which are synthetic variables
obeying a crucial exchangeability property with the explanatory variables under study. This paper 
introduces techniques for knockoff generation in great
generality: we provide a sequential characterization of all
possible knockoff distributions, which leads to a Metropolis--Hastings
formulation of an {\em exact} knockoff sampler. We further show
how to use conditional independence structure to speed up computations.
Combining these two threads, we introduce an explicit set of sequential 
algorithms
and empirically demonstrate their effectiveness. Our theoretical analysis 
proves that our algorithms achieve near-optimal computational complexity in certain cases. 
The techniques we develop are
sufficiently rich to enable knockoff sampling in challenging models
including cases where the covariates are continuous and heavy-tailed, and follow a graphical model such as the Ising model.
\end{abstract}

{\small {\bf Keywords.}  False discovery rate (FDR),
  Metropolis--Hastings, Markov chain, graphical models, Ising model, junction tree, treewidth}

\section{Introduction}

In modern science, researchers often have access to large data sets
featuring comprehensive measurements about some phenomenon of
interest.  The question is then to discover meaningful relationships
between an outcome and all the measured covariates. While it is
often expected that only a small fraction of the covariates may be
associated with the outcome, the relevance of any particular variable
is unknown a priori. 
For instance, a researcher may be interested in understanding which
of the thousands of gene-expression profiles may help determine the
severity of a tumor. In such circumstances, the researcher often
relies on statistical algorithms to sift through large data sets and find those 
promising candidates,  
making variable selection a topic of central importance in contemporary
statistical research.

The knockoff filter
\citep{barber2015controlling,candes2018panning} has recently emerged as a
useful framework for performing controlled variable selection,
allowing the user to convert any black-box feature importance measure
into a variable selection procedure while rigorously controlling 
the expected fraction of false positives. This means that the statistician 
can use essentially any black-box importance measure  to return a list of variables 
with the guarantee that, on the average, the ratio between the number of false positives---loosely 
speaking, a false positive is a variable that does not influence the response, 
see \citet{candes2018panning}---and the total number of reported variables is below a user-specified threshold. 
The strength of this method is that the guarantees hold in finite samples and 
in situations where nothing can be assumed about the dependence between the response 
and the explanatory variables. 
Instead, the statistician must have knowledge of the distribution of
the explanatory variables. When this happens to be the case, a remaining challenge is the ability to generate the {\em knockoffs}, 
a set of synthetic variables, which can essentially be used as negative controls; these fake variables  
must mimic the original variables in a
particular way without having any additional predictive power. 
In sum, constructing valid knockoff distributions and sampling mechanisms
across a wide range of covariate models is critical to deploying the knockoff filter in a number of applications.

\subsection{Our contribution}

This paper describes a theory for sampling knockoff variables and introduces a
general and efficient sampler inspired by ideas
from Markov chain Monte Carlo (MCMC). Before moving on, we pause to explicitly mention the two main considerations one should keep in mind when constructing knockoffs:  
\begin{description}
\item {\bf Computation.} How can we {\em efficiently} sample nontrivial knockoffs?
\item {\bf Statistical power.} How can we generate knockoffs that will
  ultimately lead to {\em powerful} variable selection procedures? On this
  note, it has been observed that knockoffs that are less
  correlated with the original variables lead to higher power
  \citep{barber2015controlling,candes2018panning} and, therefore, low correlation must be a
  design objective.
\end{description}
Having said that, our work makes several specific contributions.
\begin{enumerate}
\item {\bf Characterization of all knockoff distributions}. We
  provide a sequential characterization of \emph{every} valid
  knockoff distribution. Furthermore, we introduce a
  connection linking pairwise exchangeability between original and
  knockoff variables to reversible Markov chains, enabling the use of
  {powerful} 
  sampling tools from computational statistics.

\item {\bf Complexity of knockoff sampling procedures}. We introduce
  a class of algorithms which use conditional independence information
  to efficiently generate knockoffs. The computational complexity of
  such procedures is shown to be determined by the complexity of the
  dependence structure in a precise way. Furthermore, we present a
  lower bound on complexity showing that structural assumptions are
  necessary for efficient computation, and that our procedure achieves
  the lower bound in certain cases.

\item {\bf Practical sampling algorithms}. We develop a concrete
  knockoff sampler for a large number of
  distributions. This is achieved by constructing a family of MCMC
  tools---designed to have good performance---which only require the
  numerical evaluation of an \emph{unnormalized} density. We identify
  a default parameter setting for the sampler that performs well
  across a variety of situations, producing a general and easy-to-use
  tool for practitioners.
\end{enumerate}
We shall see that our ideas enable knockoff sampling in challenging models including situations where the
covariates are continuous and heavy-tailed and where they follow an
Ising model.

\subsection{Related literature}

This paper draws most heavily on \citet{candes2018panning},  which
builds on \citet{barber2015controlling} to introduce the model-X
knockoff framework. In particular, the former reference proposes the
\emph{Sequential Conditional Independent Pairs} (SCIP) procedure for
knockoff generation; this is the only known generic knockoff sampler to date, which shall serve 
as our starting point. The SCIP procedure,
however, is only abstractly specified and prior to this paper,
implementations were only available for Gaussian distributions and
discrete Markov chains. Briefly, \citet{sesia2018gene}
developed a concrete SCIP algorithm for discrete Markov chains, and then 
leveraged this construction to sample knockoffs for covariates
following hidden Markov models widely used in genome-wide association
studies.  Similarly relevant is the work of
\citet{gimenez2018knockoffs}, which developed a sampling strategy for
a restricted class of Bayesian networks, most notably Gaussian mixture
models. In contrast, we address here knockoff sampling for a much larger
class of distributions, namely, arbitrary graphical models. We also describe 
the form of all valid knockoff sampling strategies, thereby providing a framework possibly 
enabling the construction of future  knockoff sampling algorithms. Hence, our work may be of value to the increasing number of researchers 
deploying the knockoff
framework for feature selection in a variety of applications including
neural networks \citep{deeppink2018}, time-series modeling
\citep{ipad2018}, Gaussian graphical model structure learning
\citep{ZHENG2018201}, and biology \citep{Xiao2017,Gao2018}. 
Lastly, we close by emphasizing that our contribution is very different from a new strand of research introducing
approximate knockoffs generated with techniques from deep learning
\citep{romano2018deep,jordon2018knockoffgan,liu2018auto}. While these
approaches are tantalizing and demonstrate promising empirical
performance in low-dimensional situations, they currently lack formal
guarantees about their validity.

\section{Characterizing knockoff distributions}  \label{sec:scep}



\subsection{Knockoff variables}
Consider random covariates $X = (X_1, X_2, \ldots, X_p)$. We say
that the random variables $\tilde{X} = (\Xk_1, \Xk_2, \ldots, \Xk_p)$ are {\em knockoffs} for $X$ if
for each $j = 1, \ldots, p$,
\begin{align} \label{eq:knockoff-swap}
({X},\bXk)_{\swap(j)} \eqd ({X},\bXk).
\end{align}
Here, the notation $\swap(j)$ means permuting $X_j$ and $\Xk_j$; for
instance, $(X_1,X_2,X_3, \Xk_1, \Xk_2, \Xk_3)_{\swap(2)}$ is the vector 
$(X_1,\Xk_2,X_3, \Xk_1, X_2, \Xk_3)$.\footnote{In the presence of a
  response $Y$, we also require $\bXk \ci Y \mid  X$, which is
  easily satisfied by procedures that generate $\bXk$ from $ X$
  without looking at $ Y$.} Property (\ref{eq:knockoff-swap}) is
known as the {\em pairwise exchangeability} property, and it is in
general challenging to define joint distributions $(X, \tilde{X})$ satisfying this condition. Before continuing, we briefly pause to understand the
meaning of pairwise exchangeability.  A consequence of
\eqref{eq:knockoff-swap} is that for all sets
$A \subseteq \{1,\dots,p\}$,
\begin{align*} \label{eq:knockoff-multi-swap}
({X},\bXk)_{\swap(A)} \eqd ({X},\bXk),
\end{align*}
where $({X},\bXk)_{\swap(A)}$ denotes the swapping of $X_j$ and
$\Xk_j$ for all $j \in A$. Taking $A = \{1,\dots,p\}$ and marginalizing, we immediately
see that $\bXk \eqd X$; that is, $X$ and $\Xk$ are distributed in the
same way. Also changing any subset of entries of $ X$ with their
knockoff counterparts does not change the distribution either.
Another consequence of the exchangeability property
\eqref{eq:knockoff-swap} is that the mixed second moments of
$({X},\bXk)$ must match. Assume the second moments of $X$ exist and
write $\bm\Sigma = \Cov(X)$. Then the covariance of the vector
$(X,\Xk)$ must take the form
\begin{equation}
\Cov( X,\bXk)=\bm\Gamma(s)  := \left[
  \begin{array}{cc}
\bm\Sigma & \bm\Sigma-\diag(s)\\
\bm\Sigma-\diag(s) & \bm\Sigma\\
  \end{array}
\right], 
\label{eq:knockoff-cov}
\end{equation}
where $s \in \mathbb{R}^p$ is any vector such that the right-hand side is
positive semi-definite. In other words, for each pair $(i,j)$ with
$i \neq j$, we have $\Cov(X_i, \Xk_j) = \Cov(X_i, X_j)$.

We are interested in constructing knockoff variables and below we call a {\em knockoff sampler} a procedure that takes as inputs a distribution $\p$ and a sample $X \sim \p$ and returns $\Xk$ such that \eqref{eq:knockoff-swap} holds. Nontrivial samplers have been demonstrated in a few cases, for instance, when 
$X \sim \mathcal N( 0, \bm\Sigma)$ is multivariate Gaussian. In this case, 
\citet{candes2018panning} show that if $({X},\bXk)$ is jointly
Gaussian with mean zero and covariance $\bm\Gamma(s)$, then the entries of $\bXk$ are knockoffs for $X$. One can say that appropriately matching the first two moments
is sufficient to generate knockoffs in the special case of the
multivariate normal distribution. However, this does not extend and 
matching the first two moments is in general not sufficient; to be
sure, \eqref{eq:knockoff-swap} requires that all moments match
appropriately.

\paragraph{Gibbs measures.}
As a motivating example, consider the Ising model, a frequently discussed family of Gibbs measures first introduced in the statistical physics
literature \citep{ising1925beitrag}. In this model, the random vector $X \in \{-1,1\}^{d_1 \times d_2}$ defined over a $d_1 \times d_2$ 
$ X \in \{-1,1\}^{d_1 \times d_2}$ grid has 
a probability mass function (PMF) of the form 
\begin{equation} \label{eq:ising-pmf}
\p( X) = \frac{1}{Z(\beta, \alpha)} \exp\left(\sum_{\substack{s, t \in \mathcal I \\\|s-t\|_1=1}} \beta_{st} X_{s} X_{t} + \sum_{s\in \mathcal I}\alpha_{s} X_{s} \right);
\end{equation}
here, $\mathcal I = \{(i_1,i_2) : 1\le i_1 \le d_1,1\le i_2 \le d_2\}$
is the grid and $\alpha$ and $\beta$ are
parameters. As we have seen, knockoffs $\bXk$ for $ X$ must
marginally follow the Ising distribution
\eqref{eq:ising-pmf}. Furthermore, $\bXk$ must be dependent on $ X$ in
such a way that any vector of the form 
$\{(Z_1,\dots,Z_p) : Z_j = X_j\text{ or } Z_j = \Xk_j, 1\le j\le p\}$
has PMF given by \eqref{eq:ising-pmf}. It is tempting to na{\"i}vely define a joint PMF for  $(X,\Xk)$ as
\begin{equation*}
\p(X,\Xk) \propto \exp\left(\sum_{\substack{s, t \in \mathcal I \\\|s-t\|_1=1}} \beta_{st} (X_{s} X_{t}+\Xk_s\Xk_t+X_s\Xk_t+\Xk_sX_t) + \sum_{s\in \mathcal I}\alpha_{s} (X_{s}+\Xk_s) \right). 
\end{equation*}
Although the joint distribution is symmetric in $X_s$ and $\Xk_s$, the marginal 
distribution of $X$ is not an Ising model!
Hence, this is not a valid joint distribution. Other than the trivial
construction $\bXk =  X$, it is a priori unclear how one would construct knockoffs. Any distribution continuous or discrete factoring over a grid poses a similar challenge.

\subsection{SCIP and its limitations}
\label{subsec:knockoffconstruction}
The only generic knockoff sampler one can find in
the literature is SCIP from \citet{candes2018panning},
given in Procedure \ref{alg:sequential}. While this procedure provably
generates valid knockoffs for any input distribution, there are two
substantial limitations. The first is that SCIP is only given
abstractly; it is challenging to specify $\law(X_j \mid \Xno{j}, \, \bXk_{1:(j-1)})$,\footnote{We use
  $\law(W_1 \mid W_2)$ to denote the conditional distribution of $W_1$
  given $W_2$. We use the subscript $1:0$ to mean an empty
  vector.} let alone to sample from it.  As a result, it is only known how to implement SCIP for
very special models such as discrete Markov chains and Gaussian
distributions. The second limitation is that SCIP is not able to
generate all valid knockoff distributions. Recall that we want 
knockoffs to have low correlations with the
original variables so that a feature importance statistic will
correctly detect true effects. To achieve this goal, we might need a wider range of sampling mechanisms. 

\begin{procedure}[h]
\caption{Sequential Conditional Independent Pairs \space (SCIP)\label{alg:sequential}}
	\SetAlgoLined\DontPrintSemicolon
	{
		\For{$j=1$ \KwTo $p$} {
			Sample $\Xk_j$ from $\law(X_j \, \mid  \, \Xno{j}, \, \bXk_{1:(j-1)})$, conditionally independently from $X_j$\; 
		}
	}
\end{procedure}


\subsection{Sequential formulation of knockoff distributions}
We begin by introducing a sequential characterization of \emph{all} valid knockoff distributions, which will later lead to a new class of knockoff samplers.

\begin{thm}[Sequential characterization of knockoff distributions]
\label{theorem:seq}
Let $( X, \bXk)\in\rr^{2p}$ be a random vector. Then pairwise exchangeability \eqref{eq:knockoff-swap} holds if and only if both of the following conditions hold:
\begin{description}
\item[Conditional exchangeability] For each $j \in \{1, \ldots, p\}$, 
\begin{equation}
\label{eq:cond-exch}
(X_j, \tilde X_j)\mid \Xno{j}, \tilde{ X}_{1:(j-1)}\eqd(\tilde X_j, X_j)\mid \Xno{j}, \tilde{ X}_{1:(j-1)}.
\end{equation}
\item[Knockoff symmetry] For each $j \in \{1, \ldots, p\}$, 
\begin{equation} \label{eq:knockoff-symmetry-condition}
\p((X_j, \tilde X_j)\in A\mid \Xno{j}, \tilde{ X}_{1:(j-1)})
\end{equation}
is
$\sigma(X_{(j+1):p},\{X_1,\tilde X_1\},\dots,\{X_{j-1},\tilde
X_{j-1}\})$-measurable for any Borel set $A$, where $\{\cdot,\cdot\}$
denotes the unordered pair. That is, the conditional distribution does not change if we swap previously sampled knockoffs with the original features.
\end{description}
\end{thm}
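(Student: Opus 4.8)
The statement is an equivalence, and I would prove the two implications separately. For ``$\eqref{eq:knockoff-swap}\Rightarrow$ conditional exchangeability and knockoff symmetry'' I would use the consequence of \eqref{eq:knockoff-swap} already recorded above, that $(X,\bXk)_{\swap(A)}\eqd(X,\bXk)$ for every $A\subseteq\{1,\dots,p\}$. Conditional exchangeability is the case $A=\{j\}$: this swap leaves the subvector $W_j:=(\Xno{j},\bXk_{1:(j-1)})$ untouched and only interchanges $X_j$ with $\Xk_j$, so $\bigl(W_j,(X_j,\Xk_j)\bigr)\eqd\bigl(W_j,(\Xk_j,X_j)\bigr)$; equating $W_j$-marginals and passing to regular conditional distributions (all the spaces here are Euclidean, hence standard Borel) yields \eqref{eq:cond-exch}. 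For knockoff symmetry I would instead take $A\subseteq\{1,\dots,j-1\}$; now the swap fixes $(X_j,\Xk_j)$ and acts on $W_j$ as the coordinatewise transposition $\tau_A$ interchanging $X_i$ with $\Xk_i$ for each $i\in A$, so $\bigl(\tau_A(W_j),(X_j,\Xk_j)\bigr)\eqd\bigl(W_j,(X_j,\Xk_j)\bigr)$. The elementary fact I would invoke is that whenever $(W,U)\eqd(\tau(W),U)$ for a measure-preserving involution $\tau$ of the $W$-space, one has $\law(U\mid W=w)=\law(U\mid W=\tau(w))$ for a.e.\ $w$, proved by testing against product functions $g(w)h(u)$ and changing variables. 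Applying this for the finitely many such $A$ and averaging over the group they generate produces a version of $\p\bigl((X_j,\Xk_j)\in A\mid W_j\bigr)$ invariant under every transposition $\tau_{\{i\}}$, $i<j$, hence measurable with respect to $\sigma(X_{(j+1):p},\{X_1,\Xk_1\},\dots,\{X_{j-1},\Xk_{j-1}\})$, which is \eqref{eq:knockoff-symmetry-condition}.

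For the converse I would argue by induction on $k$ that
\[
(X,\bXk_{1:k})_{\swap(A)}\eqd(X,\bXk_{1:k})\qquad\text{for every }A\subseteq\{1,\dots,k\},
\]
the case $k=p$, $A=\{j\}$ being precisely \eqref{eq:knockoff-swap}. The base case $k=0$ is vacuous. For the step, write $V_{k-1}=(X,\bXk_{1:(k-1)})$ and $V_k=(X,\bXk_{1:k})$, and distinguish two cases. If $k\notin A$, then $A\subseteq\{1,\dots,k-1\}$ and the swap does not touch $\Xk_k$, so $(V_k)_{\swap(A)}=\bigl((V_{k-1})_{\swap(A)},\Xk_k\bigr)$; conditioning on $V_{k-1}$, the crucial point is that knockoff symmetry forces $\law(\Xk_k\mid V_{k-1}=v)$ to be invariant under $v\mapsto(v)_{\swap(A)}$ --- indeed this kernel is obtained by disintegrating $\law\bigl((X_k,\Xk_k)\mid \Xno{k},\bXk_{1:(k-1)}\bigr)$ along $X_k$, and that conditional law is $\tau_A$-invariant in its conditioning argument --- and combining this invariance with the inductive hypothesis for $V_{k-1}$, via the same product-function and change-of-variables computation, closes the case. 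If $k\in A$, write $A=B\cup\{k\}$ with $B\subseteq\{1,\dots,k-1\}$; by the first case $U:=(V_k)_{\swap(B)}\eqd V_k$, so $(V_k)_{\swap(A)}=U_{\swap(k)}\eqd(V_k)_{\swap(k)}$, and this last vector has the same law as $V_k$ because conditional exchangeability \eqref{eq:cond-exch} with $j=k$ says exactly that interchanging $X_k$ and $\Xk_k$ leaves the joint law of $\bigl((\Xno{k},\bXk_{1:(k-1)}),(X_k,\Xk_k)\bigr)$ unchanged.

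The mechanical parts are the change-of-variables identities and the lemma ``$(W,U)\eqd(\tau(W),U)$ implies conditional $\tau$-invariance.'' The step I expect to demand the most care is the $k\notin A$ case of the induction: knockoff symmetry is phrased as a statement about the conditional law of $(X_k,\Xk_k)$ given $(\Xno{k},\bXk_{1:(k-1)})$, whereas the joint distribution is assembled from the ``forward'' kernels $\law(\Xk_k\mid X,\bXk_{1:(k-1)})$, and one must verify that the former's invariance really transfers to the latter --- i.e., that disintegrating a $\tau_A$-invariant kernel along a coordinate that $\tau_A$ fixes again yields a $\tau_A$-invariant kernel --- along with the null-set bookkeeping over the finite swap group (legitimate because all spaces are standard Borel). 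That transfer is the hinge of the whole argument.
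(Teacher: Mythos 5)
Your proof is correct and follows essentially the same route as the paper's: the forward direction by marginalizing out $\tilde X_{(j+1):p}$ and disintegrating, and the converse by induction on the number of knockoff coordinates, decomposing the joint law into a swap-invariant marginal and a conditional kernel whose invariance is supplied by conditional exchangeability and knockoff symmetry. You simply spell out the measure-theoretic bookkeeping (the disintegration-transfer lemma and null-set handling) that the paper's terser argument leaves implicit.
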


Theorem \ref{theorem:seq} implies that a sequential knockoff sampling
algorithm faithful to these two conditions is as general as it
gets. The challenge now becomes creating exchangeable random variables
at each step (with a little caution on the dependence on the previous
pairs of variables). In turn, this task happens to be equivalent to
designing a time-reversible Markov chain, as formalized below.
\begin{prop}
A pair of random variables $(Z, \tilde Z)$ is exchangeable, i.e., $(Z, \tilde Z)\eqd(\tilde Z, Z)$, with marginal distribution $\pi$ for $Z$---and, therefore, for $\tilde Z$ as well---if and only if there exists a time-reversible Markov chain $\{Z_n\}_{n=1}^\infty$ such that $Z_1\sim\pi$ is a stationary distribution of the chain, and $(Z_1, Z_2)\eqd(Z,\tilde Z)$.
\label{prop:trMarkov}
\end{prop}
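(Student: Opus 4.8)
The plan is to identify the conditional law of $\tilde Z$ given $Z$ with the one-step transition kernel of a Markov chain, and to recognize the exchangeability of $(Z,\tilde Z)$ as exactly the detailed-balance (reversibility) condition for that kernel with respect to $\pi$. In both directions the argument amounts to disintegrating a joint law into a marginal times a kernel and then reading off what a coordinate swap does.

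For the ``only if'' direction, suppose $(Z,\tilde Z)\eqd(\tilde Z,Z)$ with $Z\sim\pi$. Since the state space is $\rr$ (Polish), a regular conditional distribution exists; set $K(z,\cdot)=\law(\tilde Z\mid Z=z)$, so the joint law of $(Z,\tilde Z)$ disintegrates as $\pi(\di{z})\,K(z,\di{z'})$. Exchangeability says this joint law is invariant under swapping the two coordinates, i.e.\ $\pi(\di{z})\,K(z,\di{z'})=\pi(\di{z'})\,K(z',\di{z})$ as measures on $\rr^2$, which is precisely detailed balance of $K$ with respect to $\pi$. Let $\{Z_n\}_{n\ge1}$ be the Markov chain (constructed via Ionescu--Tulcea) with initial law $\pi$ and transition kernel $K$. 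Integrating detailed balance over $z$ shows $\pi$ is stationary, and a stationary chain with a $\pi$-reversible kernel is time-reversible; by construction $(Z_1,Z_2)$ has law $\pi(\di{z})\,K(z,\di{z'})$, that is, $(Z_1,Z_2)\eqd(Z,\tilde Z)$.

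For the converse, let $\{Z_n\}$ be a time-reversible Markov chain with transition kernel $K$, stationary distribution $\pi$, and $Z_1\sim\pi$. Then $(Z_1,Z_2)$ has law $\pi(\di{z_1})\,K(z_1,\di{z_2})$, while $(Z_2,Z_1)$ has law $\pi(\di{z_2})\,K(z_2,\di{z_1})$, using stationarity so that $Z_2\sim\pi$. Time-reversibility of the chain forces these two measures to coincide (the $n=2$ instance of $(Z_1,\dots,Z_n)\eqd(Z_n,\dots,Z_1)$, equivalently detailed balance of $K$), hence $(Z_1,Z_2)\eqd(Z_2,Z_1)$; taking $(Z,\tilde Z)=(Z_1,Z_2)$ exhibits the desired exchangeable pair with marginal $\pi$.

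The one genuine subtlety — the step I would write most carefully — is the measure-theoretic bookkeeping: the existence of the regular conditional distribution $K$, and the precise sense in which ``$(Z,\tilde Z)$ exchangeable'' and ``$K$ reversible with respect to $\pi$'' are literally the same identity, namely $\iint g(z,z')\,\pi(\di{z})\,K(z,\di{z'})=\iint g(z',z)\,\pi(\di{z})\,K(z,\di{z'})$ for every bounded measurable $g\colon\rr^2\to\rr$. Everything else is a straightforward unwinding of definitions. It is also worth recording that the construction is highly non-unique — many distinct reversible chains share the same first two coordinates — and this design freedom is precisely what the subsequent sections exploit.
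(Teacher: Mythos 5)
Your proof is correct and follows essentially the same route as the paper's: in the forward direction the paper also takes the transition kernel to be $\law(\tilde Z\mid Z)$ and reads off stationarity, the identity $(Z_1,Z_2)\eqd(Z,\tilde Z)$, and time-reversibility directly from exchangeability, with the converse being immediate. Your version merely spells out the measure-theoretic details (regular conditional distributions, detailed balance as an identity of measures) that the paper leaves implicit.
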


Combining these two results gives SCEP (Procedure \ref{alg:scep} below), which is a
completely general strategy for generating knockoffs: at each step
$j$, we design a time-reversible Markov chain with stationary
distribution $\mathcal L(X_j\mid \Xno{j},\tilde{ X}_{1:(j-1)})$, and
draw a sample by taking one step of this chain starting from $X_j$.
Proposition \ref{prop:trMarkov} implies that the conditional
exchangeability \eqref{eq:cond-exch} holds. Furthermore, the symmetry requirement
on the transition kernel implies that SCEP
does not break the exchangeability from previous steps; that
is, the knockoff symmetry \eqref{eq:knockoff-symmetry-condition} also
holds. Theorem \ref{theorem:seq} then implies that such a procedure
produces valid knockoffs.

\begin{procedure}[H]
\caption{Sequential Conditional Exchangeable Pairs \space (SCEP)\label{alg:scep}}
\SetAlgoLined\DontPrintSemicolon {
\For{$j=1$ \KwTo $p$} {
    Sample $\Xk_j$ by taking one step of a time-reversible Markov
    chain starting from $X_j$.\\
    The transition kernel must be such that it depends only on
    $X_{(j+1):p}$ and the unordered pairs
    $\{X_1,\tilde X_1\},\dots,\{X_{j-1},\tilde X_{j-1}\}$, and admits
    $\mathcal L(X_j\mid \Xno{j},\tilde{ X}_{1:(j-1)})$ as a
    stationary distribution.  \;} }
\end{procedure}

To rehearse the universality of SCEP, consider an arbitrary knockoff sampler producing $\Xk_1, \ldots, \Xk_p$. Then from Theorem \ref{theorem:seq} we know that $X_1$ and $\tilde{X}_1$ must be exchangeable conditional on $\Xno{1}$. Therefore, $\Xk_1$ may be sampled by taking one step of a reversible Markov chain starting at $X_1$. Moving on to $X_2$, Theorem 2 informs us that $X_2$ and $\Xk_2$ are exchangeable conditional on $\{X_1, \Xk_1\}, X_3, \dots, X_p$, so $\Xk_2$ can again be viewed as taking one step of a reversible Markov chain starting at $X_2$. Continuing in this fashion for $j = 3, \ldots, p$  establishes our claim. 

SCEP as stated remains too abstract to be considered an implementable algorithm, so we will next develop a concrete version of this procedure. Although this may not yet be clear, we would like to stress that formulating a knockoff sampler in terms of
reversible Markov chains is an important step forward because it will
ultimately enable the use of flexible MCMC tools.

\section{The Metropolized knockoff sampler}

We now demonstrate how one can generate knockoffs in a sequential manner by making proposals which are either accepted or rejected in a Metropolis--Hastings-like fashion as to ensure pairwise exchangeability.

\subsection{Algorithm description}
\label{subsec:mhscep}


The celebrated  Metropolis--Hastings (MH) algorithm \citep{metropolis1953equation,hastings1970monte}
provides a general
time-reversible Markov transition kernel whose stationary distribution
is an arbitrary density function $\pi$. To 
construct a transition from $x$ to $y$, MH operates as follows: generate a proposal
$x^*$ from a distribution $q(\cdot \mid x)$ (any distribution
depending on $x$) and set\footnote{More generally, we take as
  acceptance probability $\gamma \, \alpha$ with $\gamma\in(0,1]$. In
  this work, $\gamma$ is set to $1$ as default, except in Section
  \ref{subsec:MTM} and Appendix \ref{subapp:effect-gamma}, which
  are cases where tuning $\gamma$ is recommended.}
\[
  y = \begin{cases} x^* & \text{with prob. } \alpha,\\
    x & \text{with prob. } 1- \alpha,
  \end{cases}
  \qquad \alpha = 
    \min\left(1, \frac{\pi(x^*)q(x \mid x^*)}{\pi(x)q(x^* \mid
  x)}\right).
\]
This can be implemented even when the density $\pi$ is unnormalized,
as the normalizing constants cancel. In
our setting, we shall make sure that the choice of the proposal
distribution depends on the previously sampled pairs in a symmetric
fashion, thereby remaining faithful to the knockoff symmetry condition
\eqref{eq:knockoff-symmetry-condition} in Theorem \ref{theorem:seq}. As such, we call such proposals \emph{faithful}.

Consider now running SCEP (Procedure \ref{alg:scep}) with the MH
kernel, where at the $j$th step, the target distribution $\pi$ is
taken to be $\law(X_j \mid \Xno{j}, \bXk_{1:j-1})$. The issue with such a na\"ive implementation 
is that the target $\pi$ cannot be readily evaluated. To understand why this is the case, set $j = 2$ and
consider $\law(X_2 \mid \Xno{2}, \bXk_{1})$. This distribution has density
proportional to
$\p( X= x) \p(\tilde X_1=\tilde x_1\mid  X= x)$, which is
equal to
\begin{multline}
        \p( X= x)\bigg[q(\xk_1\mid x_1)\min\left(1,\frac{q(x_1\mid \xk_1)\p(X_1=\xk_1,\Xno{1}=\xno{1})}{q(\xk_1\mid x_1)\p(X_1=x_1,\Xno{1}=\xno{1})}\right) \\
    +\delta(\xk_1-x_1)\int q(x^* \mid  x_1)\left(1-\min\left(1,\frac{q(x_1\mid x^*)\p(X_1=x^*,\Xno{1}=\xno{1})}{q(x^*\mid x_1)\p(X_1=x_1,\Xno{1}=\xno{1})}\right)\right)\di{x^*}\bigg]. \label{eq:MH-explicit}
  \end{multline}
  The first term in the summation within the brackets corresponds
  to the acceptance case while the second corresponds to the rejection
  case. This latter term cannot be evaluated because of the
  integral over $x^*$. Hence, the target density cannot be evaluated either.



  We propose an effective solution to this problem: {\em condition on the proposals}
  and at step $j$, let the target distribution be
  $\law(X_j \mid \Xno{j}, \bXk_{1:j-1},  X^*_{1:j-1})$ rather than $\law(X_j \mid \Xno{j}, \bXk_{1:j-1})$. This has the effect of removing 
  the integral and makes computing the rejection probability
  tractable. This is best seen by returning to our example where $j = 2$. Here, $\law(X_j \mid \Xno{j}, \bXk_{1:j-1},  X^*_{1:j-1})$ has density now proportional to 
\begin{multline}
\p( X= x)q(x_1^*\mid x_1)\bigg[\delta(\xk_1 - x_1^*)
    \min\left(1,\frac{q(x_1\mid \xk_1)\p(X_1=\xk_1,\Xno{1}=\xno{1})}{q(\xk_1\mid x_1)\p(X_1=x_1,\Xno{1}=\xno{1})}\right)  \\
   + \delta(\xk_1-x_1)\left(1-\min\left(1,\frac{q(x_1\mid x^*_1)\p(X_1=x^*_1,\Xno{1}=\xno{1})}{q(x^*_1\mid x_1)\p(X_1=x_1,\Xno{1}=\xno{1})}\right)\right)\bigg]. \label{eq:MH-SCEP-rejection-prob}
\end{multline}
We will show in Section \ref{sec:graph-structure} how such terms can
be efficiently computed. Leaving aside implementation details for the
moment, this strategy leads to Algorithm \ref{alg:finalscep}. Here and
elsewhere, $\p$ denotes the density of the variables under study, or
formally, the Radon--Nikodym derivative with respect to a common
dominating measure.
\begin{algorithm}[h]
\caption{Metropolized knockoff sampling (Metro).\label{alg:finalscep}}
	\SetAlgoLined\DontPrintSemicolon
	{
		\For{$j=1$ \KwTo $p$} {
			Sample $X^*_j=x^*_j$ from a faithful proposal distribution $q_j$.\;
			Accept the proposal with probability\; $\hspace{1cm}\min\left(1,\frac{q_j(x_j \mid  x^*_j) \p\left(\Xno{j}=\xno{j},X_j=x^*_j,\tilde{ X}_{1:(j-1)}=\tilde{ x}_{1:(j-1)}, X^*_{1:(j-1)}= x^*_{1:(j-1)}\right)}{q_j(x^*_j \mid  x_j)\p\left(\Xno{j}=\xno{j},X_j=x_j,\tilde{ X}_{1:(j-1)}=\tilde{ x}_{1:(j-1)}, X^*_{1:(j-1)}= x^*_{1:(j-1)}\right)}\right)$.\;
			Upon acceptance, set $\tilde x_j=x^*_j$; otherwise, set $\tilde x_j=x_j$.\;
		}
	}
	Return $\tilde{ X}=(\tilde x_1,\tilde x_2,\dots,\tilde x_p)$
\end{algorithm}

At this point, it should be clear that Metropolized knockoff sampling generates
exact knockoffs, a fact we formally record below. 
\begin{coro}
Metropolized knockoff sampling (Metro) produces valid knockoffs.
\label{coro:alg1}
\end{coro}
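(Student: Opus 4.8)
The plan is to verify that the Metropolized knockoff sampler (Algorithm~\ref{alg:finalscep}) is a concrete instance of SCEP (Procedure~\ref{alg:scep}), so that the validity follows from Theorem~\ref{theorem:seq} via Proposition~\ref{prop:trMarkov}. Concretely, I will check three things at each step $j$: (i) the map $X_j \mapsto \tilde X_j$ is one step of a Markov chain that is time-reversible with respect to the correct stationary distribution; (ii) that stationary distribution is $\law(X_j \mid \Xno{j}, \tilde X_{1:(j-1)})$, even though the acceptance ratio in the algorithm is written in terms of the augmented conditional law $\law(X_j \mid \Xno{j}, \tilde X_{1:(j-1)}, X^*_{1:(j-1)})$; and (iii) the whole transition mechanism depends on the previously processed coordinates only through the unordered pairs $\{X_i, \tilde X_i\}$ (and on $X_{(j+1):p}$), i.e., the proposals are \emph{faithful} in the sense defined before the algorithm.

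For (i), I would recall that the standard Metropolis--Hastings kernel with proposal density $q_j(\cdot \mid \cdot)$ and acceptance probability $\alpha = \min\bigl(1, \frac{\pi(x^*) q_j(x \mid x^*)}{\pi(x) q_j(x^* \mid x)}\bigr)$ is reversible with respect to $\pi$; this is a textbook computation (detailed balance for the accepted part, trivial balance for the rejected part) and I would simply invoke it. The key observation for (ii) is an identity about the augmented target: the density appearing in the numerator and denominator of the acceptance ratio in Algorithm~\ref{alg:finalscep} is $\p(\Xno{j}=\xno{j}, X_j = \cdot, \tilde X_{1:(j-1)}, X^*_{1:(j-1)})$, and as a function of the free argument (the value of $X_j$) this is proportional to the augmented conditional density $\law(X_j \mid \Xno{j}, \tilde X_{1:(j-1)}, X^*_{1:(j-1)})$. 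So the MH kernel at step $j$ is reversible with respect to $\pi_j := \law(X_j \mid \Xno{j}, \tilde X_{1:(j-1)}, X^*_{1:(j-1)})$. It then remains to argue that, after marginalizing out $X^*_{1:(j-1)}$, this still yields a valid knockoff construction. The cleanest way is: run the algorithm while \emph{recording} the auxiliary proposals $X^*_{1:j}$ and regard SCEP as being applied to the enlarged state; then apply Theorem~\ref{theorem:seq} to the sequence of pairs $(X_j, \tilde X_j)$ conditioning additionally on the $X^*$'s, and observe that conditional exchangeability and knockoff symmetry conditional on the $X^*$'s imply the unconditional versions, hence imply \eqref{eq:knockoff-swap} for $(X, \tilde X)$ after discarding the $X^*$'s. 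Alternatively, and perhaps more simply, note that Theorem~\ref{theorem:seq} is symmetric in exactly the right way: conditional exchangeability \eqref{eq:cond-exch} at step $j$ follows from reversibility of the MH kernel with respect to $\pi_j$ combined with Proposition~\ref{prop:trMarkov}, once one checks that $(X_j^*, \tilde X_j)$ enters symmetrically; and knockoff symmetry \eqref{eq:knockoff-symmetry-condition} follows from faithfulness of the proposals.

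For (iii), I would check faithfulness directly from the acceptance ratio: the joint density $\p(\Xno{j}, X_j, \tilde X_{1:(j-1)}, X^*_{1:(j-1)})$ is, by construction of the algorithm up to step $j-1$, a function of $X_{(j+1):p}$ and of the \emph{unordered} information from the first $j-1$ steps, because each earlier step set $\tilde X_i \in \{x_i, x^*_i\}$ with $x^*_i = \tilde x_i$ or $x^*_i$ the discarded value --- so $\{X_i, \tilde X_i\}$ together with $X^*_i$ determines (and is determined by) the ordered data the algorithm actually used. Provided the proposal distributions $q_j$ are themselves chosen to depend only on $X_{(j+1):p}$ and the unordered pairs (this is the definition of ``faithful'' and is an assumption we are entitled to impose/verify for the concrete proposals used later), the conditional distribution \eqref{eq:knockoff-symmetry-condition} is invariant to swapping earlier originals with their knockoffs. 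Hence Algorithm~\ref{alg:finalscep} is exactly an implementation of SCEP, and Corollary~\ref{coro:alg1} follows from the already-established correctness of SCEP (Theorem~\ref{theorem:seq} plus Proposition~\ref{prop:trMarkov}).

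The main obstacle I anticipate is bookkeeping around the auxiliary proposals $X^*_j$: one must be careful that conditioning the target on $X^*_{1:(j-1)}$ (which is what makes the rejection term computable) does not compromise the reversibility/exchangeability argument, and that the extra randomness can be marginalized away at the end without destroying \eqref{eq:knockoff-swap}. The resolution is to treat $(X, \tilde X, X^*)$ jointly, establish the conditions of Theorem~\ref{theorem:seq} for $(X,\tilde X)$ \emph{conditionally on} $X^*$, and then note these pass to the marginal; everything else is a routine verification of detailed balance and of the faithfulness of the proposal mechanism.
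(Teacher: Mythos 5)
Your high-level plan (view Metro as an instance of SCEP and invoke Theorem~\ref{theorem:seq} plus Proposition~\ref{prop:trMarkov}) matches the paper's motivation for the algorithm, and you correctly isolate the crux: the auxiliary proposals $X^*_{1:(j-1)}$ sit inside the target of each MH step, and something must be said about marginalizing them away. But the resolution you offer --- ``establish the conditions of Theorem~\ref{theorem:seq} conditionally on $X^*$ and note these pass to the marginal'' --- has a genuine gap. Conditional exchangeability \eqref{eq:cond-exch} does survive marginalization (a mixture of exchangeable pair laws is exchangeable), but knockoff symmetry \eqref{eq:knockoff-symmetry-condition} does not follow by the same route: writing $\p((X_j,\Xk_j)\in A\mid \Xno{j},\Xk_{1:(j-1)})$ as an average over $\law(X^*_{1:(j-1)}\mid \Xno{j},\Xk_{1:(j-1)})$, you need the mixing measure itself to be a symmetric function of the unordered pairs $\{X_k,\Xk_k\}$, and it is not: upon acceptance at step $k$ one has $X^*_k=\Xk_k$ exactly, so the conditional law of $X^*_k$ given the pair distinguishes $X_k$ from $\Xk_k$. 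For the same reason your other suggested shortcut fails: the joint law of $(X,\Xk,X^*)$ is \emph{not} symmetric in the pairs (swapping $x_j$ and $\xk_j$ while holding $x^*_j$ fixed moves you off the support $\{\xk_j=x^*_j\}\cup\{\xk_j=x_j\}$), so you cannot condition on $X^*$, prove exchangeability there, and integrate. The step you label ``routine bookkeeping'' is precisely where the work is.

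The paper's proof does not invoke Theorem~\ref{theorem:seq} at all. It augments the state with $U_j$ (the acceptance indicator) and $Z_j=(1-U_j)X^*_j$ --- crucially retaining the proposal \emph{only on rejection}, where $x_j=\xk_j$ and swapping is vacuous --- and runs a direct induction showing the explicit joint density of $(X,\Xk_{1:j},U_{1:j},Z_{1:j})$ is symmetric in each pair. On acceptance the $j$th factor collapses, via $a\min(1,b/a)=\min(a,b)$, to
\begin{equation*}
\delta(z_j-0)\,\min\bigl(\density_j(x_j,\dots)q_j(\xk_j\mid x_j),\ \density_j(\xk_j,\dots)q_j(x_j\mid \xk_j)\bigr),
\end{equation*}
which is manifestly symmetric in $(x_j,\xk_j)$; on rejection the factor $\delta(\xk_j-x_j)$ makes symmetry trivial. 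Pairwise exchangeability of $(X,\Xk)$ then follows by marginalizing the $U$'s and $Z$'s from the already-symmetric joint. If you want to salvage your SCEP-based framing, you would need to prove an augmented version of Theorem~\ref{theorem:seq} in which each step also emits an auxiliary variable whose conditional law given the current pair is a symmetric function of that unordered pair --- a property that holds for $(U_j,Z_j)$ but fails for $X^*_j$. As written, your argument does not establish \eqref{eq:knockoff-symmetry-condition} for Metro.
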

\begin{proof}
For the sake of the
proof, let $U_j$ be the indicator of acceptance at step $j$, and $Z_j=(1-U_j)X^*_j$. We will prove pairwise exchangeability jointly with the $U_j$'s and $Z_j$'s; marginalizing out these variables will establish the claim. For $1\le j\le p$, let $\density_j(x_j,\xno{j},\tilde{ x}_{1:(j-1)}, u_{1:(j-1)},z_{1:(j-1)})$ be the joint density function of
$(X_j, \Xno{j},\tilde{ X}_{1:(j-1)}, U_{1:(j-1)},Z_{1:(j-1)})$, in this 
order. We will use induction to show that the density of $(X,\Xk_{1:j},U_{1:j},Z_{1:j})$ is symmetric in $X_k$ and $\Xk_k$ for $1\le k\le j$. For $1 \le j\le p$, the inductive hypothesis is that $\density_j$ is symmetric in $x_k$ and $\xk_k$ for $1\le k\le j-1$ (since $\density_j$ is just the density of $(X,\Xk_{1:(j-1)},U_{1:(j-1)},Z_{1:(j-1)})$ after reordering the variables). For $1\le j\le p$,
\begin{equation*}
\begin{aligned}
\text{the } &\text{densi}\text{ty} \text{ of }(X,\Xk_{1:j},U_{1:j},Z_{1:j})\text{ at }(x,\xk_{1:j},u_{1:j},z_{1:j})\\
=\density_j(&x_j,\xno{j},\tilde{ x}_{1:(j-1)}, u_{1:(j-1)},z_{1:(j-1)})\times\\
\Bigg[&\ind_{u_j=1}\delta(z_j-0)q_j(\tilde x_j \mid  x_j)\min\left(1,\frac{\density_j(\tilde x_j,\xno{j},\tilde{ x}_{1:(j-1)},  u_{1:(j-1)},z_{1:(j-1)})q_j(x_j \mid  \tilde x_j)}{\density_j(x_j,\xno{j},\tilde{ x}_{1:(j-1)}, u_{1:(j-1)},z_{1:(j-1)})q_j(\tilde x_j \mid  x_j)}\right)\\
&+\ind_{u_j=0}\delta(\xk_j - x_j)q_j(z_j \mid  x_j)\Bigg(1-\min\left(1,\frac{\density_j(z_j,\xno{j},\tilde{ x}_{1:(j-1)},  u_{1:(j-1)},z_{1:(j-1)})q_j(x_j \mid  z_j)}{\density_j(x_j,\xno{j},\tilde{ x}_{1:(j-1)}, u_{1:(j-1)},z_{1:(j-1)})q_j(z_j \mid  x_j)}\right)\Bigg)\Bigg],
\end{aligned}
\end{equation*}
which is symmetric in the first $j-1$ pairs by the inductive hypothesis. For the symmetry in the $j$th pair, when $u_j=1$, the density simplifies to
\begin{equation*}
\begin{aligned}
\delta(z_j-0)\times\min\big(&\density_j(x_j,\xno{j},\tilde{ x}_{1:(j-1)}, u_{1:(j-1)},z_{1:(j-1)})q_j(\tilde x_j \mid  x_j),\\
&\density_j(\tilde x_j,\xno{j},\tilde{ x}_{1:(j-1)},  u_{1:(j-1)},z_{1:(j-1)})q_j(x_j \mid  \tilde x_j)\big),
\end{aligned}
\end{equation*}
which is invariant to swapping $x_j$ and $\xk_j$; when $u_j=0$, the delta function $\delta(\xk_j-x_j)$ ensures $x_j=\xk_j$, and thus swapping them has no effect.
Hence, when the algorithm terminates, all pairs are exchangeable and therefore remain exchangeable after marginalizing out the $U_j$'s and $Z_j$'s.
\end{proof}

Anticipating possible future applications, we wish to remark that
Metro can be easily adapted to sampling
\emph{group} knockoffs \citep{pmlr-v48-daia16}; see
Appendix~\ref{app:grpko}.





\subsection{Covariance-guided proposals}
\label{subsesc:cov-guided-prop}
Now that we have available a broad class of knockoff samplers, we turn to the question of finding faithful proposal distributions that will generate statistically powerful knockoffs. The overall challenge is to propose samples that are far away from $X$ to make good knockoffs, but not as far that they are systematically rejected. A rejection at the $j$th step gives $\Xk_j = X_j$, leading to a knockoff with poor contrast. Below, we shall borrow ideas from existing knockoff samplers for Gaussian models to make sensible proposals. 

Suppose that $ X$ has mean $\mu$ and covariance $\bm\Sigma$, and consider $s \in \mathbb{R}^p$ with non-negative entries such that $\bm\Gamma(s)$ from \eqref{eq:knockoff-cov} is positive semidefinite. Such a vector $ s$ can be found with techniques from \citet{barber2015controlling} and from \citet{candes2018panning}. We have seen that if $X$ were Gaussian, this covariance matrix would induce a multivariate Gaussian joint distribution over $X$ and $\Xk$ with the correct symmetry. In non-Gaussian settings, our observation is that we can still make proposals as if the variables were Gaussian, but use the MH correction to guarantee exact conditional exchangeability. This can be viewed as a Metropolis-adjustment to the second-order knockoff construction of \citet{candes2018panning}. Concretely, the distribution $q_j$ for a covariance-guided proposal---used to generate a proposal $X_j^*$---is normal with mean 
\begin{equation*}
\mu_{j} + \left(\Gamma_{12}^{(j)}\right)^\top \left(\bm\Gamma_{11}^{(j)}\right)^\dagger
\left(
 X-\mu, X^*_{1:(j-1)}-\mu_{1:(j-1)}\right)^\top
 \end{equation*}
 and variance
 \begin{equation*}
 \Gamma_{22}^{(j)} - \left(\Gamma_{12}^{(j)}\right)^\top\left(\bm\Gamma_{11}^{(j)}\right)^\dagger  \Gamma_{12}^{(j)};
\end{equation*}
here, $X^*_{1:(j-1)}$ is the sequence of already generated proposals, $\bm\Gamma_{11}^{(j)}=\bm\Gamma_{1:(p+j-1), 1:(p+j-1)}$, $\Gamma^{(j)}_{22}=\Gamma_{p+j,p+j}$, $\Gamma^{(j)}_{12}=\Gamma_{1:(p+j-1),p+j}$, $\mu$ is the mean of $ X$, and $\dagger$ stands for  the pseudoinverse. The parameters of $q_j$ can be efficiently computed using the special structure of $\bm\Gamma$; see Appendix \ref{app:mat-inv}. The faithfulness of the proposal is shown in Appendix \ref{app:cov-guide}.

The covariance-guided proposals are valid even when $\bm\Sigma$ is replaced by any other positive semidefinite matrix---\emph{any faithful proposal distribution will give valid knockoffs}. This allows us to use an empirical estimate of $\Cov(X)$ based on simulated samples from $\law(X)$, or even to apply the covariance-guided proposals to discrete distributions by rounding each proposal $X^*_j$ to the nearest point in the support of $X_j$.  These proposals will be most successful when $ X$ is well-approximated by a Gaussian density, indeed when $ X$ is exactly Gaussian and the true covariance is used, the covariance-guided proposals will never be rejected.  Numerical simulations in a variety of settings can be found in Section \ref{sec:simulations}.

\subsection{Multiple-try Metropolis}
\label{subsec:MTM}
A possibility for sampling $\Xk_j$ ``far away'' from $X_j$ is to run multiple MH steps instead of a single one. The issue with this is that this would make the conditional distributions from Metro prohibitively complex at later steps. Longer chains also require conditioning later proposals on a longer sequence of proposals and acceptances or rejections, which will constrain those proposals to be closer to their corresponding true variables and thus reduce power. Instead, we use the multiple-try Metropolis (MTM) technique introduced in \citet{liu2000multiple}. 

The key idea of MTM is to propose a set of several candidate moves in order to increase the probability of acceptance. As in \citet{qin2001multipoint}, we take the candidate set to be $C^{m,t}_x=\{x\pm kt:1\le k\le m\}$, where $m$ is a positive integer and $t$ is a positive number; see Figure \ref{figure:MTM} for an illustration. MTM proceeds by choosing one element $x^*$ from the set $C^{m,t}_x$, with probability proportional to the target density, i.e.,
\begin{equation}
\p(\text{select }x^*\text{ from }C^{m,t}_x)=\frac{\pi(x^*)}{\sum_{u\in C^{m,t}_x}\pi(u)}.
\label{eq:selectionprob}
\end{equation}
This proposal is then accepted with probability
\begin{equation}
\gamma\min\left(1,\frac{\sum_{u\in C^{m,t}_x}\pi(u)}{\sum_{v\in C^{m,t}_{x^*}}\pi(v)}\right),\quad\gamma\in(0,1),
\label{eq:acceptanceprob}
\end{equation}
where $\gamma$ is an additional tuning parameter explained in Appendix \ref{subapp:effect-gamma}. This parameter should be taken to be near $1$ in most settings. If no element of $C^{m,t}_x$ has positive probability, then one automatically rejects. MTM is a special case of MH with the proposal $q(x^* \mid x)$ distribution defined implicitly by the above rules, and furthermore, the proposals are faithful.
Thus, MTM can be used in Metro. 


\begin{figure}[h]
\centering
\includegraphics[width=0.6\linewidth]{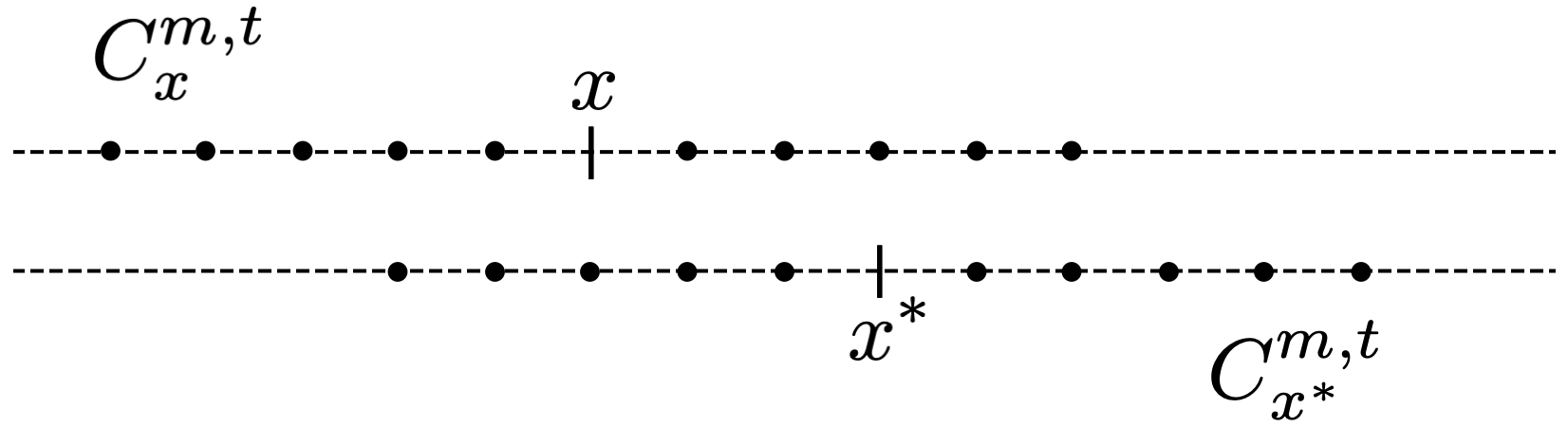}
\caption{Multiple-try Metropolis (adapted from Figure 2 in \citet{qin2001multipoint}).}
\label{figure:MTM} 
\end{figure}

While there is no universally optimal combination of $m$ and $t$, we provide guidance about default values based on our experimental results from Section \ref{sec:simulations}. To understand the choice of parameters, first observe that with a fixed $t$, the knockoff distribution produced by the algorithm should eventually stabilize as $m$ grows to infinity, since (non-pathological) $\pi$ will vanish at positive and negative infinities and equations \eqref{eq:selectionprob} and \eqref{eq:acceptanceprob} will converge. Large values of $m$ require more density evaluations, so we would like to choose the smallest value of $m$ such that the distribution defined by equations \eqref{eq:selectionprob} and \eqref{eq:acceptanceprob} is nearly converged to its limit as $m\rightarrow\infty$. Turning our attention to $t$, smaller values cause higher acceptance rates, and at the same time, encourage $\tilde X_j$ to be close to $X_j$. Clearly, there is a trade-off. Based on our experiments, a sensible default setting is $m=4$ and $t_j = 1.5\sqrt{1/(\bm\Sigma^{-1})_{jj}}$ where $\bm\Sigma = \Cov( X)$. In the Gaussian case, $\Var(X_j |  \Xno{j}) = 1/(\bm\Sigma^{-1})_{jj}$ for any observed value of $\Xno{j}$ \citep{anderson2009introduction}, hence this   choice of scaling is intuitive. In the non-Gaussian case $1 / (\bm\Sigma^{-1})_{jj}$ should be viewed as an approximation to the conditional variance. We have found that this parameter setting achieves nearly the best performance in most of our experiments.

\section{Graphical models and conditional independence}
\label{sec:graph-structure}

One outstanding issue is whether the Metropolized knockoff sampler can be run in reasonable time for cases of interest. We begin by showing why sequential knockoff sampling is prohibitively expensive without additional structure, and then turn our attention to a common type of structure that enables efficient sampling: graphical models. The central contribution of this section will be a complexity bound on Metro showing how the graphical structure affects the difficulty of sampling. To complete this line of investigation, we give a complexity lower bound for all knockoff samplers which shows that Metro is optimal in some cases.

\subsection{Why do we need structure?}
\label{subsec:why-structure}
Consider running Metro for some input distribution $\p$ and sample $X = x$. In view of
\eqref{eq:MH-SCEP-rejection-prob}, at step $j$ we need to evaluate $\p(\Xno{j}, X_j=z_j, \Xk_{1:(j-1)}, X^*_{1:(j-1)})$ for $z_j \in \{x_j, x_j^*\}$ up to a constant.\footnote{In this section, when not explicitly specified, a variable is set to its observed value, e.g., $\p(X_1 \mid X_2 = z_2, X_3, \Xk_1, X^*_1)$ is shorthand for $\p(X_1 = x_1  \mid X_2 = z_2, X_3 = x_3, \Xk_1 = \xk_1, X^*_1 = x^*_1)$.}  Metro defines a joint distribution on $(X, \Xk_{1:(j-1)}, X^*_{1:(j-1)})$ implicitly, so the only way to evaluate this density is to compute it step by step, from $1$ to $j-1$, i.e., through the sequential decomposition
\begin{multline}
\p(\Xno{j}, X_j=z_j, \tilde{X}_{1:(j-1)}, X^*_{1:(j-1)}) = 
\p(\Xno{j}, X_j = z_j) \times \\
\prod_{k=1}^{j-1}\left[\p(\Xk_k \mid \Xno{j}, X_j = z_j, \Xk_{1:(k-1)}, X^*_{1:k}) \p(X^*_k \mid \Xno{j}, X_j = z_j, \Xk_{1:(k-1)}, X^*_{1:(k-1)}) \right]. \label{eq:seq-decomp}
\end{multline}
Consider the term $\p(\Xk_k \mid \Xno{j}, X_j = z_j, \Xk_{1:(k-1)}, X^*_{1:k})$.
By the definition of Metro, computing this term will require evaluating an acceptance probability of the form
\begin{align}
\label{eq:accept-prob-graphical}
    \min\left(1, \frac{q_k(x_k \mid x_k^*) \p(\Xno{(j,k)}, X_k = x^*_k, X_j = z_j,  \Xk_{1:(k-1)}, X^*_{1:(k-1)})}
    {q_k(x_k^* \mid x_k)\p(\Xno{(j,k)}, X_k = x_k, X_j = z_j, \Xk_{1:(k-1)}, X^*_{1:(k-1)})}\right).
\end{align}
Now, to compute the terms in the acceptance probability, we must use the same sequential decomposition \eqref{eq:seq-decomp} for the terms $\p(\Xno{(j,k)}, X_k = z_k, X_j = z_j,  \Xk_{1:k-1}, X^*_{1:k-1})$ for $z_k \in \{x_k, x_k^*\}$. Considering $k = j-1$, we see that step $j$ is making two calls to the probability at step $j-1$, each of which is in turn making two calls to the probability function at step $j-2$ and so on. Thus, each evaluation of \eqref{eq:seq-decomp} will require $\Omega(2^j)$ function calls. This behavior is not due to a shortcoming of Metro; any genuine knockoff sampler with access only to an unnormalized density will require time exponential in $p$. We will present the formal statement of this lower bound later in Theorem \ref{theorem:timecomp}.

Although knockoff sampling with no restriction on the distribution is prohibitively slow, we will show how to avoid the exponential complexity when there is additional known structure. Consider a Markov chain, i.e., a density that factors as $\p(x) = \prod_{j=1}^{p-1} \phi_j(x_j, x_{j+1})$. In this case, the joint density \eqref{eq:seq-decomp} can be evaluated efficiently provided we proceed along the chain in the natural order. Assume for simplicity that the proposal distribution is fixed in advance so that the second term within the  square brackets in \eqref{eq:seq-decomp} does not depend on any variables and can be ignored. Due to the Markovian structure, only the $k=j-1$ term in the product depends on $z_j$, so it suffices to compute the acceptance probability \eqref{eq:accept-prob-graphical} for $k=j-1$. 
Again using the Markovian structure, this simplifies to
\begin{multline*}
\label{eq:MC-simple-acceptance}
   \min\left(1, {a_{j-1}} \frac{q_{j-1}(x_{j-1}\mid x^*_{j-1})\p(\Xno{(j,j-1)}, X_{j-1} = x_{j-1}^*, X_j = z_j)}
   {q_{j-1}(x_{j-1}^*\mid x_{j-1})\p(\Xno{(j,j-1)}, X_{j-1} = x_{j-1}, X_j = z_j)}\right) \\ = 
    \min\left(1, {a_{j-1}} \frac{q_{j-1}(x_{j-1}\mid x^*_{j-1})\phi_{j-2}(x_{j-2}, x_{j-1}^*) \phi_{j-1}(x_{j-1}^*,z_j)}{q_{j-1}(x_{j-1}^*\mid x_{j-1}) \phi_{j-2}(x_{j-2}, x_{j-1}) \phi_{j-1}(x_{j-1},z_j)}\right)
\end{multline*} 
where $a_{j-1}$ is the ratio 
\begin{align*}
a_{j-1} = \frac{\p(X^*_{1:(j-2)}, \Xk_{1:(j-2)} \mid \Xno{(j,j-1)}, X_{j-1} = x_{j-1}^*, X_j = z_j)}
{\p(X^*_{1:(j-2)}, \Xk_{1:(j-2)} \mid \Xno{(j,j-1)}, X_{j-1} = x_{j-1}, X_j = z_j)}
\end{align*}
which does not depend on $z_j$ by the Markov structure. The key here is that  $a_{j-1}$ was previously computed with $z_j = x_j$ when sampling $\Xk_{j-1}$. Thus, the acceptance probability can be computed in constant time. Putting this all together, for a Markov chain, each of the necessary joint probabilities \eqref{eq:seq-decomp} can be computed in constant time, and the time to sample the entire vector $\Xk$ is linear in the dimension $p$. Markov chains are not the only case where computing the acceptance probability can be done quickly; for other distributions with conditional independence structure, we next develop a systematic way of computing \eqref{eq:seq-decomp}, using the graphical structure to control the depth of the recursion and hence control the running time.

\subsection{Time complexity of Metro for graphical models}
\label{subsec:junction-tree}
We have seen that we must restrict our attention to a subset of distributions in order to efficiently sample knockoffs, so in this section we show how to implement Metropolized knockoff sampling for a very broad class of distributions: graphical models. Let $ X \in \rr^p$ be a random vector whose density factors over a
graph $G$: 
\begin{equation} \label{eq:density-factorization}
\p(x) \propto \Phi( x) = \prod_{c \in C} \phi_{c}( x_c);
\end{equation}
here, $C$ is the set of maximal cliques of the graph $G$ and $\Phi$ is
an unnormalized version of $\p$.  The variables in $X$ can be either
discrete or continuous. All graphical models with positive density or
mass take this form \citep{HammersleyClifford:1971}, and such
distributions are known to have particular conditional independence
properties. We refer the reader to \citet{koller2009probabilistic} for
a general treatment.

In order to take advantage of the conditional independence structure of $X$, we use a graph-theoretic object known as a {\em junction tree} \citep{Bertele:1972:NDP:578817} which encodes properties of the graph $G$.

\begin{defi}[Junction tree]
\label{def:junction-tree}
Let $T$ be a tree with vertices that
are subsets of the nodes $\{1,\dots,p\}$ of a graph $G$. T is a {\em
  junction tree} for $G$ if the following hold:
\begin{enumerate}
\item Each $j \in \{1,\dots,p\}$ appears in some vertex $V$ of $T$.
\item For every edge $(j,k)$ in $G$, $j \in V$ and $k \in V$ for some vertex $V$.
\item {\em (Running intersection property)} If the vertices $V$ and $V^\prime$ both contain a node of $G$, then every vertex in the unique path from $V$ to $V^\prime$ also contains this node.
\end{enumerate}
\end{defi}
Figure \ref{fig:grid-junction-tree} gives an example of a junction
tree over a $2\times3$ grid. The size of the largest vertex of $T$
minus one is known as the {\em width} of the junction tree $T$, and
the smallest width of a junction tree over $G$ is called the {\em
  treewidth} of $G$, a measure of graph complexity.  Finding the
junction tree of lowest width for a graph $G$ is known to be NP-hard
\citep{arnborg1987complexity}, but there exist efficient heuristic
algorithms for finding a junction tree with small width
\citep{Kjaerulff90triangulationof, koller2009probabilistic}.

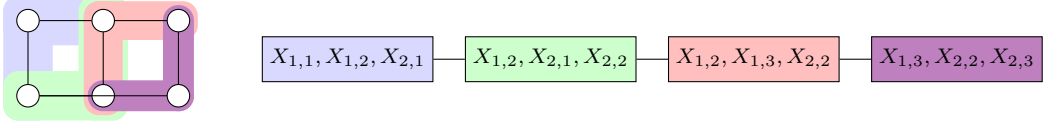
\begin{figure}
\centering
\tikzstyle{int}=[circle, draw, fill=white!20, minimum size=2em, scale = .5]
\def\gwidth{2cm} 
\def\opacity{1}

\begin{tikzpicture}[node distance=2cm,auto]
	\scriptsize
    \node [int] (11) {};
    \node [int] (12) [right of=11, node distance=\gwidth] {};
    \node [int] (13) [right of=12, node distance=\gwidth] {};
    
    \node [int] (21) [below of=11, node distance=\gwidth] {};
    \node [int] (22) [right of=21, node distance=\gwidth] {};
    \node [int] (23) [right of=22, node distance=\gwidth] {};
    
    \path[-] (11) edge node {} (12);
    \path[-] (11) edge node {} (21);
    \path[-] (12) edge node {} (13);
    \path[-] (12) edge node {} (22);
    \path[-] (21) edge node {} (22);
    \path[-] (21) edge node {} (23);
    \path[-] (13) edge node {} (23);
    
\begin{scope}[on background layer]
    \node[fill=blue!15, inner sep=5pt, rounded corners = .2cm, fit=(11) (12), opacity=\opacity](FIt1) {};
    \node[fill=blue!15, inner sep=5pt, rounded corners = .2cm, fit=(11) (21), opacity=\opacity](FIt1) {};
    
    \node[fill=green!20, inner sep=5pt, rounded corners = .2cm, fit=(21) (22), opacity=\opacity](FIt1) {};
    \node[fill=green!20, inner sep=5pt, rounded corners = .2cm, fit=(12) (22), opacity=\opacity](FIt1) {};
    
    \node[fill=red!25, inner sep=3pt, rounded corners = .2cm, fit=(12) (22), opacity=\opacity](FIt1) {};
    \node[fill=red!25, inner sep=3pt, rounded corners = .2cm, fit=(12) (13), opacity=\opacity](FIt1) {};
    
    \node[fill=violet!50, inner sep=1.5pt, rounded corners = .2cm, fit= (22)(23), opacity=\opacity](FIt1) {};
    \node[fill=violet!50, inner sep=1.5pt, rounded corners = .2cm, fit= (13)(23), opacity=\opacity](FIt1) {};
\end{scope}

\tikzstyle{int}=[draw, fill=blue!20, minimum size=2em]
\node [int, fill=blue!15] (1) [below right = .1cm and 3cm of 11] {$X_{1,1}, X_{1,2}, X_{2,1}$};
\node [int, fill=green!20] (2) [right of = 1, node distance=2.7cm] {$X_{1,2}, X_{2,1}, X_{2,2}$};
\node [int, fill=red!25] (3) [right of = 2, node distance=2.7cm] {$X_{1,2}, X_{1,3}, X_{2,2}$};
\node [int, fill=violet!50] (4) [right of = 3, node distance=2.7cm] {$X_{1,3}, X_{2,2}, X_{2,3}$};

\path[-] (1) edge node {} (2);
\path[-] (2) edge node {} (3);
\path[-] (3) edge node {} (4);

\end{tikzpicture}
\caption{A junction tree of treewidth $2$ for the $2 \times 3$ grid, which happens to be a chain.}
\label{fig:grid-junction-tree}
\end{figure}

Given a junction tree $T$ for the graph $G$, we will soon prove that
Metro can be run with $O(p 2^w)$ queries of the unnormalized density
$\Phi$, where $w$ is the width of $T$. In view of
\eqref{eq:MH-SCEP-rejection-prob}, at step $j$ of Metro we need to evaluate
$\p(\Xno{j}, X_j=z_j,\Xk_{1:(j-1)}, X^*_{1:(j-1)})$ for $z_j \in \{x_j, x^*_j\}$ up to a constant
as well as sample from and evaluate the proposal distribution
$q_j( \cdot \mid x_j)$. We can use the graphical model structure to
make these operations tractable by both (1) sampling the variables in
a specific order, and (2) choosing proposal distributions that are not
unnecessarily complex. We formalize these two requirements below.
 
We first consider the order in which we sample the variables. Recalling \eqref{eq:seq-decomp}, the complexity of the computations of $\p(\Xno{j}, X_j = z_j, \Xk_{1:(j-1)}, X^*_{1:(j-1)})$ depends on the number of function calls implied by the recursion \eqref{eq:seq-decomp}. For simplicity, assume that the proposal terms in the product, $\p(X^*_k \mid \Xno{j}, X_j = z_j, \Xk_{1:(k-1)}, X^*_{1:(k-1)})$, never depends on $z_j$; this will be relaxed soon. In that case, we need only consider the terms in \eqref{eq:seq-decomp} of the form $\p(\Xk_k \mid \Xno{j}, X_j = z_j, \Xk_{1:(k-1)}, X^*_{1:k})$ for $k < j$. When there is graphical structure, not all such terms will depend on $z_j$, and the number of terms that do depend on $z_j$ determines the recursion depth. In particular, if at step $j$ only $r_j$ terms depend on $z_j$, then there will be $O(2^{r_j})$ function calls in the recursion. A desirable ordering of the variables is then one that minimizes the largest $r_j$, and such an ordering can be extracted from a junction tree $T$ using Algorithm~\ref{alg:jt-order}.
 
\begin{algorithm}[H]
\caption{Junction tree variable ordering for Metro\label{alg:jt-order}}
	\SetAlgoLined\DontPrintSemicolon
	{Initialize tree $T_\text{active} = T$ and list $J= \{\}$. \;
		\While{$T_\textnormal{active} \ne \emptyset$} {
			Select a leaf node $V$ of $T_\text{active}$.  $V$ is connected to at most one other node $V^\prime$ of $T_\text{active}$ because it is a tree.\;
			In any order, append each $j \in V \setminus V^\prime$ to the end of the list $J$. If no $V^\prime$ exists, append all $j \in V$ to $J$ in any order. \;
			Remove $V$ from the active tree $T_\text{active}$. \;
		}
	}
	Return $J$
\end{algorithm}

Algorithm~\ref{alg:jt-order} is valid in that when a node is removed, no $j\in J$ remains in any node in $T_\text{active}$.\footnote{This simple fact follows from the running intersection property; we refer the reader to Lemma~\ref{lemma:var-ordering} in Appendix~\ref{app:lemmas}.}
From now on we assume that the variables are
numbered according to this ordering. Our second consideration is to create proposals that do not add
unnecessary complexity. No matter which proposal
distribution we choose, $\p( X_j = z_j \mid \Xno{j}, \Xk_{1:(j-1)}, X^*_{1:(j-1)})$ will still depend on some $X_\ell$ for $\ell > j$ due to dependencies among coordinates of $X$; we however restrict ourselves to proposal distributions that do not add any additional dependencies.
\begin{defi}[Compatible proposal distributions]
Let $V_j$ be the node of the junction tree when $j$ is appended to $J$ from Algorithm \ref{alg:jt-order}. Set $\bar V_{j} = \{1, \dots, j-1\} \cup V_j$. We say that proposal distributions $q_j$  are {\em compatible} with a junction tree $T$ if they depend only on $X_{\bar V_{j}}, \bXk_{1:(j-1)},$ and  $ X^*_{1:(j-1)}$.
\label{def:consistent}
\end{defi}
This definition is motivated by the property
\begin{equation*}
 X_{1:j} \ci X_{\bar V_{j}^c}  \mid   X_{\bar V_{j} \setminus \{1,\dots,j\}} ,
\end{equation*}
since $\bar V_{j} \setminus \{1,\dots,j\}$ separates $\{1,\dots,j\}$ from $\bar V_{j}^c$ in the graph $G$. Thus, a proposal distribution at step $j$ that violates the compatibility property and relies on $X_\ell$ for some $\ell \notin \bar{V}_j$ will result in additional non-one terms in the product in $\eqref{eq:seq-decomp}$ at step $\ell$, so $\bar{V}_j$ is the largest set that the proposal can be allowed to depend on without  increasing the number of function calls/runtime. Although not all proposals are compatible, it is a rich enough class to handle a broad range of knockoff distributions, including the distribution induced by SCIP.

With these two conditions in place, we now state our main result about the efficiency of knockoff sampling, giving an upper bound on the number of evaluations of the unnormalized density function $\Phi$ that is required by Metro when the graphical structure is known. Assuming the variable ordering from Algorithm \ref{alg:jt-order} and faithful proposal distributions compatible for $T$ such that sampling from and evaluating the proposal distributions does not require evaluating $\Phi$, we reach the following result:
\begin{thm}[Computational efficiency of Metro]\label{thm:scep-runtime}
  Let $ X$ be a random vector with a density which factors over a
  graph $G$ as in \eqref{eq:density-factorization}. Let $T$ be a
  junction tree of width $w$ for the graph $G$.  Under the conditions
  above, Metro uses $O(p 2^w)$ queries of 
  $\Phi$.
\end{thm}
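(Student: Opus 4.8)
The plan is to organize the whole run of Metro (all $p$ steps) as one memoized recursion, bound the number of distinct memoized sub-computations by $O(p2^w)$, and show each of them costs $O(1)$ queries of $\Phi$; multiplying gives the claim. Reading Algorithm~\ref{alg:finalscep}, the only place $\Phi$ enters at step $j$ is the acceptance ratio, i.e.\ the ratio of $\p(\Xno{j},X_j=z_j,\Xk_{1:(j-1)},X^*_{1:(j-1)})$ at $z_j=x_j^*$ and at $z_j=x_j$; by the sequential decomposition \eqref{eq:seq-decomp} this unfolds into level-$k$ sub-computations for $k<j$, each of which (via \eqref{eq:accept-prob-graphical}) again invokes the level-$k$ density with one more coordinate pinned away from its observed value, and so on down. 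Because the proposals are assumed $\Phi$-free, evaluating or sampling from them never contributes a query.

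I would first install the graph-theoretic bookkeeping. Write $V_j$ for the junction-tree node from which $j$ is appended by Algorithm~\ref{alg:jt-order}, and set $S_j:=V_j\setminus\{1,\dots,j\}$. The validity of that algorithm (Lemma~\ref{lemma:var-ordering}) and the running intersection property give that $S_j$ separates $\{1,\dots,j\}$ from $\bar V_j^c$ in $G$ and that $|V_j|\le w+1$, hence $|S_j|\le w$. Combining the ensuing conditional independence $X_{1:j}\ci X_{\bar V_j^c}\mid X_{S_j}$ with the fact that a \emph{compatible} proposal $q_k$ depends only on $X_{\bar V_k},\Xk_{1:(k-1)},X^*_{1:(k-1)}$ yields the key locality statement: the level-$k$ density $\p(\dots,X_k=z_k,\Xk_{1:(k-1)},X^*_{1:(k-1)})$ depends on the future variables $X_{k+1},\dots,X_p$ only through $X_{S_k}$, i.e.\ only through coordinates lying in $V_k$. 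This is what keeps the recursion from reaching across the whole vector.

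With this in hand I would parameterize each memoized call by its level $k$ together with the choice, for each coordinate in $V_k$, of whether that coordinate is pinned to its observed value $x_\ell$ or to its proposed value $x_\ell^*$ (coordinates outside $V_k$ are always at observed values, which is consistent since the locality statement says the level-$k$ density does not see them). By $|V_k|\le w+1$ there are at most $2^{w+1}$ parameterizations per level, hence at most $p\cdot 2^{w+1}=O(p2^w)$ distinct memoized calls overall. Next one checks that every such call is evaluable with $O(1)$ queries of $\Phi$ given the calls it invokes: expanding by \eqref{eq:seq-decomp}, all factors that see no pinned coordinate cancel in the relevant ratio, the leading factor is a single $\Phi$ ratio, and each surviving factor's acceptance probability \eqref{eq:accept-prob-graphical} reduces to $O(1)$ further $\Phi$ ratios plus strictly-earlier-level memoized calls---this is precisely the high-dimensional version of the ``$a_{j-1}$ is cached'' observation from the Markov-chain discussion, the variable ordering guaranteeing the recursion is acyclic. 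Multiplying the two bounds gives $O(p2^w)$ queries.

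The step I expect to be the real obstacle is pinning down this parameterization rigorously: proving that at level $k$ the coordinates on which a memoized call can depend nontrivially stay confined to $V_k$ (so the table has at most $2^{w+1}$ rows), and that each row is computable from $O(1)$ $\Phi$ ratios and lower rows. This calls for a careful double induction---one over the step index, fixing exactly what the cache holds, and one threading through the unfolding \eqref{eq:seq-decomp}/\eqref{eq:accept-prob-graphical} to show the ``pinned set'' never escapes $V_k$---and it is here that the separator bound $|S_k|\le w$ and the compatibility of the proposals are used in an essential way; the bookkeeping must also absorb the relaxation in which the proposal factors $\p(X^*_k\mid\cdots)$ themselves depend on a pinned coordinate, which compatibility keeps inside the same $V_k$ budget. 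The remaining pieces---the $O(1)$ clique algebra inside each memoized call, the acyclicity of the recursion, and $|V_k|\le w+1$---are routine and already illustrated by the Markov-chain special case.
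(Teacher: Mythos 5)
Your proposal is correct and follows essentially the same route as the paper: your memoization table indexed by the level $k$ together with the pinned/observed pattern on $V_k$ is exactly the paper's family $\{F_k(X_{V_k}=z_{V_k}) : z_\ell\in\{x_\ell,x_\ell^*\}\}$, and the ``locality statement'' you flag as the real obstacle is precisely the paper's Lemma~\ref{lemma:jtree}, proved by the double induction you anticipate (using Lemmas~\ref{lemma:var-ordering}--\ref{lemma:G-V-relationship} for the junction-tree bookkeeping, including the consistency check that coordinates in $V_j\setminus V_k$ do not affect the level-$k$ entries). The counting---$2^{|V_j|}\le 2^{w+1}$ entries per level, each costing $O(1)$ queries of $\Phi$ beyond strictly earlier entries---matches the paper's accounting exactly.
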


This result means that we can efficiently implement Metropolized knockoff sampling for many interesting distributions, and it shows precisely how
the complexity of the conditional independence structure of $ X$
affects the complexity of the sampling algorithm. Furthermore, in the next section we will prove that this is the optimal complexity in some cases. 

\subsection{Time complexity of general knockoff sampling}
\label{subsec:time-comp}
In the previous section we analyzed the runtime of Metro and showed that it will be tractable for graphs of sufficiently low treewidth. Now, we investigate the computational complexity of knockoff sampling in general. To formalize our investigation, we discuss a model of computation in which we have no information about the distribution of $X$ beyond its graphical structure and the ability to query its (possibly unnormalized) density at any given point.


\paragraph{Oracle model.} In this model, we are given as inputs (a) a
$p$-dimensional vector $X$ drawn from a density $\lambda \Phi$, where
$\lambda$ is a (possibly unknown) positive scalar so that we can think of $\Phi$ as an
unnormalized density, (b) the support of $\Phi$, and (c) a black box
capable of evaluating $\Phi$ at arbitrary query points, and (d) a graph $G$ for which the density is known to have the form \eqref{eq:density-factorization}. No other
information about $\Phi$ is available.

We show that in the oracle model with the complete graph, i.e., when there is no graphical structure, knockoff sampling requires exponential time in the number of covariates, $p$. Please note that any complexity bound must take into
account the quality of the generated knockoffs since $\bXk = X$ is
a trivial knockoff that can be sampled in no time.

\begin{thm}[Complexity lower bound for knockoff sampling]
  Consider a procedure operating in the oracle model which makes a
  finite number of calls to the black box $\Phi$ and returns $\bXk$, thereby
  inducing a joint distribution $( X,\bXk)$ obeying the pairwise
  exchangeability \eqref{eq:knockoff-swap} for all $\Phi$. If $G$ is the complete graph so that the procedure generates valid knockoffs for any input density, then the
  total number $N$ of queries of $\Phi$ must obey
  $N \ge 2^{\# \{j : X_j \ne \Xk_j\}} - 1$ a.s..
\label{theorem:timecomp}
\end{thm}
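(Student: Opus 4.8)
The plan is to argue by an adversary / indistinguishability argument. Fix the support of $\Phi$ and the graph $G$ to be the complete graph on $p$ vertices. Suppose the procedure makes queries at points $q_1, q_2, \ldots$ (adaptively chosen based on $X$ and previous answers), and let $N$ be the total number of queries. The key observation is that the transcript of the algorithm — the input $X$, the query points, and the answers $\Phi(q_1), \Phi(q_2), \ldots$ — is all the information available when $\tilde X$ is produced; in particular $\tilde X$ is a (possibly randomized) function of this transcript. I would condition on $X = x$ and on the coin flips of the algorithm, so that the query sequence and the output become deterministic functions of the answer sequence. Let $A = \{j : x_j \ne \tilde x_j\}$ be the (realized) set of coordinates that actually get swapped, and let $k = |A|$; the goal is to show $N \ge 2^k - 1$ on this event.

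The heart of the argument is this: consider the $2^k$ points obtained from $x$ by swapping, for each subset $S \subseteq A$, the coordinates in $S$ between $x$ and $\tilde x$ — call these $x^{(S)}$, with $x^{(\emptyset)} = x$ and $x^{(A)}$ the fully swapped point whose $A$-coordinates agree with $\tilde x$. I claim that if $\Phi$ takes the value $1$ at every point queried by the algorithm but is *not* constrained to be $1$ at the points $x^{(S)}$ for $\emptyset \ne S \subseteq A$, then we can construct two densities $\Phi, \Phi'$ that are indistinguishable to the algorithm (same support, same answers to all $N$ queries, hence same transcript, hence the algorithm outputs the same $\tilde X$ when fed $X = x$) but for which the pairwise-exchangeability requirement forces different behavior — a contradiction unless enough of the $x^{(S)}$ were themselves queried. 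The exchangeability constraint \eqref{eq:knockoff-swap}, applied with $A$ itself (using the multi-swap consequence noted after the theorem), says $(X,\tilde X)_{\swap(A)} \eqd (X,\tilde X)$, which on densities relates $\p(X = x, \tilde X = \tilde x)$ to $\p$ evaluated at the swapped configuration; iterating over all subsets $S$ forces a whole system of equalities tying the value of the joint density — and hence, after integrating out, the value of $\Phi$ — at all $2^k$ points $x^{(S)}$. So if the algorithm has not queried $\Phi$ at (essentially) all of these points, an adversary can perturb $\Phi$ at an unqueried $x^{(S)}$ without changing any query answer, breaking the forced equality while leaving the algorithm's output unchanged — contradiction. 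Counting: there are $2^k$ points $x^{(S)}$; the all-swap point $x^{(A)}$ need not be separately queried (its $\Phi$-value is not independently free once the exchangeability relations and the other values are pinned down, or it is itself the "output side" that the constraint is about), which yields the bound $N \ge 2^k - 1$.

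Concretely, I would set it up as follows. First, reduce to the deterministic case by conditioning on $X = x$ and the internal randomness, so "the algorithm" is a decision tree of depth $N$ over query answers. Second, among the $2^k$ configurations $x^{(S)}$, show that any two that are not both queried can be swapped in a family of valid unnormalized densities: take $\Phi$ supported on a finite set containing all $x^{(S)}$ and all query points, with prescribed values; the validity (positive semidefiniteness of the corresponding joint law, or just direct construction of a pairwise-exchangeable $(X,\tilde X)$) is arranged by symmetrizing — e.g. put mass only on the orbit of the swap group and make $\tilde X$ the appropriate swap of $X$. Third, run the algorithm on this $\Phi$; it queries some set $Q$ of points; if $Q$ misses two distinct points among $\{x^{(S)} : S \subseteq A\}$ beyond the allowed slack of one, build $\Phi'$ agreeing with $\Phi$ on $Q$ but differing at a missed point, note the algorithm produces the identical $(x, \tilde x)$ on both, and derive that exactly one of $\Phi, \Phi'$ fails \eqref{eq:knockoff-swap} — contradicting that the procedure is valid "for all $\Phi$". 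Hence $|Q \cap \{x^{(S)}\}| \ge 2^k - 1$, so $N \ge 2^k - 1$.

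The main obstacle I anticipate is the second step: cleanly exhibiting the family of valid unnormalized densities and pinning down *exactly which* equalities among the $\Phi(x^{(S)})$ are forced by pairwise exchangeability, so that the "$-1$" (rather than a weaker constant, or nothing) comes out correctly. One has to be careful that pairwise exchangeability is a statement about the *joint* law of $(X,\tilde X)$, not directly about $\Phi = $ the marginal density of $X$; so the argument must carry the joint law of $(X,\tilde X)$ along, track how the procedure's induced coupling behaves under the adversarial perturbation, and verify that the perturbation genuinely creates a configuration where no valid coupling can simultaneously match the old output and the swap constraints. Handling adaptivity (the query set depends on answers, so "the points queried" is itself $\Phi$-dependent) is the other delicate point, but it is dealt with by the standard device of freezing the transcript: two densities with the same answers on the queried set induce the same transcript and thus the same query set and the same output, which is exactly what makes the indistinguishability bite.
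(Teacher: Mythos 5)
Your overall skeleton --- identify the $2^{k}$ swapped configurations $x^{(S)}$ and use an indistinguishability argument to show that unqueried ones can be adversarially perturbed --- is the same as the paper's, but the step you yourself flag as the ``main obstacle'' is exactly where the proposal is incomplete, and the way you describe it would not work as stated. Pairwise exchangeability does \emph{not} force equalities among the marginal density values $\Phi(x^{(S)})$ (consider $\bXk=X$: every density admits that coupling), so there is no ``forced equality'' to break by a generic perturbation. What exchangeability of the \emph{induced joint law} actually gives is the identity $\pi(x)\,q_\Phi(\xk\mid x)=\pi(x_{\ch(S)})\,q_\Phi(\xk_{\ch(S)}\mid x_{\ch(S)})$, where $q_\Phi(\cdot\mid\cdot)$ is the procedure's transition kernel, and hence the one-sided inequality
$\Phi(x_{\ch(S)})\ \ge\ \Phi(x)\,\p_\Phi(\text{output }\xk\text{ and never query }x_{\ch(S)}\mid X=x)$.
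The contradiction is then obtained not by an arbitrary perturbation but by \emph{shrinking} $\Phi$ at the unqueried point by a factor $\eta\downarrow 0$: this keeps the support, keeps every query answer on the no-query event (so the right-hand side is independent of $\eta$), while the left-hand side tends to $0$, forcing the no-query event to have probability zero. Your write-up never identifies this inequality or the direction of the perturbation, and ``derive that exactly one of $\Phi,\Phi'$ fails \eqref{eq:knockoff-swap}'' is precisely the step that needs this mechanism.

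Two further concrete problems. First, your counting exempts the wrong point: the configuration that need not be queried is $x^{(\emptyset)}=x$ (the input, which the algorithm receives for free), whereas the all-swap point $x^{(A)}$ --- which coincides with $\xk$ itself --- \emph{must} be queried; the argument applies to every non-empty $S\subseteq A$, giving exactly the $2^{k}-1$ required points. As written, your count charges the algorithm for ``querying'' its own input and excuses it from querying $\xk$, which once corrected would only yield $2^{k}-2$. Second, the theorem covers continuous densities, where the event that the algorithm queries \emph{exactly} $x_{\ch(S)}$ has probability zero; the paper handles this with a covering argument by disjoint rational hypercubes around the swapped configurations, showing each cube must contain a query point. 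Your ``take $\Phi$ supported on a finite set'' device does not address this case, and is not available anyway: the adversary must perturb the \emph{given} $\Phi$ (the procedure is quantified over all $\Phi$ with a fixed support), not choose a convenient finite support.
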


This result means that for any knockoff sampler, we cannot have both full generality and time efficiency. Put differently, in order to efficiently generate nontrivial knockoffs, we will need to restrict our attention to a subset of distributions for which we have structure. This fact justifies our decision to focus on distributions with graphical structure. We also derive a lower bound for the complexity of knockoff sampling for graphical models, stated next.

\begin{coro}[Complexity lower bound for graphical models]
\label{coro:chordal-lb}
Consider the setting of Theorem \ref{theorem:timecomp}. Fix a graph $G$ with maximal cliques $C$. Suppose that for all $\Phi$ of the form $\Phi(x) = \prod_{c \in C} \phi_c(x_c)$, the procedure induces a joint distribution $( X,\bXk)$ obeying pairwise exchangeability \eqref{eq:knockoff-swap}. Then $N \ge \max_{c \in C} 2^{\# \{j \in c: X_j \ne \Xk_j\}} - 1$ a.s..
\end{coro}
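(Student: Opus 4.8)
The plan is to reduce the graphical-model case to the complete-graph lower bound of Theorem \ref{theorem:timecomp}, applied ``clique by clique.'' Fix a maximal clique $c \in C$. The key observation is that any density supported on the coordinates in $c$ can be embedded into the factorized family $\Phi(x) = \prod_{c' \in C} \phi_{c'}(x_{c'})$: simply take $\phi_c(x_c)$ to be an arbitrary (unnormalized) density on $x_c$, and set $\phi_{c'} \equiv 1$ for every other maximal clique $c' \neq c$. With this choice, $\Phi(x) = \phi_c(x_c)$ depends only on the coordinates in $c$, so the variables $\{X_j : j \notin c\}$ are independent of $\{X_j : j \in c\}$ and, in fact, their joint law is completely unconstrained (any product structure on the complement is consistent with the graph). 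The point is that, restricted to this subfamily, the problem of producing valid knockoffs for $(X_j)_{j \in c}$ is exactly the unrestricted (complete-graph) knockoff problem on $\#c$ coordinates.

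Next I would argue that a valid knockoff sampler for the full graphical family, when run on one of these degenerate inputs, must in particular produce valid knockoffs for the sub-vector $X_c$. This uses the fact (noted after \eqref{eq:knockoff-swap}) that pairwise exchangeability is preserved under marginalization: if $(X, \tilde X)_{\swap(A)} \eqd (X, \tilde X)$ for all singletons $A$, then the same holds for all subsets $A$, and marginalizing onto the coordinates in $c$ shows that $(X_c, \tilde X_c)$ are knockoffs for $X_c$. Moreover, every query the procedure makes to $\Phi$ is, on these inputs, effectively a query to $\phi_c$ (the other factors are the constant $1$ and carry no information), so a procedure making $N$ queries to $\Phi$ induces a knockoff procedure for $X_c$ in the oracle model of Theorem \ref{theorem:timecomp} making at most $N$ queries to its oracle $\phi_c$. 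Applying Theorem \ref{theorem:timecomp} with $p$ replaced by $\#c$ and the complete graph on $c$ yields $N \ge 2^{\#\{j \in c : X_j \neq \tilde X_j\}} - 1$ almost surely, for this particular $c$.

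Finally, since this holds for every maximal clique $c \in C$ and the number of queries $N$ is a single quantity not depending on which clique we single out in the analysis, we may take the maximum over $c \in C$, giving $N \ge \max_{c \in C} 2^{\#\{j \in c : X_j \neq \tilde X_j\}} - 1$ a.s., which is the claim. I expect the only subtle point to be the bookkeeping in the reduction: one must check that the ``oracle procedure for $X_c$'' obtained by feeding the full-graph procedure a degenerate $\Phi$ genuinely fits the hypotheses of Theorem \ref{theorem:timecomp} — in particular that its correctness is required for \emph{all} choices of $\phi_c$ (which it is, since those are a subfamily of all valid $\Phi$) and that the query count is controlled pathwise so that the a.s.\ bound transfers. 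Everything else is a direct application of the already-proved theorem and the marginalization property of exchangeability.
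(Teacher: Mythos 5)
Your reduction is in the right spirit but, as written, proves a weaker statement than the corollary. By setting $\phi_{c'}\equiv 1$ for all maximal cliques $c'\ne c$, you restrict attention to the degenerate subfamily of inputs in which the coordinates outside $c$ are independent of $X_c$ and carry no interactions. The conclusion you can extract from Theorem \ref{theorem:timecomp} is therefore: for every $\Phi$ \emph{in that degenerate subfamily}, the query count satisfies $N\ge 2^{\#\{j\in c:X_j\ne\Xk_j\}}-1$ a.s.\ under $\p_\Phi$. But the corollary asserts this bound a.s.\ for \emph{every} $\Phi$ of the factorized form $\prod_{c'}\phi_{c'}(x_{c'})$, including those with nontrivial potentials on the other cliques; your argument says nothing about how the procedure behaves on such inputs, because the family of inputs appearing in your reduction does not contain them. (There is also a minor legality issue --- $\phi_{c'}\equiv 1$ is only a valid potential when the remaining coordinates have bounded or finite support --- but that is easily patched and is not the real problem.)

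The paper closes this gap by arguing by contradiction from an \emph{arbitrary} $\Phi_0=\prod_{c'}\phi_{c'}$ at which the bound allegedly fails with positive probability for some clique $c_0$. Its reduction builds, for each $\#c_0$-dimensional input density $\Psi_{c_0}$, the full-dimensional density $\Phi'(z)=\Psi_{c_0}(z_{c_0})\,\Phi_0(z_{\{1:p\}\setminus c_0}\mid z_{c_0})$: it keeps the conditional law of the remaining coordinates given $X_{c_0}$ fixed at that of $\Phi_0$ and only swaps out the marginal on $c_0$. One checks that $\Phi'$ still factors over $G$ (the conditional contributes $\prod_{c'\ne c_0}\phi_{c'}$ times a function of $z_{c_0}$ alone), and --- crucially --- this family \emph{contains $\Phi_0$ itself}, namely when $\Psi_{c_0}$ equals the $\Phi_0$-marginal of $X_{c_0}$. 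The positive-probability failure at $\Phi_0$ therefore transfers to a positive-probability failure of the induced complete-graph sampler on $\#c_0$ coordinates, contradicting Theorem \ref{theorem:timecomp}. If you replace your constant potentials by this conditional-law construction (sampling the missing coordinates from $\Phi_0(\cdot\mid x_{c_0})$ before invoking the procedure), the rest of your argument --- marginalization of exchangeability, query accounting, and the final maximum over cliques --- goes through as you describe.
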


This proposition shows that even after making some useful structural
assumptions, there is still a trade-off between knockoff quality and
computation. We next derive a byproduct, which proves that Metro is achieving a good
runtime.

\begin{prop}[Optimality of Metro for chordal Gaussian graphical models]
  Consider continuous distributions of the form
  $\Phi( x) = \prod_{c \in C} \phi_c( x_c)$ over a chordal graph
  $G$.\footnote{A chordal graph is a graph such that any cycle of
    length 4 or larger has a chord.}  On the one hand, for any input, Metro can be
  run with $O(p^2 + p2^w)$ queries of $\Phi$. Furthermore, in the case
  where the distribution is Gaussian with zero mean and positive
  definite covariance (i.e., $\Phi(x)\propto\exp\left(-x\bm\Sigma^{-1} x^\top/2\right)$), Metro can produce
  knockoffs with $X_j \ne \Xk_j$ for all $j$ with probability 1. On the other hand, any general procedure that samples 
  knockoffs such that $X_j \ne \Xk_j$ for all $j$ with probability $\epsilon>0$ will require at
  least $2^w - 1$ queries of $\Phi$ with probability at least
  $\epsilon$.
\label{prop:gaussian-optimality}
\end{prop}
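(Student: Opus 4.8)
The plan is to dispatch the three assertions in turn, leaning on Theorem~\ref{thm:scep-runtime} for the efficiency claim, on the never-reject property of covariance-guided proposals for the Gaussian contrast claim, and on Corollary~\ref{coro:chordal-lb} for the lower bound. For the runtime, recall that a chordal graph always admits a junction tree whose vertices are exactly its maximal cliques, that this clique tree can be built in polynomial time from $G$ alone without any evaluation of $\Phi$, and that its width equals the treewidth $w$ of $G$. Ordering the variables via Algorithm~\ref{alg:jt-order} applied to this tree and running Metro with faithful proposals compatible with it, Theorem~\ref{thm:scep-runtime} yields $O(p2^w)$ queries of $\Phi$. The extra $O(p^2)$ term bounds a one-time preprocessing step: in the Gaussian case $\log\Phi(x)=\mathrm{const}-\tfrac12\,x\bm\Sigma^{-1}x^\top$, so evaluating $\Phi$ at $0$, at the standard basis vectors, and at the sums $e_i+e_j$ recovers every entry of $\bm\Sigma^{-1}$ using $O(p^2)$ queries, after which inverting it and choosing an admissible $s$ costs no further queries; for a non-Gaussian input one may instead use a simple local proposal (such as a Gaussian random walk on $X_j$), which is automatically faithful, compatible, and needs no evaluation of $\Phi$.

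For the Gaussian contrast claim, run Metro with the covariance-guided proposals built from the true $\bm\Sigma$, taking $s$ with strictly positive entries small enough that $\bm\Gamma(s)\succ 0$ --- always possible when $\bm\Sigma\succ 0$, e.g.\ $s=\epsilon\mathbf{1}$ for small $\epsilon$. As indicated in Section~\ref{subsesc:cov-guided-prop}, every proposal is then accepted, which I would justify by induction on $j$: if the first $j-1$ proposals were accepted, then $(X_{1:p},X^*_{1:(j-1)})$ is jointly Gaussian with covariance the leading $(p+j-1)$-dimensional principal block of $\bm\Gamma(s)$, so drawing $X^*_j$ from the covariance-guided proposal extends this to the leading $(p+j)$ block, and the Metropolis--Hastings ratio at step $j$ collapses to the ratio of that Gaussian density evaluated at a point and at its image under swapping coordinates $j$ and $p+j$, which is $1$ by the $\swap(j)$-invariance of $\bm\Gamma(s)$. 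Since $\bm\Gamma(s)\succ 0$, the conditional variance $\Gamma_{22}^{(j)}-(\Gamma_{12}^{(j)})^\top(\bm\Gamma_{11}^{(j)})^\dagger\Gamma_{12}^{(j)}$ is strictly positive, so $X^*_j$ has a non-degenerate conditional Gaussian law and $X^*_j\neq X_j$ almost surely; acceptance then gives $\tilde X_j=X^*_j\neq X_j$, and a union bound over $j=1,\dots,p$ finishes this part.

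For the lower bound, apply Corollary~\ref{coro:chordal-lb} to the procedure at hand, which as a general oracle-model procedure produces valid knockoffs for every $\Phi$ factoring over $G$, giving $N\ge\max_{c\in C}2^{\#\{j\in c:\,X_j\neq\tilde X_j\}}-1$ almost surely. For a chordal graph the treewidth equals the largest maximal-clique size minus one, so fix a maximal clique $c^\star$ with $|c^\star|=w+1$. On the event $\{X_j\neq\tilde X_j\text{ for all }j\}$, which by hypothesis has probability at least $\epsilon$, every node of $c^\star$ contributes and hence $\#\{j\in c^\star:X_j\neq\tilde X_j\}=w+1$, so $N\ge 2^{w+1}-1\ge 2^w-1$ on that event; therefore $N\ge 2^w-1$ with probability at least $\epsilon$. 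I expect the main friction to be bookkeeping rather than anything deep: checking that the junction-tree and compatibility hypotheses of Theorem~\ref{thm:scep-runtime} are genuinely met by whichever proposal is used (noting that the proposal in the efficiency claim and that in the contrast claim need not coincide, since Metro is valid for any faithful proposal), and keeping $s$ strictly inside the region where $\bm\Gamma(s)$ remains positive definite so that no conditional law degenerates; the lower bound itself is essentially immediate from Corollary~\ref{coro:chordal-lb} once the chordal-treewidth identity is recalled.
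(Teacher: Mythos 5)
Your proposal is correct and follows essentially the same route as the paper's proof: recover the precision matrix from $O(p^2)$ queries at $0$, the $e_j$'s, and the $e_i+e_j$'s, run covariance-guided proposals (which never reject in the truly Gaussian case by the $\swap(j)$-invariance of $\bm\Gamma(s)$), and read the lower bound off Corollary~\ref{coro:chordal-lb} using the fact that a chordal graph has a maximal clique of size $w+1$. The only point the paper is more careful about is that, in the oracle model, the choice between covariance-guided and fallback proposals must be made from the query results alone (by checking whether the solved matrix is positive definite and reflects $G$), not from prior knowledge of whether the input is Gaussian; your induction showing the acceptance ratio equals one is a detail the paper merely asserts.
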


Proposition \ref{prop:gaussian-optimality} means that for chordal
graphs, any general knockoff sampling algorithm such that
$\p(X_j\ne\Xk_j\text{ for all }j)$ is bounded away from zero needs, in
expectation, the same exponential order of queries as Metro (with the proviso that $p$ is negligible compared to $2^w$).

\subsection{Divide-and-conquer to reduce treewidth} \label{subsec:high-width}

Theorem \ref{thm:scep-runtime} shows that Metro enables efficient
computations for random vectors whose densities factor over a graph
$G$ of low treewidth. Not all graphs corresponding to random vectors
of interest have low treewidth, however. A $d_1 \times d_2$ grid, for
example, has treewidth $\min(d_1, d_2)$
\citep{Diestel:2018:GT:3243389}. This section develops a mechanism for
simplifying the graphical structure of a random vector $ X$,
allowing for faster computation of exact knockoffs at the cost
of reduced knockoff quality.

To simplify graphical structure, we fix a set of variables $C$ that separates the graph $G$ into two subgraphs $A$ and $B$. After fixing the variables in $C$, knockoffs can be constructed for the variables in $A$ and $B$ independently.
\begin{prop}[Validity of divide-and-conquer knockoffs]
\label{prop:divide-and-conquer}
Suppose the sets $A, B, C$ form a partition of $\{1,\dots,p\}$ such
that $C$ separates $A$ and $B$ in the graph $G$, i.e., there is no path
from some $j \in A$ to some $k \in B$ in $G$ that does not contain
some $\ell \in C$. Suppose $\bXk$ is a random vector such that
$ X_{C} = \bXk_{C}$ a.s. and for all $j_A \in A$ and $j_B \in B$,
\begin{align*}
( X_D, \bXk_D) &\eqd ( X_D, \bXk_D)_{\swap(j_D)}  \mid   X_C  \qquad \text{ for } \quad D = A, B.
\end{align*}
Furthermore, assume we construct the knockoffs for $A$ and $B$ separately, i.e. $(X_A, \Xk_A) \independent  (X_B, \Xk_B) \mid X_C$.  Then $\bXk$ is a valid knockoff.
\end{prop}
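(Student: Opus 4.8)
The plan is to verify pairwise exchangeability \eqref{eq:knockoff-swap} one coordinate at a time — which by definition is exactly what is required — splitting into three cases according to whether the swapped index $j$ lies in $C$, in $A$, or in $B$.

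The case $j \in C$ is immediate: since $X_C = \bXk_C$ almost surely, $X_j = \bXk_j$ a.s., so $(X,\bXk)_{\swap(j)} = (X,\bXk)$ pointwise a.s.\ and there is nothing to prove. The cases $j \in A$ and $j \in B$ are symmetric, so I focus on $j \in A$. First I would record that, because the $C$-coordinates of $X$ and $\bXk$ coincide a.s., the law of $(X,\bXk)$ is the pushforward under an injective ``embedding'' map of the law of the reduced vector $W := (X_A,\bXk_A,X_B,\bXk_B,X_C)$, and that $\swap(j)$ on $(X,\bXk)$ for $j\in A$ corresponds under this map to swapping $X_j$ and $\bXk_j$ within the $A$-block of $W$, leaving the common $C$-coordinate fixed. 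Hence it suffices to prove that the law of $W$ is invariant under this within-$A$-block swap.

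To prove that, I would condition on $X_C$. By the assumed conditional independence $(X_A,\bXk_A) \independent (X_B,\bXk_B) \mid X_C$, the conditional law of $(X_A,\bXk_A,X_B,\bXk_B)$ given $X_C$ factors as the product of the conditional law of $(X_A,\bXk_A)$ given $X_C$ and that of $(X_B,\bXk_B)$ given $X_C$. Swapping $X_j$ and $\bXk_j$ with $j\in A$ acts only on the first factor, and the hypothesis $(X_A,\bXk_A)\eqd(X_A,\bXk_A)_{\swap(j)}\mid X_C$ says exactly that this factor is unchanged; the second factor is untouched. Therefore the conditional law of $W$ given $X_C$ is unchanged, and integrating over the law of $X_C$ gives invariance of the unconditional law of $W$, hence of $(X,\bXk)$. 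Equivalently, one can argue with unnormalized densities: write the joint density of $(X_A,\bXk_A,X_B,\bXk_B,X_C)$ as $p(x_C)\,p_A(x_A,\bxk_A\mid x_C)\,p_B(x_B,\bxk_B\mid x_C)$ and note that each factor involving $x_j,\bxk_j$ is symmetric in that pair.

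The argument has no substantive obstacle; the only points requiring a little care are the measure-theoretic bookkeeping in passing between $(X,\bXk)$ and the reduced vector $W$ under the a.s.\ constraint $X_C=\bXk_C$ (the joint law of $(X,\bXk)$ being degenerate, one works with $W$ rather than a density on $\rr^{2p}$), and the routine fact that conditional independence of the two blocks plus conditional swap-invariance of one block implies conditional swap-invariance of the joint.
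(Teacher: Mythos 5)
Your proof is correct and follows essentially the same route as the paper's: condition on $X_C$, use the assumed conditional independence of the two blocks to factor the conditional law, apply the per-block conditional swap-invariance, handle $j\in C$ via $X_C=\bXk_C$ a.s., and then marginalize over $X_C$ to pass from conditional to unconditional exchangeability. The extra care you take with the degenerate joint law via the reduced vector $W$ is a fine (if optional) refinement of the same argument.
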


The divide-and-conquer technique can be applied recursively to
split the graph into components of low treewidth until the
junction-tree algorithm for constructing knockoffs can be used on each
component. For example, for an arbitrary planar graph with $p$ nodes, the
planar separator theorem gives the existence of a subset of nodes $C$ of size
$O(\sqrt{p})$ that separates the graph into components $A$ and $B$
with $\max(\abs{A}, \abs{B}) \le 2p/3$ \citep{planar-separator},
suggesting that this technique will apply to many cases of
interest. Figure \ref{fig:grid-seperator} illustrates this technique
for a $d_1 \times d_1$ grid. We split the grid into rectangular
ribbons of size $d_1 \times d_2$ for small $d_2$; each resulting
ribbon has treewidth $d_2$.

The drawback of this approach is that for $j \in C$, we shall have
$X_j = \Xk_j$. When we think of deploying the knockoff framework in statistical
applications, one should remember that we will work with multiple
copies of $X$ corresponding to distinct observations.  We can then
choose different separator sets for each observation so that in the end, 
$X_j \neq \Xk_j$ for most of the observations. For example, in the
setting of Figure \ref{fig:grid-seperator}, one would randomly choose
between the two choices of $C$ for each observation. This technique is
explored numerically in Section \ref{subsubsec:ising-sims}.

\begin{figure}
\def\gwidth{1cm} 
\hspace{.1\textwidth}
\begin{subfigure}[t]{0.38\textwidth}
\centering
\tikzstyle{int}=[circle, draw, fill=white!20, minimum size=2em, scale = .5]
\tikzstyle{init} = [pin edge={to-,thin,black}]
\begin{tikzpicture}[node distance=4cm,auto]
	\tiny
    \node [int] (11) {};
    \node [int] (12) [right of=11, node distance=\gwidth] {};
    \node [int, fill = gray!50] (13) [right of=12, node distance=\gwidth] {};
    \node [int] (14) [right of=13, node distance=\gwidth] {};
    \node [int] (15) [right of=14, node distance=\gwidth] {};
    \node [int] (16) [right of=15, node distance=\gwidth] {};

    \node [int] (21) [below of=11, node distance = \gwidth]{};
    \node [int] (22) [right of=21, node distance=\gwidth] {};
    \node [int, fill = gray!50] (23) [right of=22, node distance=\gwidth] {};
    \node [int] (24) [right of=23, node distance=\gwidth] {};
    \node [int] (25) [right of=24, node distance=\gwidth] {};
    \node [int] (26) [right of=25, node distance=\gwidth] {};

    \node [int] (31) [below of=21, node distance = \gwidth] {};
    \node [int] (32) [right of=31, node distance= \gwidth] {};
    \node [int, fill = gray!50] (33) [right of=32, node distance= \gwidth] {};
    \node [int] (34) [right of=33, node distance=\gwidth] {};
    \node [int] (35) [right of=34, node distance=\gwidth] {};
    \node [int] (36) [right of=35, node distance=\gwidth] {};

    \node [int] (41) [below of=31, node distance = \gwidth] {};
    \node [int] (42) [right of=41, node distance=\gwidth] {};
    \node [int, fill = gray!50] (43) [right of=42, node distance=\gwidth] {};
    \node [int] (44) [right of=43, node distance=\gwidth] {};
    \node [int] (45) [right of=44, node distance=\gwidth] {};
    \node [int] (46) [right of=45, node distance=\gwidth] {};

    \node [int] (51) [below of=41, node distance = \gwidth] {};
    \node [int] (52) [right of=51, node distance=\gwidth] {};
    \node [int, fill = gray!50] (53) [right of=52, node distance=\gwidth] {};
    \node [int] (54) [right of=53, node distance=\gwidth] {};
    \node [int] (55) [right of=54, node distance=\gwidth] {};
    \node [int] (56) [right of=55, node distance=\gwidth] {};

    \node [int] (61) [below of=51, node distance = \gwidth] {};
    \node [int] (62) [right of=61, node distance=\gwidth] {};
    \node [int, fill = gray!50] (63) [right of=62, node distance=\gwidth] {};
    \node [int] (64) [right of=63, node distance=\gwidth] {};
    \node [int] (65) [right of=64, node distance=\gwidth] {};
    \node [int] (66) [right of=65, node distance=\gwidth] {};

    \node [] (A) [above of = 11, node distance = .5\gwidth] {\small A};
    \node [] (C) [above of = 13, node distance = .5\gwidth] {\small C};
    \node [] (B) [above of = 15, node distance = .5\gwidth] {\small B};

\begin{scope}[on background layer]
    \node[fill=blue!15, inner sep=1pt, rounded corners = .2cm, fit=(11) (A) (62)](FIt1) {};
    \node[fill=gray!10, inner sep=1pt, draw = gray, rounded corners = .2cm, fit=(C) (63)](FIt2) {};
    \node[fill=blue!15, inner sep=1pt, rounded corners = .2cm, fit= (14) (B) (66)](FIt3) {};
\end{scope}

    \path[-] (11) edge node {} (12);
    \path[-] (12) edge node {} (13);
    \path[-] (13) edge node {} (14);
    \path[-] (14) edge node {} (15);
    \path[-] (15) edge node {} (16);

    \path[-] (21) edge node {} (22);
    \path[-] (22) edge node {} (23);
    \path[-] (23) edge node {} (24);
    \path[-] (24) edge node {} (25);
    \path[-] (25) edge node {} (26);

    \path[-] (31) edge node {} (32);
    \path[-] (32) edge node {} (33);
    \path[-] (33) edge node {} (34);
    \path[-] (34) edge node {} (35);
    \path[-] (35) edge node {} (36);

    \path[-] (41) edge node {} (42);
    \path[-] (42) edge node {} (43);
    \path[-] (43) edge node {} (44);
    \path[-] (44) edge node {} (45);
    \path[-] (45) edge node {} (46);

    \path[-] (51) edge node {} (52);
    \path[-] (52) edge node {} (53);
    \path[-] (53) edge node {} (54);
    \path[-] (54) edge node {} (55);
    \path[-] (55) edge node {} (56);

    \path[-] (61) edge node {} (62);
    \path[-] (62) edge node {} (63);
    \path[-] (63) edge node {} (64);
    \path[-] (64) edge node {} (65);
    \path[-] (65) edge node {} (66);

    \path[-] (11) edge node {} (21);
    \path[-] (12) edge node {} (22);
    \path[-] (13) edge node {} (23);
    \path[-] (14) edge node {} (24);
    \path[-] (15) edge node {} (25);
    \path[-] (16) edge node {} (26);

    \path[-] (31) edge node {} (21);
    \path[-] (32) edge node {} (22);
    \path[-] (33) edge node {} (23);
    \path[-] (34) edge node {} (24);
    \path[-] (35) edge node {} (25);
    \path[-] (36) edge node {} (26);

    \path[-] (31) edge node {} (41);
    \path[-] (32) edge node {} (42);
    \path[-] (33) edge node {} (43);
    \path[-] (34) edge node {} (44);
    \path[-] (35) edge node {} (45);
    \path[-] (36) edge node {} (46);

    \path[-] (51) edge node {} (41);
    \path[-] (52) edge node {} (42);
    \path[-] (53) edge node {} (43);
    \path[-] (54) edge node {} (44);
    \path[-] (55) edge node {} (45);
    \path[-] (56) edge node {} (46);

    \path[-] (51) edge node {} (61);
    \path[-] (52) edge node {} (62);
    \path[-] (53) edge node {} (63);
    \path[-] (54) edge node {} (64);
    \path[-] (55) edge node {} (65);
    \path[-] (56) edge node {} (66);
\end{tikzpicture}
\end{subfigure}
~
\begin{subfigure}[t]{0.38\textwidth}
\centering
\tikzstyle{int}=[circle, draw, fill=white!20, minimum size=2em, scale = .5]
\tikzstyle{init} = [pin edge={to-,thin,black}]
\begin{tikzpicture}[node distance=4cm,auto]
	\tiny
    \node [int] (11) {};
    \node [int] (12) [right of=11, node distance=\gwidth] {};
    \node [int] (13) [right of=12, node distance=\gwidth] {};
    \node [int, fill = gray!50] (14) [right of=13, node distance=\gwidth] {};
    \node [int] (15) [right of=14, node distance=\gwidth] {};
    \node [int] (16) [right of=15, node distance=\gwidth] {};

    \node [int] (21) [below of=11, node distance = \gwidth]{};
    \node [int] (22) [right of=21, node distance=\gwidth] {};
    \node [int] (23) [right of=22, node distance=\gwidth] {};
    \node [int, fill = gray!50] (24) [right of=23, node distance=\gwidth] {};
    \node [int] (25) [right of=24, node distance=\gwidth] {};
    \node [int] (26) [right of=25, node distance=\gwidth] {};

    \node [int] (31) [below of=21, node distance = \gwidth] {};
    \node [int] (32) [right of=31, node distance= \gwidth] {};
    \node [int] (33) [right of=32, node distance= \gwidth] {};
    \node [int, fill = gray!50] (34) [right of=33, node distance=\gwidth] {};
    \node [int] (35) [right of=34, node distance=\gwidth] {};
    \node [int] (36) [right of=35, node distance=\gwidth] {};

    \node [int] (41) [below of=31, node distance = \gwidth] {};
    \node [int] (42) [right of=41, node distance=\gwidth] {};
    \node [int] (43) [right of=42, node distance=\gwidth] {};
    \node [int, fill = gray!50] (44) [right of=43, node distance=\gwidth] {};
    \node [int] (45) [right of=44, node distance=\gwidth] {};
    \node [int] (46) [right of=45, node distance=\gwidth] {};

    \node [int] (51) [below of=41, node distance = \gwidth] {};
    \node [int] (52) [right of=51, node distance=\gwidth] {};
    \node [int] (53) [right of=52, node distance=\gwidth] {};
    \node [int, fill = gray!50] (54) [right of=53, node distance=\gwidth] {};
    \node [int] (55) [right of=54, node distance=\gwidth] {};
    \node [int] (56) [right of=55, node distance=\gwidth] {};

    \node [int] (61) [below of=51, node distance = \gwidth] {};
    \node [int] (62) [right of=61, node distance=\gwidth] {};
    \node [int] (63) [right of=62, node distance=\gwidth] {};
    \node [int, fill = gray!50] (64) [right of=63, node distance=\gwidth] {};
    \node [int] (65) [right of=64, node distance=\gwidth] {};
    \node [int] (66) [right of=65, node distance=\gwidth] {};

    \node [] (A) [above of = 12, node distance = .5\gwidth] {\small A};
    \node [] (C) [above of = 14, node distance = .5\gwidth] {\small C};
    \node [] (B) [above of = 16, node distance = .5\gwidth] {\small B};

\begin{scope}[on background layer]
    \node[fill=blue!15, inner sep=1pt, rounded corners = .2cm, fit=(11) (A) (63)](FIt1) {};
    \node[fill=gray!10, inner sep=1pt, draw = gray, rounded corners = .2cm, fit=(C) (64)](FIt2) {};
    \node[fill=blue!15, inner sep=1pt, rounded corners = .2cm, fit= (15) (B) (66)](FIt3) {};
\end{scope}

    \path[-] (11) edge node {} (12);
    \path[-] (12) edge node {} (13);
    \path[-] (13) edge node {} (14);
    \path[-] (14) edge node {} (15);
    \path[-] (15) edge node {} (16);

    \path[-] (21) edge node {} (22);
    \path[-] (22) edge node {} (23);
    \path[-] (23) edge node {} (24);
    \path[-] (24) edge node {} (25);
    \path[-] (25) edge node {} (26);

    \path[-] (31) edge node {} (32);
    \path[-] (32) edge node {} (33);
    \path[-] (33) edge node {} (34);
    \path[-] (34) edge node {} (35);
    \path[-] (35) edge node {} (36);

    \path[-] (41) edge node {} (42);
    \path[-] (42) edge node {} (43);
    \path[-] (43) edge node {} (44);
    \path[-] (44) edge node {} (45);
    \path[-] (45) edge node {} (46);

    \path[-] (51) edge node {} (52);
    \path[-] (52) edge node {} (53);
    \path[-] (53) edge node {} (54);
    \path[-] (54) edge node {} (55);
    \path[-] (55) edge node {} (56);

    \path[-] (61) edge node {} (62);
    \path[-] (62) edge node {} (63);
    \path[-] (63) edge node {} (64);
    \path[-] (64) edge node {} (65);
    \path[-] (65) edge node {} (66);

    \path[-] (11) edge node {} (21);
    \path[-] (12) edge node {} (22);
    \path[-] (13) edge node {} (23);
    \path[-] (14) edge node {} (24);
    \path[-] (15) edge node {} (25);
    \path[-] (16) edge node {} (26);

    \path[-] (31) edge node {} (21);
    \path[-] (32) edge node {} (22);
    \path[-] (33) edge node {} (23);
    \path[-] (34) edge node {} (24);
    \path[-] (35) edge node {} (25);
    \path[-] (36) edge node {} (26);

    \path[-] (31) edge node {} (41);
    \path[-] (32) edge node {} (42);
    \path[-] (33) edge node {} (43);
    \path[-] (34) edge node {} (44);
    \path[-] (35) edge node {} (45);
    \path[-] (36) edge node {} (46);

    \path[-] (51) edge node {} (41);
    \path[-] (52) edge node {} (42);
    \path[-] (53) edge node {} (43);
    \path[-] (54) edge node {} (44);
    \path[-] (55) edge node {} (45);
    \path[-] (56) edge node {} (46);

    \path[-] (51) edge node {} (61);
    \path[-] (52) edge node {} (62);
    \path[-] (53) edge node {} (63);
    \path[-] (54) edge node {} (64);
    \path[-] (55) edge node {} (65);
    \path[-] (56) edge node {} (66);
\end{tikzpicture}
\end{subfigure}
\begin{subfigure}[t]{0.1666\textwidth}
\end{subfigure}

\caption{Two examples of conditioning to reduce the treewidth of a $
6 \times 6$ grid from $6$ to $3$.} \label{fig:grid-seperator}
\end{figure}
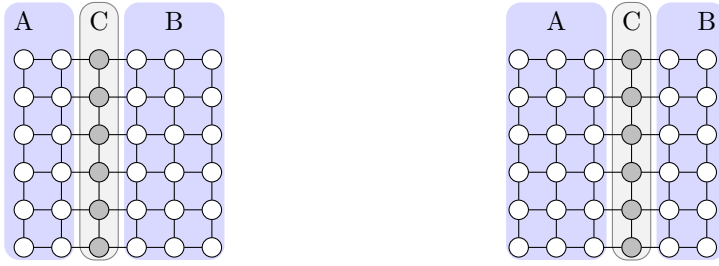

\subsection{Discrete distributions}
\label{subsec:rejection-free}
For discrete distributions with a small number of states for each coordinate $X_j$, the junction tree techniques from Section \ref{subsec:junction-tree} can be directly applied without using  Metropolized knockoff sampling. When each variable $X_j$ can take on at most $K$ values, the probability mass function $\p(X_j \mid \Xno{j}, \Xk_{1:(j-1)})$ can be represented as a vector in $\mathbb{R}^K$, so at step $j$ of the algorithm we simply need to evaluate $\p(X_j = z_j, \Xno{j}, \Xk_{1:(j-1)})$ for $z_j$ in the support of $X_j$. This is the same quantity we computed in Section \ref{subsec:junction-tree}; see, e.g., \eqref{eq:seq-decomp}. Once these probabilities have been computed, sampling from the resulting multinomial probability gives the SCIP procedure. In principle, this can be viewed as a special case of Metro, but for a practical implementation it is simpler to work directly with the probability vectors. A similar analysis to the proof of Theorem \ref{thm:scep-runtime} then shows that the procedure requires $O(pK^w)$ queries of the density $\Phi$; see Appendix \ref{subapp:complexity-proofs} for details. For discrete distributions with infinite or large $K$, this is not tractable. However, Metro still applies and is much faster.  

\subsection{Knockoffs for the Ising model} 
\label{subsec:ising-grid}

The tools from this section have the power to generate knockoffs for the Ising model on a grid \eqref{eq:ising-pmf}. To construct an efficient knockoff sampler for this distribution, we need to find a junction tree of minimal width for the $d_1 \times d_2$ grid so that we can apply the technique from Section \ref{subsec:rejection-free}. A junction tree for the $2 \times 3$ grid of width $2$ is shown in Figure \ref{fig:grid-junction-tree}, and the construction immediately generalizes to a junction tree of width $\min(d_1, d_2)$ for the $d_1 \times d_2$ grid, which is the optimal width. When $d_1 \ge d_2$, this leads to a knockoff sampler that proceeds from left to right, top to bottom; when variable $X_{i,j}$ is sampled, the other variables in the active node of the junction tree are $X_{i,j+1:p}$ and $ X_{i+1,1:j}$; see Figure \ref{fig:grid-active-frontier}. Per our upper bound, this knockoff sampler will have runtime $O(d_1 d_2 2^{\min(d_1,d_2)})$. If $\min(d_1, d_2)$ is large, this runtime may still be prohibitively long, but the divide-and-conquer technique from Section \ref{subsec:high-width} greatly increases speed at the cost of slightly worse knockoffs than the impractical full procedure. We conduct a simulation experiment of both the small-grid and large-grid setting in Section \ref{subsubsec:ising-sims}.

\begin{figure}
\begin{center}
\tikzstyle{int}=[circle, draw, fill=white!20, minimum size=2em, scale = .5]
\tikzstyle{init} = [pin edge={to-,thin,black}]
\def\gwidth{1.0cm}
\begin{tikzpicture}[node distance=4cm,auto]
	\tiny
    \node [int, fill = gray!50] (11) {};
    \node [int, fill = gray!50] (12) [right of=11, node distance=\gwidth] {};
    \node [int, fill = gray!50] (13) [right of=12, node distance=\gwidth] {};
    \node [int, fill = gray!50] (14) [right of=13, node distance=\gwidth] {};
    \node [int, fill = gray!50] (15) [right of=14, node distance=\gwidth] {};
    \node [int, fill = gray!50] (16) [right of=15, node distance=\gwidth] {};

    \node [int, fill = gray!50] (21) [below of=11, node distance = \gwidth]{};
    \node [int, fill = gray!50] (22) [right of=21, node distance=\gwidth] {};
    \node [int, fill = gray!50] (23) [right of=22, node distance=\gwidth] {};
    \node [int, fill = gray!50] (24) [right of=23, node distance=\gwidth] {};
    \node [int, fill = gray!50] (25) [right of=24, node distance=\gwidth] {};
    \node [int, fill = gray!50] (26) [right of=25, node distance=\gwidth] {};

    \node [int, fill = gray!50] (31) [below of=21, node distance = \gwidth] {};
    \node [int, fill = gray!50] (32) [right of=31, node distance= \gwidth] {};
    \node [int, fill = green!30] (33) [right of=32, node distance= \gwidth] {};
    \node [int, fill = blue!20, style=dashed] (34) [right of=33, node distance=\gwidth] {};
    \node [int, fill = blue!20, style=dashed] (35) [right of=34, node distance=\gwidth] {};
    \node [int, fill = blue!20, style=dashed] (36) [right of=35, node distance=\gwidth] {};

    \node [int, fill = blue!20, style=dashed] (41) [below of=31, node distance = \gwidth] {};
    \node [int, fill = blue!20, style=dashed] (42) [right of=41, node distance=\gwidth] {};
    \node [int, fill = blue!20, style=dashed] (43) [right of=42, node distance=\gwidth] {};
    \node [int] (44) [right of=43, node distance=\gwidth] {};
    \node [int] (45) [right of=44, node distance=\gwidth] {};
    \node [int] (46) [right of=45, node distance=\gwidth] {};

    \node [int] (51) [below of=41, node distance = \gwidth] {};
    \node [int] (52) [right of=51, node distance=\gwidth] {};
    \node [int] (53) [right of=52, node distance=\gwidth] {};
    \node [int] (54) [right of=53, node distance=\gwidth] {};
    \node [int] (55) [right of=54, node distance=\gwidth] {};
    \node [int] (56) [right of=55, node distance=\gwidth] {};

    \node [int] (61) [below of=51, node distance = \gwidth] {};
    \node [int] (62) [right of=61, node distance=\gwidth] {};
    \node [int] (63) [right of=62, node distance=\gwidth] {};
    \node [int] (64) [right of=63, node distance=\gwidth] {};
    \node [int] (65) [right of=64, node distance=\gwidth] {};
    \node [int] (66) [right of=65, node distance=\gwidth] {};

    \path[-] (11) edge node {} (12);
    \path[-] (12) edge node {} (13);
    \path[-] (13) edge node {} (14);
    \path[-] (14) edge node {} (15);
    \path[-] (15) edge node {} (16);

    \path[-] (21) edge node {} (22);
    \path[-] (22) edge node {} (23);
    \path[-] (23) edge node {} (24);
    \path[-] (24) edge node {} (25);
    \path[-] (25) edge node {} (26);

    \path[-] (31) edge node {} (32);
    \path[-] (32) edge node {} (33);
    \path[-] (33) edge node {} (34);
    \path[-] (34) edge node {} (35);
    \path[-] (35) edge node {} (36);

    \path[-] (41) edge node {} (42);
    \path[-] (42) edge node {} (43);
    \path[-] (43) edge node {} (44);
    \path[-] (44) edge node {} (45);
    \path[-] (45) edge node {} (46);

    \path[-] (51) edge node {} (52);
    \path[-] (52) edge node {} (53);
    \path[-] (53) edge node {} (54);
    \path[-] (54) edge node {} (55);
    \path[-] (55) edge node {} (56);

    \path[-] (61) edge node {} (62);
    \path[-] (62) edge node {} (63);
    \path[-] (63) edge node {} (64);
    \path[-] (64) edge node {} (65);
    \path[-] (65) edge node {} (66);

    \path[-] (11) edge node {} (21);
    \path[-] (12) edge node {} (22);
    \path[-] (13) edge node {} (23);
    \path[-] (14) edge node {} (24);
    \path[-] (15) edge node {} (25);
    \path[-] (16) edge node {} (26);

    \path[-] (31) edge node {} (21);
    \path[-] (32) edge node {} (22);
    \path[-] (33) edge node {} (23);
    \path[-] (34) edge node {} (24);
    \path[-] (35) edge node {} (25);
    \path[-] (36) edge node {} (26);

    \path[-] (31) edge node {} (41);
    \path[-] (32) edge node {} (42);
    \path[-] (33) edge node {} (43);
    \path[-] (34) edge node {} (44);
    \path[-] (35) edge node {} (45);
    \path[-] (36) edge node {} (46);

    \path[-] (51) edge node {} (41);
    \path[-] (52) edge node {} (42);
    \path[-] (53) edge node {} (43);
    \path[-] (54) edge node {} (44);
    \path[-] (55) edge node {} (45);
    \path[-] (56) edge node {} (46);

    \path[-] (51) edge node {} (61);
    \path[-] (52) edge node {} (62);
    \path[-] (53) edge node {} (63);
    \path[-] (54) edge node {} (64);
    \path[-] (55) edge node {} (65);
    \path[-] (56) edge node {} (66);
\end{tikzpicture}
\end{center}
\caption{An illustration of sampling knockoffs for an Ising model on a grid from Section \ref{subsec:ising-grid}. The blue  dashed nodes represent the active variables of the junction tree when variable $X_{3,3}$ (shown in green) is being sampled. Gray nodes indicate variables that have already been sampled, and white nodes indicate variables that have not been sampled yet and are not in the active node of the junction tree.}
\label{fig:grid-active-frontier}
\end{figure}
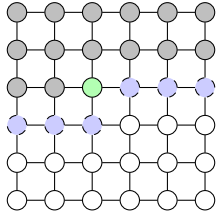



\section{Numerical experiments} \label{sec:simulations}
 We now empirically examine the Metropolized knockoff sampler, beginning with the few models where previously known samplers are available as a baseline, and then continuing on to cases with no previously known samplers. Condensed plots are presented in the main text, while more comprehensive versions can be found in Appendix \ref{app:sim-details}. We provide approximate runtimes with a single-core\footnote{The hardware varies across simulations, but each CPU is between 2.5Ghz and 3.3Ghz.} implementation in either R or Python. All source code is available from {\url{https://github.com/wenshuow/metro}} with interactive computing notebooks at {\url{http://web.stanford.edu/group/candes/metro}} demonstrating the usage of the code and presenting further experimental results. 

\subsubsection*{Measuring knockoff quality}
The {\em mean absolute correlation} (MAC) is a useful measure of knockoff quality for a joint distribution of $( X, \bXk)$:
\begin{equation}\label{eq:mac-metric}
\text{MAC}(\law( X, \bXk)) := \frac1p\sum_{j=1}^p  \abs{\text{cor}(X_j, \Xk_j)}.
\end{equation}
We will use this as our measure of knockoff quality in our simulation experiments. Lower values of $\text{MAC}$ are preferred. Let $\bm\Gamma$ be the correlation matrix of $( X,\bXk)$; pairwise exchangeability implies $\bm\Gamma$ is of the form \eqref{eq:knockoff-cov}. The $\MAC$ is then $\frac1p\sum_{j=1}^p|1-s_j|$. Since $\bm\Gamma = \bm\Gamma( s)$ has to be positive semidefinite, a lower bound on the $\MAC$ achievable by any knockoff-generation algorithm for a given distribution is the optimal value of the program 
\begin{equation}
\label{eq:sdp}
\min_{ s}\sum_{j=1}^p|1-s_j|, \text{ subject to }\bm\Gamma( s)\succeq 0.
\end{equation}
This minimization problem can be solved efficiently with semidefinite programming \citep{barber2015controlling}; we call the solution the \emph{SDP lower bound} for the MAC. 
This lower bound can be achieved for Gaussian distributions \citep{candes2018panning}. Valid knockoffs, however, must match {\em all} moments, not just the second moments, so this lower bound is not expected to be achievable in general; still it provides a useful goalpost in our simulations.

\subsection{Models with previously known knockoff samplers}

\subsubsection{Gaussian Markov chains}
\label{subsubsec:gmc}
We first apply our algorithm to Gaussian Markov chains and compare with the SDP Gaussian knockoffs, whose MAC achieves the SDP lower bound exactly, and SCIP knockoffs, both from \citet{candes2018panning}. 
We take $p=500$ features such that $X_1\sim\mathcal N(0,1)$ and $X_{j+1} \mid  X_{1:j}\sim\mathcal N(\rho_jX_j,1-\rho_j^2)$. First, since the model is multivariate Gaussian, the covariance-guided proposal with $s$ computed by the SDP method \eqref{eq:sdp} will be identical to the SDP Gaussian knockoffs, so already a clever implementation of Metro is as good as a method specifically designed for Gaussian distributions, and since both achieve the SDP lower bound, one cannot do better in terms of MAC. 
Thus, we only investigate the MTM-proposals for implementing Metro. Note that the Gaussian knockoffs from \citet{candes2018panning} do not use the Markovian structure of this problem, but instead rely on operations on $2p \times 2p$ matrices, whereas the MTM knockoffs from this work utilize the Markovian structure to achieve time complexity linear in $p$.

The results are presented in Figure \ref{figure:GMC}. Following
Section \ref{subsec:MTM}, we vary the number of proposals and the step
size. We find that choosing the step size for $X_j$ to be proportional
to $\sqrt{1/(\bm\Sigma^{-1})_{jj}}$ gives consistent results across
different sets of $\rho_j$'s. The MTM consistently outperforms the SCIP procedure,
and is reasonably close to the SDP procedure. It is observed that the defaults from Section~\ref{subsec:MTM} of eight
proposals ($m=4$) and $t_j=1.5\sqrt{1/(\bm\Sigma^{-1})_{jj}}$ performs
nearly the best in all settings. Confirming our reasoning in Section
\ref{subsec:MTM}, we find that the performance stabilizes as $m$ grows
and the step size should not be too large or too small, although for
sufficiently large $m$ the MAC is fairly stable to the choice of $t$. In this setting, it takes around $1$ second for MTM to sample one knockoff vector with $m=4$ and $t_j=1.5\sqrt{1/(\bm\Sigma^{-1})_{jj}}$.

\begin{figure}
    \centering
\includegraphics[width = \textwidth]{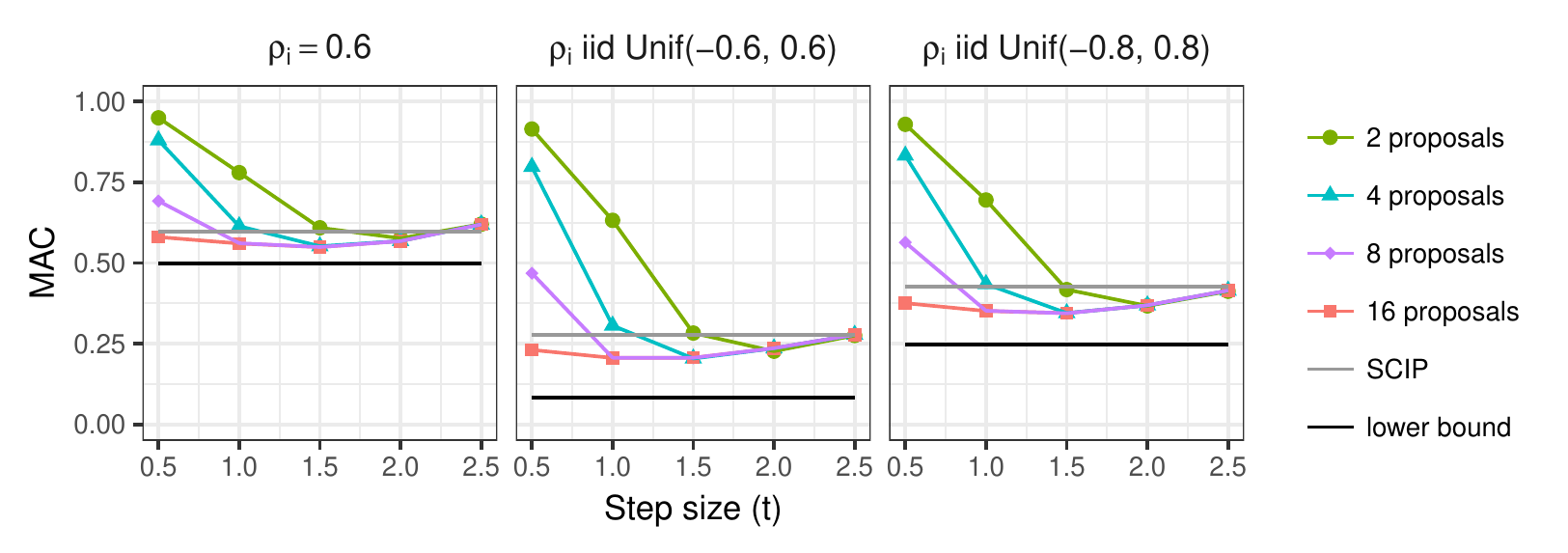}
    \caption{Simulation results for Gaussian Markov chains. The unit of step sizes is $\sqrt{1/(\bm\Sigma^{-1})_{jj}}$. All standard errors are below $0.001$. In this case, the lower bound is achieved by the SDP Gaussian knockoffs, or equivalently, the covariance-guided proposal with an $s$ given by the SDP \eqref{eq:sdp}.} 
    \label{figure:GMC}
\end{figure}

\subsubsection{Discrete Markov chains}
For discrete Markov Chains there is one previously-known knockoff sampler, which is an implementation of the SCIP procedure \citep{sesia2018gene}. We consider here Metro with MTM proposals. (The covariance-guided proposals would require ad-hoc rounding so we do not consider this here.) We take a simple Markov Chain with $K \in \{5, 10\}$ states with uniform initial distribution and transition probabilities $Q(j,j^\prime)$ defined as
\begin{equation}
Q(j,j^\prime)=\frac{(1 - \alpha)^{\abs{j - j^\prime}}}{\sum_{\ell=1}^K(1 - \alpha)^{ \abs{j - \ell}  }}.
\end{equation}
We examine $\alpha$ from 0 (independent coordinates) to $0.5$ (strong dependence between adjacent coordinates), with $p=500$ features.



We examine the MTM methods across a range of values of the tuning parameters, and the results are presented in Figure \ref{fig:dmc-compare}. Full simulation results are given in Appendix \ref{app:sim-details}. Note that the cases with $K=5$ and $\alpha \le 0.15$ are tuned with the additional parameter $\gamma$ from \eqref{eq:acceptanceprob}, as detailed in Appendix \ref{subapp:effect-gamma}. We find that the best-tuned MTM method outperforms the SCIP method and achieves MAC near the lower bound for all dependence levels $\alpha$. It takes around $0.5$ seconds and $0.7$ seconds respectively, to run MTM ($m=4$ and $t=1$) for $K=5$ and $K=10$.


\begin{figure} 
\begin{center}
\includegraphics[width = \textwidth]{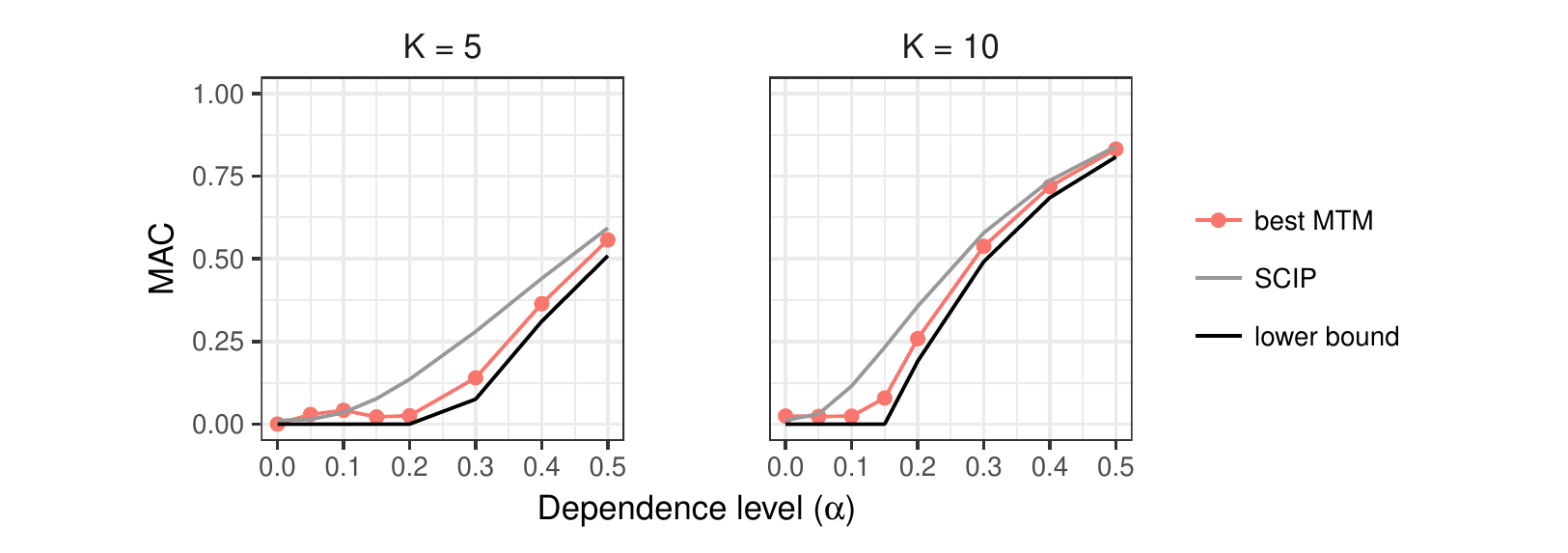}
\end{center}
\caption{A comparison of the MTM procedure for discrete Markov chains with SCIP and the SDP lower bound. All standard errors are below $0.002$.}  \label{fig:dmc-compare}
\end{figure}

\subsection{Models with no previously-known knockoff sampler}

\subsubsection{Heavy-tailed Markov chains}
\label{subsubsec:htmc}
As an example of a heavy-tailed distribution, we consider a Markov chain with $t$-distributed tails.
\begin{equation}
X_1=\sqrt{\frac{\nu-2}\nu}Z_1, \quad X_{j+1}=\rho_j X_j + \sqrt{1-\rho_j^2}\sqrt{\frac{\nu-2}\nu}Z_{j+1}, \quad Z_j\simiid t_\nu,
\label{eq:t-dist}
\end{equation}
for $j=1,\dots,p=500$ where $t_\nu$ represents the Student's $t$-distribution with $\nu>2$ degrees of freedom (note this is not a multivariate $t$-distribution). We try both the covariance-guided proposal with $ s$ provided by the SDP method \eqref{eq:sdp} and the MTM proposals. We set $\nu=5$ and use the same $\rho_j$'s as in the Gaussian setting.  As in Section \ref{subsubsec:gmc}, a step size of $1.5 \sqrt{1/(\bm\Sigma^{-1})_{jj}}$ again performs well. The covariance-guided proposals also perform well, although unlike the Gaussian case, there is now a gap between the lower bound and the performance of the covariance-guided proposals. In this setting, it takes around $1.6$ seconds for MTM to sample one knockoff vector with $m=4$ (eight proposals) and $t_j=1.5\sqrt{1/(\bm\Sigma^{-1})_{jj}}$. For the covariance-guided proposals, it takes around $12.5$ seconds for the one-time computation of the parameters (excluding time used for computing $s$, which varies depending on the method) and then  $0.3$ seconds to sample each knockoff vector.

\begin{figure}[h]
\centering
\includegraphics[width = \textwidth]{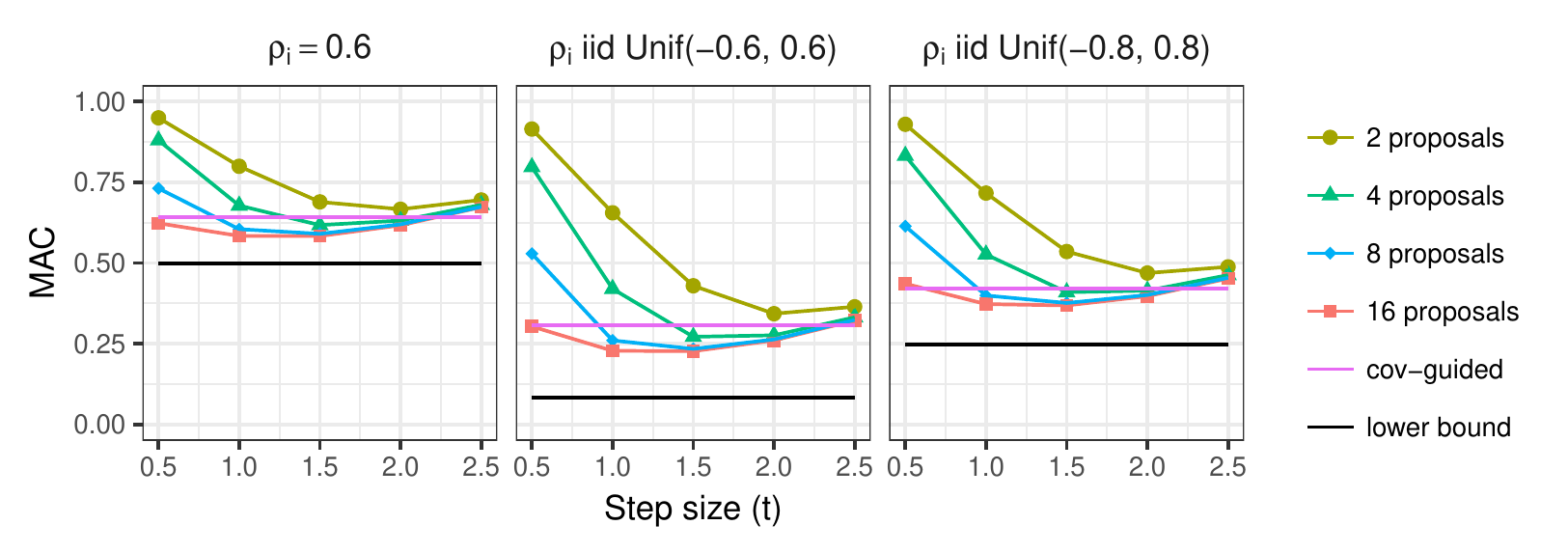}
    \caption{Simulation results for the $t$-distributed Markov chains. The unit of step sizes is $\sqrt{1/(\bm\Sigma^{-1})_{jj}}$. All standard errors are below $0.001$.}
    \label{figure:NGMC}
\end{figure}

\subsubsection{Asymmetric Markov chains}
\label{subsubsec:amc}
As an example of asymmetric, continuous distributions, we take a standardized equal mixture of Gaussian and exponential random variables and then form a Markov chain. Explicitly,
\begin{equation*}
Z_j\simiid \frac{I\cdot \mid  Y_\text{G} \mid  -(1-I)\cdot Y_\text{E}- \mu}{\sigma} \ \text{ for } j=1,\dots,p=500,
\end{equation*}
where $Y_\text{G}\sim\mathcal N(0,1)$, $Y_\text{E}\sim\Expo(1)$ and $I\sim\Bern(1/2)$ are independent. The parameters $\mu$ and $\sigma$ are chosen so that $Z_j$ has mean $0$ and variance $1$. We then take
\begin{equation*}
X_1=Z_1, \quad X_{j+1}=\rho_j X_j + \sqrt{1-\rho_j^2}Z_{j+1} \ \text{ for } j=2,\dots,p.
\end{equation*}
We examine both the covariance-guided proposal with $s$ provided by the SDP \eqref{eq:sdp} and the multiple-try proposals. We use the same $\rho_j$'s as in the Gaussian setting. As in the previous case, $m=4$ (eight proposals) and $t_j=1.5\sqrt{1/(\bm\Sigma^{-1})_{jj}}$ performs essentially as well as any other MTM parameter choices, and significantly outperforms the covariance-guided proposals. The timing results are the same as in the heavy-tailed Markov chains.

\begin{figure}
\centering
\includegraphics[width = \textwidth]{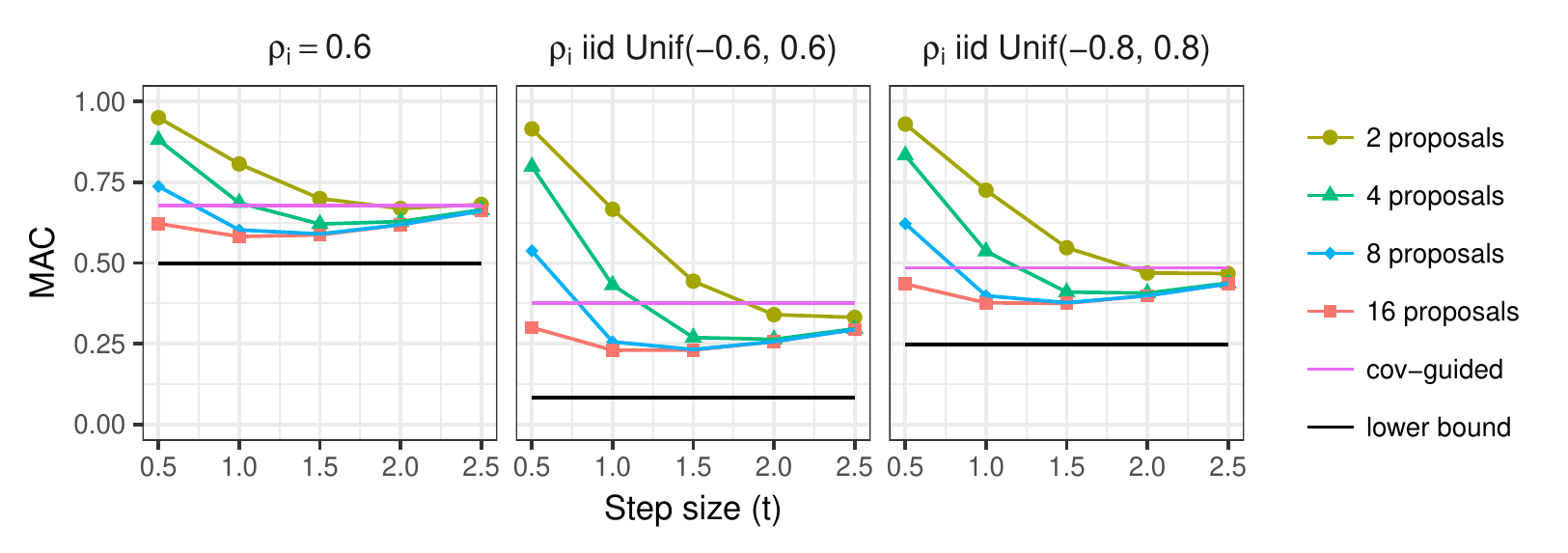}
    \caption{Simulation results for the asymmetric Markov chains. The unit of step sizes is $\sqrt{1/(\bm\Sigma^{-1})_{jj}}$. All standard errors are below $0.001$.}
    \label{figure:NEMC}
\end{figure}

\subsubsection{Ising model}
\label{subsubsec:ising-sims}
In this section, we consider an Ising model over a square grid \eqref{eq:ising-pmf}. We generate knockoffs with the method for discrete random variables from Section \ref{subsec:rejection-free} combined with the divide-and-conquer technique, the combination of which was described for Ising models in Section \ref{subsec:ising-grid}; no other exact knockoff samplers are known for the Ising model. Although our sampling procedures for the Ising model do not explicitly use the Metropolis--Hastings step, as explained in Section \ref{subsec:rejection-free}, we will refer to the sampler as ``Metro'' in this section for simplicity.

First, we take a $10 \times 10$ grid and set all $\beta_{i,j,i^\prime,j^\prime} = \beta_0$ and all $\alpha_{i,j} = 0$. The results are presented in Figure \ref{fig:ising-scep-results}. The left panel shows 
how the MAC increases---or, the quality decreases---
as the dependence between adjacent variables---$\beta_0$---increases.
We see that the procedure is close to the lower bound for large $\beta_0$. In the middle panel, we plot $\text{cor}(X_{j,k}, \Xk_{j,k})$ across different coordinates $(j,k)$. We see that on the edges of the grid, especially on the corners, knockoffs have lower correlation with their original counterparts. These variables are less determined by the values of the rest of the grid, so this is expected. In this setting, it takes about 12 seconds to sample a knockoff.
\begin{figure}[h]
\centering
\includegraphics[width = \textwidth]{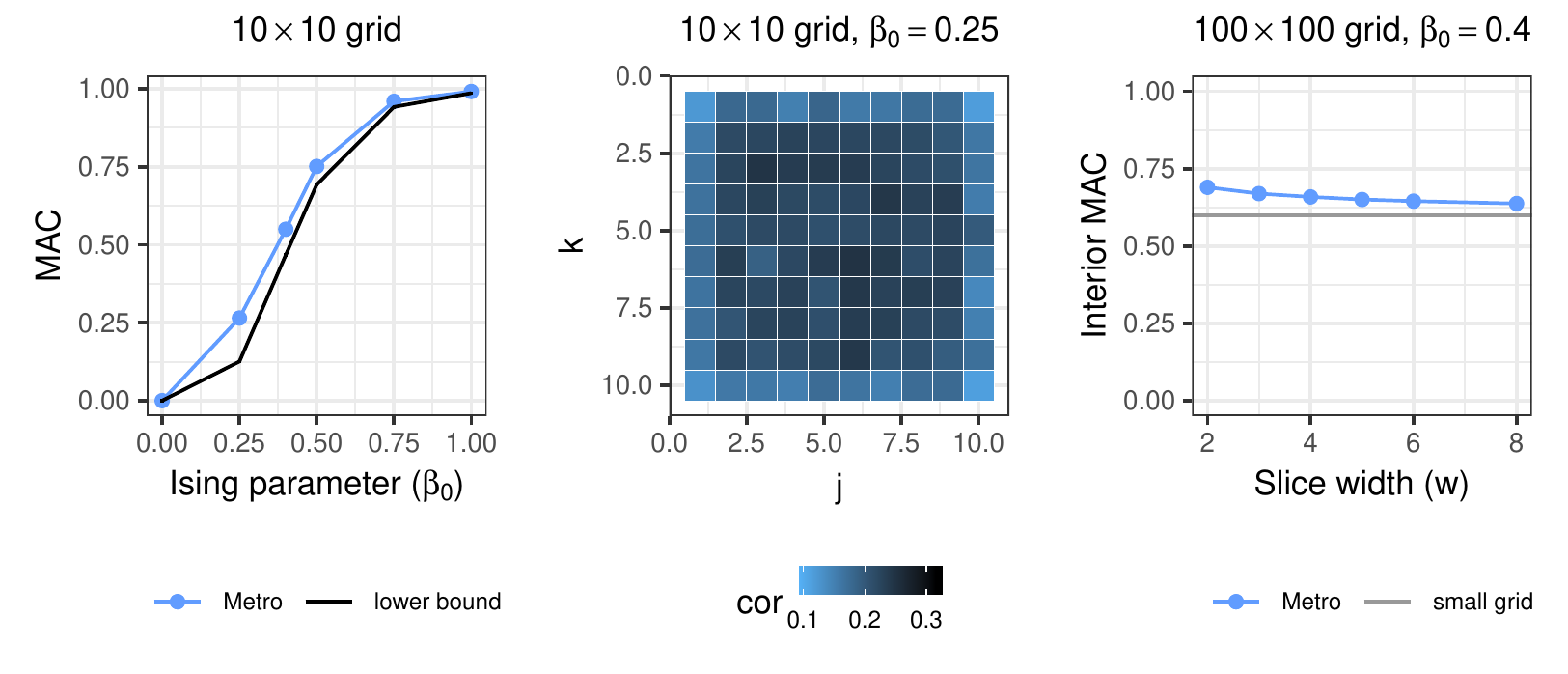}
\caption{Results of the Ising model experiments. All standard errors in the line plots are less than $0.005$.} \label{fig:ising-scep-results}
\end{figure}

Next, we demonstrate the divide-and-conquer technique from Section \ref{subsec:high-width}. Here we consider the Ising model from above on a $100 \times 100$ grid, for a total dimension of $10,000$. The $100 \times 100$ grid has treewidth $100$, so Metro would not be tractable without the the divide-and-conquer technique. We divide the graph into subgraphs of width $w$, by fixing entire columns as in Figure \ref{fig:grid-seperator}. To measure the effect of the slicing, we compute the MAC on the interior points and compare this to the MAC of the interior points of a smaller grid for a procedure without slicing, see Appendix \ref{subapp:ising-sims} for details. We find that the quality of the knockoffs increases as we take larger slices, as expected. Furthermore, even modest values of $w$ such as $w=5$ result in a procedure that achieves a MAC close to that of the baseline. Recall that the complexity of Metro scales as $2^w$, so fixing $w=5$ dramatically reduces the computation time compared to $w=100$. With $w=5$, it takes about 2.5 minutes to generate one knockoff for the $100 \times 100$ grid.

\subsubsection{Gibbs measure on a grid}
Lastly, we demonstrate the MTM proposals simultaneously with the junction tree techniques for complex dependence structure. Consider a Gibbs measure on $\{1,\dots,K\}^{d \times d}$, with a probability mass function
\begin{equation*}
\p(X) = \frac{1}{Z(\beta_0)} \exp\left(-\beta_0\sum_{\substack{s, t \in \mathcal I \\\|s-t\|_1=1}}(x_s-x_t)^2 \right),\quad\mathcal I = \{(i_1,i_2) : 1\le i_1,i_2 \le d\},
\end{equation*}
and note that like the Ising model, this density factors over the grid. For our experiment, we take a $10\times10$ grid and examine different dependence levels $\beta_0$ with $K=20$ possible states for each variable. We apply Metro with the MTM proposals and the divide-and-conquer technique on the grid, tuning the procedure across a range of parameters as detailed in Appendix \ref{app:sim-details}. The condensed results are given in Figure \ref{fig:gibbs}. We do not know of another knockoff sampler in this setting. Having said this, we observe that our procedure has MAC close to the lower bound. We also observe that in the case where $w=3$, with as few as two proposals, our procedure performs well and takes about half a second to generate a knockoff copy; when we increase the number of proposals to ten, the compute time is around $2$ minutes. When $w$ is set to $5$, the slowest setting is $m=t=1$, which takes less than $4$ minutes.

\begin{figure}[h]
\begin{center}
\includegraphics[height = 2.3in]{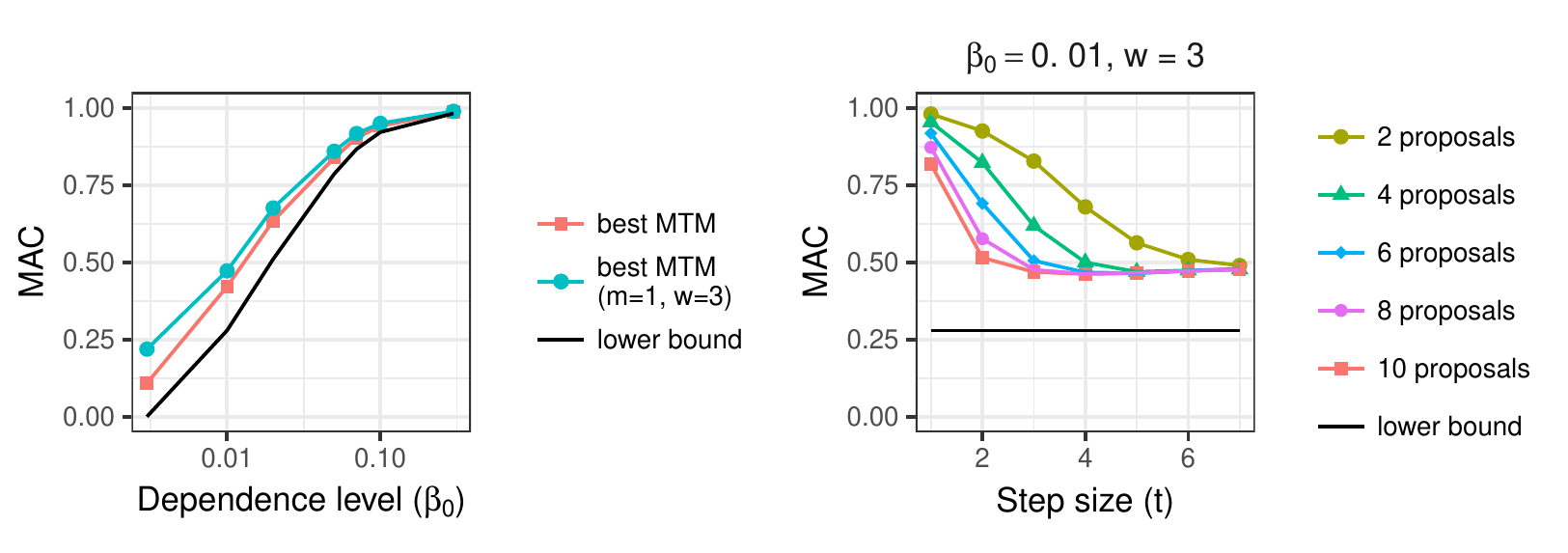}
\end{center}
\caption{Results of the Gibbs measure experiments. All standard errors are below $0.002$. In the left panel, $\beta_0$ is shown in logarithmic scale.}  \label{fig:gibbs}
\end{figure}


\section{Discussion}
\label{sec:discussion}
This paper introduced a sequential characterization of all valid knockoff-generating procedures and used it along with ideas from MCMC and graphical models to create Metropolized knockoff sampling, an algorithm which generates valid knockoffs in complete generality with access only to $X$'s unnormalized density. Although we proved in Theorem~\ref{theorem:timecomp} that no algorithm (including Metro) can sample exact knockoffs \emph{efficiently} for arbitrary $X$ distributions, we characterized one way out of this impossibility result: conditional independence structure in $X$. An interesting future direction would be to establish other sufficient conditions on a model family that would allow one to sample knockoffs efficiently. Another way out of the lower bound in  Theorem~\ref{theorem:timecomp} is to forgo exact knockoffs and settle for approximations. Although this arguably is a tall order, it would be interesting to establish theoretical guarantees on the approximation quality of these or other approximate knockoff constructions, and better understand the tradeoff between knockoff approximation quality and time complexity.

\section*{Acknowledgements}
E.~C.~was partially supported by the Office of Naval Research under grant N00014-16- 1-2712, by the National Science Foundation via DMS 1712800, and by a generous gift from TwoSigma. S.~B.~was supported by a Ric Weiland Graduate Fellowship.  S.~B.~and E.~C.~would like to thank Yaniv Romano and Matteo Sesia for useful comments on an early version of this work. L.~J.~and W.~W.~would like to thank Jun Liu for fruitful discussions on MCMC.

\bibliography{refs}{}

\appendix

\section{Junction tree lemmas}
\label{app:lemmas}

This section includes several important lemmas which will be used in some proofs in the appendix.
\begin{lemma}
\label{lemma:var-ordering}
If the variables are ordered by Algorithm~\ref{alg:jt-order}, then for each $1\le j\le p$, any node in the junction tree that contains $j$ is an element of the set $\{V_1,V_2,\dots,V_j\}$. In addition, if $j\in V_k$ for some $k>j$, then $V_k=V_j$.
\end{lemma}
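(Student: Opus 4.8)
The plan is to prove both claims directly from the structure of Algorithm~\ref{alg:jt-order}, invoking the running intersection property of the junction tree at the key moment. Recall that Algorithm~\ref{alg:jt-order} repeatedly strips leaves off the tree; when a leaf $V$ with (unique) neighbor $V'$ in the current active tree is removed, the elements of $V\setminus V'$ are appended to $J$ (all of $V$ if $V$ is the last node). Thus each index $j$ is appended exactly once, and I write $V_j$ for the leaf node being processed at the moment $j$ is appended; I also let $V_j'$ be its neighbor in the active tree at that time (undefined if $V_j$ is the last node). The footnote already asserts the basic invariant that when a node is removed, no index already in $J$ remains in any node of $T_\text{active}$; I would first restate and justify this invariant carefully, since both parts of the lemma rest on it. It follows from the running intersection property: if $j$ was appended when some earlier leaf $W$ was stripped, then $j\in W$, and $j\notin W'$ where $W'$ was $W$'s neighbor; if $j$ also belonged to some node $U$ still active after $W$'s removal, the unique path in the original tree $T$ from $W$ to $U$ would pass through $W'$ (since $W$ was a leaf, its only edge is to $W'$), so by running intersection $j\in W'$, a contradiction.

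For the first claim, fix $j$ and let $V$ be any node of the original junction tree $T$ containing $j$. I want to show $V\in\{V_1,\dots,V_j\}$, i.e.\ $V$ was processed no later than the step at which $j$ was appended. Suppose for contradiction $V$ were processed strictly after that step; then at the moment $j$ was appended (when $V_j$ was stripped), $V$ was still in $T_\text{active}$, and $V$ contains $j$. But the append of $j$ happens precisely when $V_j$ is removed, at which point the invariant tells us no node remaining in $T_\text{active}$ contains $j$ — except that $V_j$ itself is the one being removed and $V$ is a different, still-active node. So $V$ cannot contain $j$, a contradiction. Hence every node of $T$ containing $j$ is removed at step $j$ or earlier, i.e.\ lies in $\{V_1,\dots,V_j\}$.

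For the second claim, suppose $j\in V_k$ for some $k>j$. By the first claim applied to $k$ in place of the generic node — more precisely, since $V_k$ is a node of $T$ containing $j$, the first claim gives $V_k\in\{V_1,\dots,V_j\}$, so $V_k=V_i$ for some $i\le j$. On the other hand $V_k$ is the node processed when $k$ is appended, and $k>j$ means that append happens at or after the $j$-th append, so the node $V_k$ is processed at step $k\ge j$; combined with $V_k=V_i$, $i\le j$, and the fact that distinct steps strip distinct nodes, we get $i=j$ and $k$ is appended at the same step as $j$, i.e.\ $V_k=V_j$. The main obstacle I anticipate is bookkeeping: being careful about the distinction between a node's identity and the step index at which it is processed, and making sure the invariant is phrased so that "the node currently being removed" is correctly excluded. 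Once the invariant from the footnote is nailed down via running intersection, both parts are short.
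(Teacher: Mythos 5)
Your proof is correct and follows essentially the same route as the paper's: both hinge on the running intersection property to show that once the node at which $j$ is appended is removed, no still-active node can contain $j$ (the path from that leaf to any such node would have to pass through its unique remaining neighbor, forcing $j$ into that neighbor and contradicting $j\in V\setminus V'$). The only cosmetic differences are that you isolate this as a standing invariant (the paper's footnoted fact) and then read off both claims, and that in the second claim you write ``$i=j$'' where you really mean that the processing steps of $V_i$, $V_j$, $V_k$ coincide---the conclusion $V_k=V_j$ still follows.
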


\begin{proof}[Proof of Lemma \ref{lemma:var-ordering}]
According to Algorithm~\ref{alg:jt-order}, when a node $V$ is selected, all variables in $V\setminus V'$---here, $V'$ is the unique neighbor of $V$ in the remaining junction tree---are sampled before the next node is selected. Recall that $V_j$ is the selected node when $j$ is sampled; let $k\ge j$ be the last sampled variable when $V_j$ is selected (in this case, we have $V_j=V_{j+1}=\cdots=V_k$ by definition). After $k$ is sampled, $V_j$ is removed, and by Algorithm~\ref{alg:jt-order}, this means $j,j+1,\dots,k$ do not appear in $V_j$'s only remaining neighbor. Now we claim no remaining node contains $j$. Otherwise, if there is a node $V^{*}_j$ which contains $j$ and still remains, by the running intersection property, all nodes on the unique path between $V_j$ and $V^*_j$ contain $j$. This  would imply that $V_j$'s remaining neighbor in the junction tree also contains $j$ since the unique path must pass through $V_j$'s only remaining neighbor. This is a contradiction. Now we know that any node that contains $j$ is some $V_\ell$ with $1\le\ell\le k$. Since $V_j=V_{j+1}=\cdots=V_k$, the lemma follows.
\end{proof}

\begin{lemma}
\label{lemma:nested-V}
If the variables are ordered by Algorithm~\ref{alg:jt-order}, then for any $j>\ell$ such that $j\in\bar V_\ell$, we have $\bar V_\ell\subseteq\bar V_j$.
\end{lemma}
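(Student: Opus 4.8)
The plan is to peel off the easy inclusions, reduce everything to a single claim about where $V_j$ sits in the junction tree $T$, and then settle that claim by examining the moment $V_j$ is deleted in Algorithm~\ref{alg:jt-order}.

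\emph{Reductions.} Since $j>\ell$, $j\notin\{1,\dots,\ell-1\}$, so $j\in\bar V_\ell=\{1,\dots,\ell-1\}\cup V_\ell$ forces $j\in V_\ell$. Because $\{1,\dots,\ell-1\}\subseteq\{1,\dots,j-1\}\subseteq\bar V_j$, it suffices to prove $V_\ell\subseteq\bar V_j$. Fix $m\in V_\ell$. If $m<j$ then $m\in\{1,\dots,j-1\}\subseteq\bar V_j$; if $m=j$ then $m\in V_j\subseteq\bar V_j$; and if $V_\ell=V_j$ the whole inclusion $\bar V_\ell\subseteq\bar V_j$ is immediate since $\ell<j$. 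So assume $V_\ell\neq V_j$ and $m>j$; again if $V_m=V_j$ we are done, so also assume $V_m\neq V_j$. The remaining task is to show $m\in V_j$.

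\emph{Key claim.} I would show that $V_j$ lies on the path from $V_\ell$ to $V_m$ in $T$; since $V_\ell$ and $V_m$ both contain $m$, the running intersection property then gives $m\in V_j$, finishing the proof. To locate $V_j$ on this path, look at the iteration of Algorithm~\ref{alg:jt-order} in which $V_j$ is removed. At that point $V_j$ is a leaf of the active tree with a unique neighbor $V'_j$ there, and since $j$ is appended from $V_j\setminus V'_j$ we have $j\notin V'_j$. Because variables are numbered in append order and $V_\ell,V_j,V_m$ are pairwise distinct, $V_m$ has not yet been removed when $V_j$ is; as the active tree is connected and remains connected after deleting the leaf $V_j$, the residual active tree is a connected subgraph of $T\setminus\{V_j\}$ containing both $V'_j$ and $V_m$, so $V_m$ lies in the component of $T\setminus\{V_j\}$ that contains $V'_j$. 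On the other hand, $V_\ell$ and $V_j$ both contain $j$, so by the running intersection property the neighbor $Q$ of $V_j$ on the $T$-path toward $V_\ell$ also contains $j$; hence $Q\neq V'_j$, and $V_\ell$ lies in a different component of $T\setminus\{V_j\}$ than $V'_j$. Thus $V_\ell$ and $V_m$ are separated by $V_j$ in $T$, i.e., $V_j$ is on the $V_\ell$--$V_m$ path, and we conclude $m\in V_j\subseteq\bar V_j$.

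\emph{Main obstacle.} The delicate part is the key-claim paragraph, specifically two points: that the active tree in Algorithm~\ref{alg:jt-order} is always a connected subtree of $T$ (so that its internal paths are genuine $T$-paths and connectedness transfers to $T\setminus\{V_j\}$), and that the neighbor $Q$ of $V_j$ toward $V_\ell$ is truly distinct from $V'_j$ — this is exactly where the ``append $V\setminus V'$'' rule is used, via $j\notin V'_j$, together with the running intersection property. The reductions and the final appeal to running intersection are routine; Lemma~\ref{lemma:var-ordering} could be invoked to shortcut some bookkeeping about which nodes contain $j$, but the connectedness argument above is self-contained.
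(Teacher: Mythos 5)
Your proof is correct, but it takes a genuinely different route from the paper's. The paper argues dynamically: it maintains the invariant that, at every stage of Algorithm~\ref{alg:jt-order} before $V_j$ is selected, some still-active node contains both $j$ and the target index $m$ (because an unsampled index must lie in the selected leaf's neighbor), and then invokes Lemma~\ref{lemma:var-ordering} to conclude that the last surviving such node can only be $V_j$. You instead make a static, tree-topological argument at the single moment $V_j$ is removed: the residual active tree places $V_m$ in the component of $T\setminus\{V_j\}$ attached to $V'_j$, while the running intersection property applied to $j$ (together with $j\notin V'_j$, which comes from the ``append $V\setminus V'$'' rule) places $V_\ell$ in a different component; hence $V_j$ separates $V_\ell$ from $V_m$, and a second application of the running intersection property, now to $m\in V_\ell\cap V_m$, yields $m\in V_j$. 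Both arguments ultimately rest on the same two ingredients (running intersection and the append rule), but yours trades the paper's induction over the algorithm's execution for a bit more bookkeeping about components of $T\setminus\{V_j\}$; the separation viewpoint is arguably more structural and makes explicit \emph{why} $V_j$ must contain $m$. Two small points worth tightening: the existence of $V'_j$ should be noted to follow from $V_m\neq V_j$ still being active (so the active tree has more than one node when $V_j$ is removed), and ``pairwise distinct'' is slightly more than you need or have established --- only $V_m\neq V_j$ and $V_\ell\neq V_j$ are used, and both are assumptions in force at that point.
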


\begin{proof}[Proof of Lemma~\ref{lemma:nested-V}]
Consider any $k\in\bar V_\ell$. Now we show such a $k$ must be in $\bar V_j$.
The case $k\le j$ is trivial, since $k\in\bar V_j$ by definition.
If $k>j$, then $j,k\in V_\ell$. Assume $V_\ell\ne V_j$; otherwise there is nothing to prove. Before $j$---and, therefore, $k$---are sampled, each time a node $V$ containing $j$ and $k$ (e.g., $V_\ell$) is selected, $j$ and $k$ appear in $V$'s neighbor. By a recursive argument, before $V_j$ is selected, each time the node containing $j$ and $k$ is selected, $j$ does not get sampled and neither does $k$ ($k$ is sampled after $j$). Hence, before $V_j$ is selected, there is always at least one remaining node that contains both $j$ and $k$. By Algorithm~\ref{alg:jt-order} and Lemma~\ref{lemma:var-ordering}, $V_j$ is the last selected node $j$ appears in, so it has to contain both $j$ and $k$ (otherwise no node contains both $j$ and $k$ at this point). This means that $k\in V_j\subseteq\bar V_j$.
\end{proof}

\begin{lemma}
\label{lemma:G-V-relationship}
If the variables are ordered by Algorithm~\ref{alg:jt-order}, then for any $j\ne k$, if $j$ and $k$ are connected in $G$, we have $j\in\bar V_k$ and $k\in\bar V_j$.
\end{lemma}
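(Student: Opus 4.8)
The plan is to reduce to the case $j<k$ and then chain together the two preceding junction-tree lemmas. Since the hypothesis (that $j$ and $k$ are adjacent in $G$) is symmetric in $j$ and $k$, assume without loss of generality that $j<k$ in the ordering produced by Algorithm~\ref{alg:jt-order}. Then $j\in\{1,\dots,k-1\}\subseteq\bar V_k$ holds by definition of $\bar V_k$, so the only real content is the claim $k\in\bar V_j$; and because $k>j$ we have $k\notin\{1,\dots,j-1\}$, so this is in turn equivalent to showing $k\in V_j$.

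First I would invoke property~2 of the junction tree in Definition~\ref{def:junction-tree}: since $(j,k)$ is an edge of $G$, there is a vertex $V^\ast$ of $T$ with $j\in V^\ast$ and $k\in V^\ast$. Because $j\in V^\ast$, Lemma~\ref{lemma:var-ordering} applies and gives $V^\ast=V_\ell$ for some $\ell\le j$. If $\ell=j$ we are already done, since then $k\in V^\ast=V_j\subseteq\bar V_j$. Otherwise $\ell<j$, and I would apply Lemma~\ref{lemma:nested-V}: here $j\in V_\ell\subseteq\bar V_\ell$ and $j>\ell$, so the hypothesis of that lemma is met and we conclude $\bar V_\ell\subseteq\bar V_j$. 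Combining the inclusions, $k\in V_\ell\subseteq\bar V_\ell\subseteq\bar V_j$, which completes the argument, and by symmetry the same reasoning with the roles reversed handles $j\in\bar V_k$ in the (vacuous) case one insists on checking it directly.

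I do not expect a genuine obstacle here; the whole proof is a short deduction once the right reductions are in place. The one point that requires care is that the vertex $V^\ast$ furnished by the junction-tree axioms need not be the node $V_j$ that is active when $j$ is sampled — it is merely \emph{some} node containing both $j$ and $k$ — which is precisely why Lemma~\ref{lemma:nested-V} is needed, to transport membership of $k$ from $\bar V_\ell$ up to $\bar V_j$. The only thing to be vigilant about in writing this up is keeping the ordering convention straight (that $V_m=V_{m+1}=\cdots$ when several variables are appended from the same node, as recorded in Lemma~\ref{lemma:var-ordering}), so that the invocations of both lemmas are literally justified.
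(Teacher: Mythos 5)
Your proof is correct and follows essentially the same route as the paper's: reduce to $k>j$, note $j\in\bar V_k$ is automatic, use property 2 of the junction tree to find a node containing both $j$ and $k$, locate it as some $V_\ell$ with $\ell\le j$ via Lemma~\ref{lemma:var-ordering}, and finish with Lemma~\ref{lemma:nested-V} in the case $\ell<j$. No gaps.
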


\begin{proof}[Proof of Lemma~\ref{lemma:G-V-relationship}]
Without loss of generality, we assume $k>j$, so $j\in\bar V_k$ by definition. Now we show $k\in\bar V_j$ also holds.
By the second property of the junction tree, $k$ co-appears with $j$ at least once in some node $V_\ell$. But $j$ does not appear in any node after $V_j$ by Lemma~\ref{lemma:var-ordering}, so there is some $\ell\le j$ such that $\{j,k\}\subseteq V_\ell$. If $j=\ell$, then we already have $k\in V_\ell=V_j\subseteq\bar V_j$; otherwise, $j>\ell$, so by Lemma~\ref{lemma:nested-V}, $k\in V_\ell\subseteq\bar V_\ell\subseteq\bar V_j$.
\end{proof}

\section{Covariance-guided proposals}
\label{app:cov-guide}

This section includes details on the covariance-guided proposal introduced in Section~\ref{subsesc:cov-guided-prop}, and proofs of its faithfulness and compatibility. See Appendix \ref{app:mat-inv} for details on how to do the necessary linear algebra computations for the covariance-guided proposals efficiently.

We first recall the definition of the covariance-guided proposals.  Let $Z$ be a $2p$-dimensional vector drawn from $\mathcal{N}((\mu,\mu),{\bm \Gamma(s)})$, where $\bm\Gamma(s)$ is as in \eqref{eq:knockoff-cov}. 
Let $q_j$ be the probability density function of $Z_{p+j}$ conditional on $Z_{1:(p+j-1)}$.
The proposal distribution at the $j$th step $q_j(x_j^* \mid x_{1:p},x^*_{1:(j-1)})$, is defined to be the conditional density of $Z_{p+j}$ given $Z_{1:(p+j-1)}=(x_{1:p},x^*_{1:(j-1)})$.\footnote{It seems equally plausible to use $ x_{1:p}$ and $\tilde{ x}_{1:(j-1)}$ (i.e., $(x_1,x_2,\dots,x_p,\tilde x_1,\tilde x_2,\dots,\tilde x_{j-1})$ instead of equation \eqref{eq:cov-guide-condition}). However, we find that its empirical performance is not as good as the version presented in the main text.} 

\begin{prop}
\label{prop:cov-guieded-faithful}
The covariance-guided proposals are faithful (see Section~\ref{subsec:mhscep})
in that the proposal distribution at the $j$th step depends on $(X_k,\Xk_k)$ in a symmetric way for $1\le k<j$.
\end{prop}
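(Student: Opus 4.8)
The plan is to unwind faithfulness into one concrete invariance of $q_j$, and then obtain that invariance from the swap-symmetry of $\bm\Gamma(s)$ together with the permutation-equivariance of Gaussian conditionals. Recall that $q_j(\cdot\mid x_{1:p},x^*_{1:(j-1)})$ is the conditional density of $Z_{p+j}$ given $Z_{1:(p+j-1)}=(x_{1:p},x^*_{1:(j-1)})$, where $Z\sim\mathcal N((\mu,\mu),\bm\Gamma(s))$; note that $q_j$ is a function of the already-observed original variables and the already-drawn proposals, not of the previously sampled knockoffs. Since $X^*_k=\tilde X_k$ exactly when the $k$th proposal is accepted (and $\tilde X_k=X_k$ is forced on rejection), the ``symmetry in the $k$th pair'' demanded by faithfulness --- and the property actually invoked in the proof of Corollary~\ref{coro:alg1} through the auxiliary variables $(U_k,Z_k)$ --- amounts to: for each $1\le k<j$, the density $q_j$ is unchanged when its two conditioning values $x_k$ and $x^*_k$ are interchanged. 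That is what I would establish.

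First I would record that $\bm\Gamma(s)$ from \eqref{eq:knockoff-cov} is invariant under the transposition $\pi_k$ of $\{1,\dots,2p\}$ that exchanges indices $k$ and $p+k$ (this is precisely the swap-symmetry that makes $\bm\Gamma(s)$ the admissible form for $\Cov(X,\tilde X)$), and that the mean $(\mu,\mu)$ is trivially $\pi_k$-invariant; hence $Z\eqd Z_{\pi_k}$ for every $k$. Then I would use the elementary equivariance of Gaussian conditional laws: if a joint law is invariant under a permutation that fixes the index of the variable being conditioned and permutes only among the conditioning indices, then the conditional law is unchanged under the corresponding permutation of the conditioning values --- and this holds even when $\bm\Gamma^{(j)}_{11}$ is singular and the pseudoinverse is used, for conditioning values in the support.

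Finally, for fixed $1\le k<j$ both indices $k$ and $p+k$ lie in $\{1,\dots,p+j-1\}$ (since $k\le j-1$ forces $p+k\le p+j-1$) while $p+j$ is fixed by $\pi_k$; applying the equivariance statement to $Z$ and $\pi_k$ gives that the conditional density of $Z_{p+j}$ given $Z_{1:(p+j-1)}=v$ equals the one given $Z_{1:(p+j-1)}=\pi_k(v)$, and with $v=(x_{1:p},x^*_{1:(j-1)})$ the action of $\pi_k$ is exactly to swap $x_k$ with $x^*_k$. This is the claimed symmetry, so faithfulness follows. I expect the only genuinely delicate point to be the bookkeeping in the first paragraph that ties ``$q_j$ symmetric in $(x_k,x^*_k)$'' to the knockoff-symmetry condition \eqref{eq:knockoff-symmetry-condition} as tracked by the auxiliary variables $(U_k,Z_k)$ in Corollary~\ref{coro:alg1}; the remaining content --- swap-invariance of $\bm\Gamma(s)$ and equivariance of Gaussian conditionals --- is routine.
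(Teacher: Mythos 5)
Your proof is correct and follows essentially the same route as the paper's: the paper likewise reduces to the observation that on rejection $x_k=\tilde x_k$ so the swap is vacuous, and on acceptance the invariance follows from ``the structure of the covariance matrix,'' which is exactly the $\pi_k$-invariance of $\bm\Gamma(s)$ and the resulting equivariance of the Gaussian conditional that you spell out. Your version merely makes explicit the one-line covariance argument the paper leaves implicit, and correctly ties the $(x_k,x^*_k)$ swap to the knockoff-symmetry bookkeeping.
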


\begin{proof}[Proof of Proposition \ref{prop:cov-guieded-faithful}]
Consider the proposal at step $j$. To see how the proposal distribution depends on $(X_k,\Xk_k)$ for $k<j$, note that we are using the distribution of $Z_{j+p}$ conditional on
\begin{equation}
Z_{1:(p+j-1)} = (x_1,x_2,\dots,x_p,\ind_{x_1=\tilde x_1}x^*_1+\ind_{x_1\ne\tilde x_1}\tilde x_1,\dots,\ind_{x_{j-1}=\tilde x_{j-1}}x^*_{j-1}+\ind_{x_{j-1}\ne\tilde x_{j-1}}\tilde x_{j-1}).
\label{eq:cov-guide-condition}
\end{equation}
We only need to check if the proposal density changes when swapping $x_k$ and $\tilde x_k$ for $k\le j-1$. Note that if we rejected at the $k$th step, $x_k=\tilde x_k$, so there is no effect of swapping the two; if we accepted at the $k$th step, the dependence is symmetric in $(x_k,\tilde x_k)$ because of the structure of the covariance matrix.
\end{proof}

\begin{prop}
\label{prop:cov-guided-compatible}
If $(\Sigma^{-1})_{ij} = 0$ whenever $i \ne j$ and $(i,j)$ is not an edge in the graph $G$, then the covariance-guided proposals are compatible with $G$ (see Definition \ref{def:consistent}).
\end{prop}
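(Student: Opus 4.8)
The plan is to unwind the construction of the covariance‑guided proposal into a closed‑form Gaussian conditional, identify exactly which coordinates of $X$ it depends on, and finish with a combinatorial fact about the ordering produced by Algorithm~\ref{alg:jt-order}. Concretely, $q_j(\cdot\mid x_{1:p},x^*_{1:(j-1)})$ is by definition the conditional law of $Z_{p+j}$ given $Z_{1:(p+j-1)}=(x_{1:p},x^*_{1:(j-1)})$ for $Z\sim\mathcal{N}((\mu,\mu),\bm\Gamma(s))$, and in particular it is a function of $x_{1:p}$ and $x^*_{1:(j-1)}$ alone (no $\tilde x_k$ appears). By Definition~\ref{def:consistent} it therefore suffices to show that this conditional law depends on $x_{1:p}$ only through the coordinates indexed by $\bar V_j$. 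I would first treat the case $\bm\Gamma(s)\succ 0$ with all $s_i>0$ and recover the general case (possibly singular $\bm\Gamma(s)$, using a pseudoinverse) at the very end by continuity of Gaussian conditionals.

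Write $\Omega=\Sigma^{-1}$, $D=\diag(s)$, and $X=Z_{1:p}$. The Gaussian $Z$ has the same law as $\bigl(X,\;\mu+(I-D\Omega)(X-\mu)+\varepsilon\bigr)$, with $X\sim\mathcal{N}(\mu,\Sigma)$ and $\varepsilon\sim\mathcal{N}(0,2D-D\Omega D)$ independent of $X$ (the standard second‑order knockoff construction; one checks the covariance is $\bm\Gamma(s)$ as in \eqref{eq:knockoff-cov}). Hence, conditionally on $X$, the vector $(Z_{p+1},\dots,Z_{p+j})$ is Gaussian with mean $(m_1(X),\dots,m_j(X))$, where $m_i(X)=X_i-s_i[\Omega(X-\mu)]_i$, and covariance $K:=(2D-D\Omega D)_{1:j,1:j}$, which does not depend on $X$. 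Conditioning further on $(Z_{p+1},\dots,Z_{p+j-1})$ shows that $q_j$ is Gaussian with $X$‑free variance and mean $m_j(X)-\sum_{\ell<j}c_{j\ell}\bigl(Z_{p+\ell}-m_\ell(X)\bigr)$, where $c_{j\ell}=(K^{-1})_{j\ell}/(K^{-1})_{jj}$. So $q_j$ depends on $X$ only through the map $X\mapsto m_j(X)+\sum_{\ell<j}c_{j\ell}\,m_\ell(X)$.

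By the hypothesis $\Omega_{ik}=0$ for non‑edges, $m_i$ depends on $X$ only through $X_{\{i\}\cup N_G(i)}$, where $N_G(i)$ is the set of $G$‑neighbours of $i$. Moreover $K=D_{1:j}\,M\,D_{1:j}$ with $M:=2D_{1:j}^{-1}-\Omega_{1:j,1:j}$; here $M\succ0$ (since $2D-D\Omega D\succ0$) and $M$ has exactly the sparsity pattern of the induced subgraph $G[\{1,\dots,j\}]$. Thus $M$ is the precision matrix of a Gaussian graphical model on $G[\{1,\dots,j\}]$, so $M^{-1}$ is its covariance matrix, and covariances vanish across connected components; hence $(K^{-1})_{j\ell}=(s_js_\ell)^{-1}(M^{-1})_{j\ell}=0$ unless $j$ and $\ell$ lie in the same connected component $S_j$ of $G[\{1,\dots,j\}]$. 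Consequently $q_j$ depends on $X$ only through $X_{S_j\cup N_G(S_j)}$.

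It remains to prove the purely combinatorial statement $S_j\cup N_G(S_j)\subseteq\bar V_j$, and this is the main obstacle. The elements of this set that are $\le j$ lie automatically in $\{1,\dots,j-1\}\cup V_j=\bar V_j$ (using $j\in V_j$), so the content is that every $m>j$ which is $G$‑adjacent to some $\ell\in S_j$ lies in $V_j$ --- equivalently, that $V_j\setminus\{1,\dots,j\}$ separates the component $S_j$ from $\bar V_j^{\,c}$ in $G$ (note this is slightly stronger than what one gets from the naive claim that $V_j\setminus\{1,\dots,j\}$ separates all of $\{1,\dots,j\}$ from $\bar V_j^{\,c}$, which can fail). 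I would prove it by strong induction on $j$ from Lemmas~\ref{lemma:var-ordering}--\ref{lemma:G-V-relationship}: a shortest $G[\{1,\dots,j\}]$‑path from $j$ to $\ell$ reaches $j$ through a neighbour $c<j$ of $j$ with $c\in S_j$, for which Lemmas~\ref{lemma:G-V-relationship} and~\ref{lemma:nested-V} give $\bar V_c\subseteq\bar V_j$; one then deletes $j$ from $S_j$ and applies the inductive hypothesis to the component of $S_j\setminus\{j\}$ containing $\ell$, whose vertices are all $\le j-1$, to control the remainder of the path and conclude $m\in V_j$. Granting this, $q_j$ depends on $X$ only through $X_{\bar V_j}$, i.e.\ it is compatible with $G$; the case $\bm\Gamma(s)\not\succ0$ follows from the nonsingular case by the continuity argument noted above. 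The probabilistic part here is routine Gaussian conditioning once the generative representation is in hand; the real work --- and the step I expect to require the most care --- is arranging the induction in the combinatorial step so that the relevant sets $\bar V_\cdot$ nest in the correct direction along the chosen path.
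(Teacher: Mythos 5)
Your argument is sound but follows a genuinely different route from the paper's. The paper never computes the Gaussian conditional explicitly: it inducts on $j$, observes that $(X_j,X_j^*)\mid \Xno{j},X^*_{1:(j-1)}$ is bivariate normal with an $X$-free conditional covariance and common mean $\e[X_j\mid\Xno{j},X^*_{1:(j-1)}]$, and then expands that conditional density as $\p(x_j\mid\xno{j})\prod_{\ell<j}\p(x^*_\ell\mid x_j,\xno{j},x^*_{1:(\ell-1)})$ over its normalizing integral; the factors with $j\notin V_\ell$ cancel between numerator and denominator, and the surviving ones are handled by the inductive hypothesis together with Lemmas~\ref{lemma:nested-V} and~\ref{lemma:G-V-relationship}. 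This spreads the combinatorics across the induction and avoids any precision-matrix analysis. You instead write $\Xk=\mu+(I-D\Omega)(X-\mu)+\varepsilon$ with $\varepsilon\sim\NN(0,2D-D\Omega D)$, read off the proposal mean in closed form, and use the block structure of $M^{-1}$ to identify the exact dependence set $S_j\cup N_G(S_j)$, concentrating all the combinatorics into the single inclusion $S_j\cup N_G(S_j)\subseteq\bar V_j$. Your route yields a sharper conclusion (that set can be a strict subset of $\bar V_j$) at the price of a harder standalone combinatorial lemma.

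That lemma is true and your induction closes, but not quite as sketched: the inductive hypothesis is indexed by a single vertex, and the component $S'$ of $S_j\setminus\{j\}$ containing $\ell$ need not equal $S_{c}$ for the neighbour $c$ of $j$ you single out, since the path from $\ell$ to $c$ may pass through vertices larger than $c$. The repair is to invoke the hypothesis at $c^*:=\max S'$, noting $S'\subseteq S_{c^*}$; applied twice it gives $m\in V_{c^*}$ (with $\ell'=\ell$, $m'=m$) and $j\in V_{c^*}$ (with $\ell'=c$, $m'=j$), whence $\bar V_{c^*}\subseteq\bar V_j$ by Lemma~\ref{lemma:nested-V} and $m\in V_j$ because $m>j$. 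The tempting shortcut ``$\ell\in S_j$ and $\ell<j$ imply $j\in V_\ell$'' is false in general (a five-vertex path ordered by Algorithm~\ref{alg:jt-order} already gives a counterexample), so this two-stage use of the hypothesis is genuinely needed. Finally, for the degenerate case your continuity argument should perturb $s$ while keeping $\Omega$ fixed—regularizing $K$ by $\epsilon I$ would destroy the sparsity of its inverse on which your component argument rests; with that caveat the limiting step is fine, and the paper's own proof is no more careful on this point.
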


\begin{proof}[Proof of Proposition \ref{prop:cov-guided-compatible}]
We wish to show, 
$\mathcal L(X_j^*\mid X, X_{1:j-1}^*)$ only depends on $X_k$ if $k\in\bar V_{j}$. 
To do this, we use induction over $j$. For $j=1$, since $X_1^*\mid X$ is a Gaussian distribution whose conditional variance does not depend on $X$, it suffices to show that $\e[X_1^*\mid X]$ depends on $X_k$ only if $k\in\bar V_1$. Note that
\begin{equation*}
(X_1,X_1^*)\mid \Xno{1}\sim\mathcal N\left((\mu^\text{cond},\mu^\text{cond}),\bm\Sigma^\text{cond}\right),\quad\mu^\text{cond}=\e[X_1\mid \Xno{1}],\ \  \bm\Sigma^\text{cond}=\Cov\left((X_1,X^*_1)\mid\Xno{1}\right),
\end{equation*}
Since $\bm\Sigma^\text{cond}$ does not depend on $X$, $\e[X_1^*\mid X_1,\Xno{1}]$ is a linear function of $X_1$ and $\mu^\text{cond}$. It is easy to see that $\mu^\text{cond}=\e[X_1\mid \Xno{1}]$ depends on $X_k$ only if $k$ and $1$ co-appear in some node of the junction tree. By Lemma~\ref{lemma:var-ordering}, this node can only be $V_1$ and, therefore, $k\in V_1$. Thus the base case $j=1$ holds.

Suppose the claim is true up to $j-1$. By the same argument on the conditional distribution $(X_j,X_j^*)\mid \Xno{j},X^*_{1:(j-1)}$, it suffices to show that $\e[X_j\mid\Xno{j},X^*_{1:(j-1)}]$ only depends on $X_k$ if $k\in\bar V_j$. We have 
\begin{equation*}
\begin{aligned}
\p(x_j\mid\xno{j},x^*_{1:(j-1)})&=\frac{\p(x_j,\xno{j},x^*_{1:(j-1)})}{\int_{\rr}\p(x'_j,\xno{j},x^*_{1:(j-1)})\di{x'_j}}\\
&=\frac{\p(x_j,\xno{j})\prod_{\ell=1}^{j-1}\p(x^*_{\ell}\mid x_j,\xno{j},x^*_{1:(\ell-1)})}{\int_{\rr}\p(x'_j,\xno{j})\prod_{\ell=1}^{j-1}\p(x^*_{\ell}\mid x'_j,\xno{j},x^*_{1:(\ell-1)})\di{x'_j}}\\
&=\frac{\p(\xno{j})\p(x_j\mid\xno{j})\prod_{\ell=1}^{j-1}\p(x^*_{\ell}\mid x_j,\xno{j},x^*_{1:(\ell-1)})}{\int_{\rr}\p(\xno{j})\p(x'_j\mid\xno{j})\prod_{\ell=1}^{j-1}\p(x^*_{\ell}\mid x'_j,\xno{j},x^*_{1:(\ell-1)})\di{x'_j}}\\
&=\frac{\p(x_j\mid\xno{j})\prod_{\ell=1}^{j-1}\p(x^*_{\ell}\mid x_j,\xno{j},x^*_{1:(\ell-1)})}{\int_{\rr}\p(x'_j\mid\xno{j})\prod_{\ell=1}^{j-1}\p(x^*_{\ell}\mid x'_j,\xno{j},x^*_{1:(\ell-1)})\di{x'_j}}.
\end{aligned}
\end{equation*}
By the induction hypothesis, for $\ell < j$, $\p(x^*_{\ell}\mid x_j,\xno{j},x^*_{1:(\ell-1)})$ does not depend on $x_j$ unless $j\in\bar V_\ell$ (which implies $j \in V_\ell$ since $j > \ell$), so the $\ell$th term in the product can be removed from the numerator and the denominator if $j\notin V_\ell$. Thus, we now have
\begin{equation*}
\begin{aligned}
\p(x_j\mid\xno{j},x^*_{1:(j-1)})&=\frac{\p(x_j\mid\xno{j})\prod_{\ell: \ell<j,\ j\in V_\ell}\p(x^*_{\ell}\mid x_j,\xno{j},x^*_{1:(\ell-1)})}{\int_{\rr}\p(x'_j\mid\xno{j})\prod_{\ell: \ell<j,\ j\in V_\ell}\p(x^*_{\ell}\mid x'_j,\xno{j},x^*_{1:(\ell-1)})\di{x'_j}}.
\end{aligned}
\end{equation*}
Now we will prove that $\p(x_j\mid\xno{j})\prod_{\ell: \ell<j,\ j\in V_\ell}\p(x^*_{\ell}\mid x_j,\xno{j},x^*_{1:(\ell-1)})$ depends on $x_k$ only if $k\in\bar V_j$, which will conclude case $j$. Consider first the terms $\p(x^*_{\ell}\mid x_j,\xno{j},x^*_{1:(\ell-1)})$ in the product: such a term depends on $x_k$ only if $k \in \bar{V}_{\ell}$ by the induction hypothesis, and by Lemma~\ref{lemma:nested-V}, $\bar V_\ell\subseteq\bar V_j$.
Next, the term $\p(x_j\mid\xno{j})$ only depends on $x_k$ if $k=j$ or $k$ is connected to $j$ in $G$. If $k=j$, $k\in\bar V_j$ by definition; otherwise, $k\in\bar V_j$ follows from Lemma~\ref{lemma:G-V-relationship}.
\end{proof}

We have established that as long as $\bm\Sigma$ reflects the structure of $G$, i.e., for $i\ne j$, $\left(\bm\Sigma^{-1}\right)_{ij}\ne0$ only if $i$ and $j$ are connected in $G$, the covariance-guided proposals are compatible. For these proposals, sampling and evaluating the proposal density can be done without without querying the density $\Phi$, so Theorem \ref{thm:scep-runtime} implies that Metro with covariance-guided proposals requires $O(p2^w)$ queries of $\Phi$.

It is easy to see that if $X$ is Gaussian and $\gamma=1$, we always accept because the acceptance ratio is always $1$. 

\section{Proofs}


\subsection{Necessity of the knockoff symmetry condition}
At first glance, it might not be directly clear why we need the symmetry condition \eqref{eq:knockoff-symmetry-condition} in Theorem \ref{theorem:seq}. To illustrate why this is necessary, consider the following example: let $p$ be the density function
\begin{equation*}
p(x,\bxk)= 1+\sin\left(2\pi\Bigl(x_p+\tilde x_p+\sum_{j=1}^{p-1}(x_j-\tilde x_j)\Bigr)\right), \qquad  (x,\bxk)\in[0,1]^{2p}.
\end{equation*}
Each pair $(X_j, \Xk_j)$ is unexchangeable unless $j = p$. However, marginalizing out any coordinate would yield the uniform distribution. In other words, in any sequential construction, we would have that $X_j$ and $\Xk_j$ are conditionally independent and, therefore, exchangeable up until the last step. 
Since $X_p$ and $\tilde X_p$ are exchangeable conditionally on everything else, conditional exchangeability would hold. This example shows that if we require conditional exchangeability only, we would not necessarily end up with valid knockoffs. 
To press this point further, imagine that in the SCIP algorithm, we change the last step: instead of conditional independence we simply require conditional exchangeability. Then we are not guaranteed to get valid knockoffs. Violation of the symmetry condition in just one step is, in general, not allowed.

\subsection{Section \ref{sec:scep} proofs}
\begin{proof}[Proof of Theorem \ref{theorem:seq}]
When condition 1 is met, we have
\begin{equation}
(X_1, \dots, X_p, \tilde X_1, \dots, \tilde X_j)\eqd(X_1, \dots, X_p, \tilde X_1, \dots, \tilde X_j)_{\text{swap}(k)}, \quad 1\le k\le j,
\label{eq:seq}
\end{equation}
for each $j = 1, \ldots, p$ by marginalizing out $\tilde { X}_{(j+1):p}$ in \eqref{eq:knockoff-swap}. This implies both \eqref{eq:cond-exch} and \eqref{eq:knockoff-symmetry-condition}.

Assume now that \eqref{eq:cond-exch} and \eqref{eq:knockoff-symmetry-condition} hold. We prove by induction that \eqref{eq:seq} holds for $j=1,2,\dots,p$; when $j = p$, we achieve pairwise exchangeability \eqref{eq:knockoff-swap}.  When $j = 0$, there is nothing to prove.  Assume \eqref{eq:seq} holds up until $j-1$. The distribution of $(X_1, \dots, X_p, \tilde X_1, \dots, \tilde X_j)$ can be decomposed into the marginal distribution of $(\Xno{j},\tilde{ X}_{1:(j-1)})$ and the conditional distribution $(X_j, \tilde X_j) \mid  \Xno{j}, \tilde{ X}_{1:(j-1)}$. The former is symmetric in $X_k$ and $\tilde X_k$, $1\le k\le j-1$ as seen by taking the induction hypothesis and marginalizing out $X_j$. The latter is symmetric in $X_j$ and $\tilde X_j$ because of \eqref{eq:cond-exch}, and symmetric in $X_k$ and $\tilde X_k$ for $1\le k\le j-1$ because of  \eqref{eq:knockoff-symmetry-condition}.
\end{proof}

\begin{proof}[Proof of Proposition \ref{prop:trMarkov}]
Let $Z_1\sim\pi$, and the Markov kernel be given by the law $\mathcal L(\tilde Z \mid Z)$. Then the chain has  $\pi$ as a stationary distribution. Also, $(Z_1, Z_2)\eqd(Z,\tilde Z)$. Time reversibility also holds since
\begin{equation*}
(Z_t, Z_{t+1})\eqd(Z, \tilde Z)\eqd(\tilde Z, Z)\eqd(Z_{t+1},Z_t).
\end{equation*}
The converse is a direct consequence of time reversibility.
\end{proof}

\subsection{Proof of Theorem \ref{theorem:timecomp}}
\label{subapp:proof-time-comp}

\begin{proof}
Suppose we are given a procedure $\mathcal K$ that always generates valid knockoffs for $X$ given an unnormalized density function $\Phi$ for $X$ and (implicitly) the support of $\Phi$ (or the dominating measure). Below, the symbols $\p_\Phi$, $\mathcal L_\Phi$, etc., indicate that we are working in the probability space defined by $\Phi$ (together with its support) and, implicitly, the procedure $\mathcal K$; $(X,\Xk,N)$ are jointly defined on this space. Let $\pi$ be the normalized density, which is defined by $\Phi$ through
\begin{equation*}
\pi(x)=\lambda_\Phi\Phi(x),\quad x\in\rr^p.
\end{equation*}
We abuse notation slightly and for a Borel set $M$ write $\Phi(X\in M)$ for $\p_\Phi(X\in M)/\lambda_\Phi$. 

We will consider the conditional probability $\p_\Phi(N\ge2^{\#\{j:X_j\ne\tilde X_j\}}-1\mid { X},\bXk)$ and prove it is almost surely one. Conditioning on $(X,\Xk)$ makes the probability easier to analyze because it fixes the exponent $\#\{j:X_j\ne\tilde X_j\}$. We will basically identify which $2^{\#\{j:X_j\ne\tilde X_j\}}-1$ points have to be queried: any point obtained by changing $x_j$ to $\xk_j$ for $j$ in any non-empty subset of $\{j:x_j\ne\xk_j\}$. We will prove the theorem for both discrete and continuous distributions. Analysis of the conditional probability can be done directly in the discrete case, while in the continuous case we cover the possible values of $(X,\Xk)$ by a countable union of sets, and then prove the conditional probability of interest is one on each of the sets. Although more technical, the proof for the continuous case shares the same structure as that for the discrete case.

\paragraph{Discrete case.} Let $(x,\xk)$ be some pair of input and output, respectively, of $\mathcal K$. For any $S\subseteq\{1,2,\dots,p\}$, define $x_{\ch(S)}$ as $x$ except with $x_j$ changed to $\xk_j$ for all $j\in S$, and vice versa for $\xk_{\ch(S)}$, so that 
\begin{equation*}
(x,\xk)_{\swap(S)}=(x_{\ch(S)},\xk_{\ch(S)}).
\end{equation*}
We will prove that as long as $x_{\ch(S)}\neq x$, then $\mathcal K$ must have queried the oracle at $x_{\ch(S)}$; now assume $x_{\ch(S)}\neq x$. We also assume $\p_\Phi(X=x,\Xk=\xk)>0$ (otherwise $(x,\xk)$ is not a possible pair of input and output), which implies $\pi(x_{\ch(S)})>0$ for any $S\subseteq\{1,2,\dots,p\}$ by pairwise exchangeability. Let $q_\Phi(\bxk\mid x)=\p_\Phi(\bXk=\bxk\mid { X}= x)$. Also by pairwise exchangeability,
\begin{equation}
\pi(x)q_{\Phi}(\xk \mid x)=\pi(x_{\ch(S)})q_{\Phi}(\xk_{\ch(S)} \mid  x_{\ch(S)})\le\pi(x_{\ch(S)}).
\label{eq:discrete-bound}
\end{equation}
Let $A_x$ be the event that the input vector is $x$, so
\begin{equation*}
\p_\Phi(A_x)=\pi(x)=\lambda_\Phi\Phi(x).
\end{equation*}
Let $B_{\xk}$ be the event that the output of $\mathcal{K}$ is $\xk$, so
\begin{equation*}
\p_\Phi(B_{\xk} \mid  A_x)=q_{\Phi}(\xk \mid  x).
\end{equation*}
Let $C_S$ be the event that $\mathcal K$ does not query the oracle at $x_{\ch(S)}$.
Dividing \eqref{eq:discrete-bound} by $\lambda_\Phi$ gives 
\begin{equation}
\Phi(x_{\ch(S)})\ge\Phi(x)q_{\Phi}(\xk \mid  x)=\Phi(x)\p_\Phi(B_{\xk} \mid  A_x)\ge\Phi(x)\p_\Phi(B_{\xk}\cap C_S \mid  A_x).
\label{eq:K-ineq}
\end{equation}
Equation~\eqref{eq:K-ineq} holds for any $\Phi$ such that $\Phi(x),\Phi(\xk)>0$.
Consider a new unnormalized density
\begin{equation*}
    \Phi^S_{\eta}( y) = 
    \begin{cases}
    \eta \Phi( y),  &  y=x_{\ch(S)},\\
    \Phi( y), & \text{otherwise},
    \end{cases}
\end{equation*}
where $\eta\in(0,1]$. This new density has the same support as $\Phi$, so $\Phi_\eta^S(x_{\ch(S')})>0$ for any $S'\subseteq\{1,2,\dots,p\}$, and thus \eqref{eq:K-ineq} also holds for $\Phi_\eta^S$. Now consider ${\p_{\Phi^S_\eta}(B_{\xk}\cap C_S \mid  A_x)}=\p_{\Phi^S_\eta}(B_{\xk} \mid C_S, A_x)\p_{\Phi^S_\eta}(C_S \mid  A_x)$. The first probability does not depend on $\eta$ because the conditioning on $C_S$ means changing the oracle only at $x_{\ch(S)}$ does not affect the procedure $\mathcal K$ in any way. The second probability does not depend on $\eta$ either, for  the points that $\mathcal K$ queries can only depend on $\eta$ after $\mathcal K$ queries the oracle at $x_{\ch(S)}$. Therefore, $\p_{\Phi^S_\eta}(B_{\xk}\cap C_S \mid  A_x)=\p_{\Phi}(B_{\xk}\cap C_S \mid  A_x)$ for any $\eta\in(0,1]$.
Thus, we get from equation \eqref{eq:K-ineq} that (recall we are assuming $x\ne x_{\ch(S)}$)
\begin{equation*}
\begin{aligned}
\eta\Phi(x_{\ch(S)})&=\Phi^S_\eta(x_{\ch(S)})\ge\Phi^S_\eta(x)\p_{\Phi^S_\eta}(B_{\xk}\cap C_S \mid  A_x)=\Phi(x)\p_\Phi(B_{\xk}\cap C_S \mid  A_x).
\end{aligned}
\end{equation*}
Since $\Phi(x)>0$ (i.e., $x$ is a possible input), we conclude by letting $\eta\rightarrow 0$ that $\p_\Phi(B_{\xk}\cap C_S \mid  A_x)=0$. Combining this with $\p_\Phi(B_{\xk} \mid  A_x)>0$ (i.e., given $x$ as an input, $\xk$ is a possible output), we conclude $\p_\Phi(C_S\mid A_x,B_{\xk})=0$. That is, if $\mathcal{K}$ generates $\xk$ from input $x$, it must have queried $x_{\ch(S)}$.
Thus, combining the results for all the $S$'s that make $x\ne x_{\ch(S)}$, we can claim that given $x$ as input, $\mathcal K$ outputs $\xk$ only if it has queried the oracle at least at the set of points
\begin{equation*}
H_{(x,\tilde x)}=\{x_{\ch(S)}: S\subseteq {\{j:x_j\ne\xk_j\}}, S\ne\emptyset\}.
\end{equation*}
Mathematically, when $\p_\Phi(A_x,B_{\xk})>0$,
\begin{equation}
\p_\Phi(\mathcal K\text{ queried }\Phi\text{ at }z\text{ for all }z\in H_{(x,\tilde x)} \mid  A_x,B_{\xk})=1.
\label{eq:query-all-points}
\end{equation}
Since there are $2^{\#\{j:x_j\ne\tilde x_j\}}-1$ points in $H_{(x,\tilde x)}$, we have
\begin{equation*}
\p_\Phi(N\ge2^{\#\{j:x_j\ne\tilde x_j\}}-1 \mid  A_x,B_{\xk})=1.
\end{equation*}
After marginalizing out $x$ and $\tilde x$, this leads to the claimed a.s. inequality:
\begin{equation*}
N\ge2^{\#\{j:X_j\ne\tilde X_j\}}-1.
\end{equation*}

\paragraph{Continuous case.}
The proof works similarly for the continuous case. Loosely speaking, we will construct non-overlapping hypercubes around the points in $H_{(x,\xk)}$ defined previously, and show they all contain points of query with probability one. Concretely, consider a hypercube around $z$, defined as
\begin{equation*}
F_{( z,\tilde{ z})}=\{x:|x_k-z_k|\le g(z_k,\tilde z_k),1\le k\le p\},
\end{equation*}
where
\begin{equation*}
g(z_k,\tilde z_k)=\left\{
\begin{aligned}
&\frac{|z_k-\tilde z_k|}{3}, &z_k\ne\tilde z_k,\\
&1, &z_k=\tilde z_k.
\end{aligned}
\right.
\end{equation*}
The denominator $3$ in the definition of $g$ is not essential, as long as it is large enough so $F_{(z,\tilde z)}$ does not overlap with $F_{(z,\tilde z)_{\swap(j)}}$ if $z_j\ne\tilde z_j$. Let $E_{(z,\tilde z)}$ be a joint hypercube around $(z,\tilde z)$, which is defined through $F_{(z,\tilde z)}$ and $F_{(\tilde z,z)}$ as
\begin{equation*}
E_{( z,\tilde{ z})}=\{( x,\bxk): x\in F_{( z,\tilde{ z})}, \bxk\in F_{(\tilde{ z},  z)}\}=\{( x,\bxk):|x_k-z_k|,|\xk_k-\tilde z_k|<g(z_k,\tilde z_k),1\le k\le p\}.
\end{equation*}
We use the fact that the rational points are dense in $\rr^{2p}$ to cover the entire space using a countable collection of sets. Hence, if we can prove the conditional probability $\p_\Phi(N\ge2^{\#\{j:X_j\ne\tilde X_j\}}-1\mid { X},\bXk)=1$ on every set in this collection, we can claim the corresponding equality holds unconditionally. Formally, we first consider $E_{( r,\tilde{ r})}$, where $( r,\tilde{ r})\in\mathbb Q^{2p}$ and $r_k\ne\tilde r_k,1\le k\le q$ for some positive integer $q\le p$. We will show $\p_\Phi(N\ge2^q-1\mid(X,\Xk)\in E_{(r,\tilde r)})=1$ as long as $\p_\Phi((X,\Xk)\in E_{(r,\tilde r)})>0$. Now suppose $\p_\Phi((X,\Xk)\in E_{(r,\tilde r)})>0$, which implies $\p_\Phi(X\in F_{(r,\tilde r)_{\swap(S)}})>0$ for any $S\subseteq\{1,2,\dots,p\}$. Define
\begin{equation*}
A_{(r,\tilde r)}=\{X\in F_{(r,\tilde r)}\},\quad B_{(\tilde r,r)}=\{\Xk\in F_{(\tilde r,r)}\}.
\end{equation*}
Let $S$ be any non-empty subset of $\{1,2,\dots,q\}$, so $(r,\tilde r)_{\swap(S)}\ne(r,\tilde r)$. Now we have
\begin{equation*}
\p_\Phi(({ X},\bXk)\in E_{( r,\tilde{ r})})=\p_\Phi(A_{( r,\tilde{ r})}\cap B_{(\tilde{ r}, r)})=\p_\Phi(A_{( r,\tilde{ r})})\p_\Phi(B_{(\tilde{ r},r)}\mid A_{( r,\tilde{ r})}),
\end{equation*}
and
\begin{equation*}
\begin{aligned}
\p_\Phi(({ X},\bXk)\in E_{( r,\tilde{ r})})&=\p_\Phi((X,\Xk)_{\swap(S)}\in E_{(r,\tilde r)})\\
&=\p_\Phi((X,\Xk)\in E_{(r,\tilde r)_{\swap(S)}})\\
&=\p_\Phi(X\in F_{(r,\tilde r)_{\swap(S)}},\Xk\in F_{(\tilde r,r)_{\swap(S)}})\\
&=\p_\Phi({ X}\in F_{( r,\tilde{ r})_{\swap(S)}})\p_\Phi(\tilde { X}\in F_{(\tilde{ r}, r)_{\swap(S)}}\mid { X}\in F_{( r,\tilde{ r})_{\swap(S)}})\\
&\le\p_\Phi({ X}\in F_{( r,\tilde{ r})_{\swap(S)}}).
\end{aligned}
\end{equation*}
Hence, by dividing by the common normalizing constant $\lambda_\Phi$ in the above two equations and combining them, we get
\begin{equation}
\Phi({ X}\in F_{( r,\tilde{ r})_{\swap(S)}})\ge\Phi({ X}\in F_{( r,\tilde{ r})})\p_{\Phi}(B_{(\tilde r,r)}\mid A_{( r,\tilde{ r})}).
\label{eq:cont-ineq}
\end{equation}
Now, similar to the discrete case, we consider a new unnormalized density
\begin{equation*}
    \Phi^S_{\eta}( x) = 
    \begin{cases}
    \eta \Phi( x) &  x \in F_{( r,\tilde{ r})_{\swap(S)}},\\
    \Phi( x) & \text{otherwise},
    \end{cases}
\end{equation*}
for $\eta \in (0,1]$, which has the same support/dominating measure as $\Phi$. 
It is easy to check that
equation \eqref{eq:cont-ineq} also holds for $\Phi^S_{\eta}$. Let $C_S$ be the event that $\mathcal K$ does not query $\Phi$ at any point in $F_{( r,\tilde{ r})_{\swap(S)}}$. To use the same trick as in the discrete case, we next prove $\p_{\Phi^S_\eta}(C_S\cap B_{(\tilde{ r}, r)}\mid A_{( r,\tilde{ r})})$ does not depend on $\eta$.
We have $\Phi^S_\eta({ X}\in F_{( r,\tilde{ r})})=\Phi({ X}\in F_{( r,\tilde{ r})})$ because $F_{( r, \tilde{ r})}\cap F_{( r,\tilde{ r})_{\swap(S)}}$ is empty (so $\Phi^S_\eta=\Phi$ on $F_{(r,\tilde r)}$). Note that by definition,
\begin{equation}
\begin{aligned}
\p_{\Phi}(C_S\cap B_{(\tilde{ r}, r)}\mid A_{( r,\tilde{ r})})&=\e_\Phi[\p_{\Phi}(C_S\cap B_{(\tilde{ r}, r)}\mid { X})\mid A_{( r,\tilde{ r})}],\\
\p_{\Phi^S_\eta}(C_S\cap B_{(\tilde{ r}, r)}\mid A_{( r,\tilde{ r})})&=\e_{\Phi^S_\eta}[\p_{\Phi^S_\eta}(C_S\cap B_{(\tilde{ r}, r)}\mid { X})\mid A_{( r,\tilde{ r})}],
\end{aligned}
\label{eq:cond-prob-eta}
\end{equation}
and $\p_{\Phi^S_\eta}(C_S\cap B_{(\tilde{ r}, r)}\mid { X})$, as a function of ${ X}$, does not depend on $\eta$; this holds since all $\Phi^S_\eta$'s for $\eta\in(0,1]$ have the same support, and we can follow the same argument as in the discrete case. Specifically,
\begin{equation*}
\p_{\Phi^S_\eta}(C_S\cap B_{(\tilde{ r}, r)}\mid { X})=\p_{\Phi}(C_S\cap B_{(\tilde{ r}, r)}\mid { X}).
\end{equation*}
In addition, since the unnormalized density is the same on the set $F_{(r, \tilde r)}$, we have
\begin{equation*}
\mathcal L_\Phi({ X}\mid A_{( r,\tilde{ r})})=\mathcal L_{\Phi^S_\eta}({ X}\mid A_{( r,\tilde{ r})}).
\end{equation*}
The last two equations together with equations \eqref{eq:cond-prob-eta} imply $\p_{\Phi}(C_S\cap B_{(\tilde{ r}, r)}\mid A_{( r,\tilde{ r})})=\p_{\Phi^S_\eta}(C_S\cap B_{(\tilde{ r}, r)}\mid A_{( r,\tilde{ r})})$. By \eqref{eq:cont-ineq},
\begin{equation}
\begin{aligned}
\eta\Phi({ X}\in F_{( r,\tilde{ r})_{\swap(S)}})&=\Phi^S_\eta({ X}\in F_{( r,\tilde{ r})_{\swap(S)}})\\
&\ge\Phi^S_\eta(A_{( r,\tilde{ r})})\p_{\Phi^S_\eta}(B_{(\tilde{ r}, r)}\mid A_{( r,\tilde{ r})})\\
&\ge\Phi^S_\eta(A_{( r,\tilde{ r})})\p_{\Phi^S_\eta}(C_S\cap B_{(\tilde{ r}, r)}\mid A_{( r,\tilde{ r})})\\
&=\Phi(A_{( r,\tilde{ r})})\p_{\Phi}(C_S\cap B_{(\tilde{ r}, r)}\mid A_{( r,\tilde{ r})}).
\end{aligned}
\label{eq:cont-inequ}
\end{equation}
The left hand side of \eqref{eq:cont-inequ} goes to $0$ as $\eta\to0$. Thus, recall that we are assuming $\p_\Phi(({ X},\bXk)\in E_{( r,\tilde{ r})})>0$ and $\Phi({ X}\in F_{( r,\tilde{ r})})>0$, and so we get $\p_{\Phi}(C_S\cap B_{(\tilde{ r}, r)}\mid A_{( r,\tilde{ r})})=0$. By Bayes' rule, since $\p_\Phi(A_{(r,\tilde r)}\cap B_{(\tilde r,r)})>0$, we have
\begin{equation*}
\begin{aligned}
\p_\Phi(C_S\mid A_{(r,\tilde r)}, B_{(\tilde r,r)})&=\frac{\p_\Phi(C_S\cap A_{(r,\tilde r)}\cap B_{(\tilde r,r)})}{\p_\Phi(A_{(r,\tilde r)}\cap B_{(\tilde r,r)})}\\
&=\frac{\p_\Phi(A_{(r,\tilde r)})\p_{\Phi}(C_S\cap B_{(\tilde{ r}, r)}\mid A_{(r,\tilde r)})}{\p_\Phi(A_{(r,\tilde r)}\cap B_{(\tilde r,r)})}=0.
\end{aligned}
\end{equation*}
That is, unless $( X,\bXk)\in E_{( r,\tilde{ r})}$ happens with zero probability, with probability one at least one point in $F_{( r,\tilde{ r})_{\swap(S)}}$ is queried conditional on $( X,\bXk)\in E_{( r,\tilde{ r})}$. Combining the results for all the $S$'s that make $(r,\tilde r)_{\swap(S)}\ne(r,\tilde r)$, we get $2^{q}-1$ disjoint sets, each of which must contain at least one point of query. Hence, now we can claim that for any $E_{( r,\tilde{ r})}$, where $( r,\tilde{ r})\in\mathbb Q^{2p}$ and $r_k\ne\tilde r_k,1\le k\le q$, either
\begin{equation*}
\p_\Phi(( X,\bXk)\in E_{( r,\tilde{ r})})=0
\end{equation*}
or
\begin{equation*}
\p_\Phi(N\ge2^q-1\mid ( X,\bXk)\in E_{( r,\tilde{ r})})=1.
\end{equation*}
Note that this is equivalent to $\p_\Phi(N\ge2^q-1\mid X,\bXk)=1$ almost surely on $E_{( r,\tilde{ r})}$.\footnote{The conditional probability $\p_\Phi(N\ge2^q-1\mid X,\bXk)=1$ is almost surely one on a set $U$ means $\p_\Phi(N\ge2^q-1\mid (X,\Xk)=(x,\xk))=1$ for $(x,\xk)\in U\setminus V$, where $V$ is some set satisfying $\p_\Phi((X,\Xk)\in V)=0$.} These two equations imply that, as a function of $( X,\bXk)$, the conditional probability satisfies
\begin{equation*}
\p_\Phi(N\ge2^q-1\mid X,\bXk)=1,\quad a.s.\text{ on }\bigcup_{\substack{( r,\tilde{ r})\in\mathbb Q^{2p}\\r_k\ne\tilde r_k,1\le k\le q}}E_{( r,\tilde{ r})},
\end{equation*}
because the union is over a countable index set. There is nothing special about choosing the $r_k \ne \tilde{r}_k$ on first $q$ coordinates, so we have 
\begin{equation}
\p_\Phi(N\ge2^{|D|}-1\mid X,\bXk)=1,\quad a.s.\text{ on }\bigcup_{\substack{( r,\tilde{ r})\in\mathbb Q^{2p}\\r_k\ne\tilde r_k,k\in D}}E_{( r,\tilde{ r})}, \quad D\subseteq\{1,2,\dots,p\}.
\label{eq:aslarge}
\end{equation}
Note the case $D=\emptyset$ does not follow the exact same proof, but no proof is needed in this case, since $N\ge2^{|\emptyset|}-1=0$ holds trivially. We shall thus keep in mind that \eqref{eq:aslarge} holds for any $\Phi$.

Now we go back to the conditional probability we are interested in, mathematically defined as
\begin{equation*}
f_\Phi( X,\bXk)=\p_\Phi(N\ge2^{\#\{j:X_j\ne\tilde X_j\}}-1\mid X,\bXk)=\e_\Phi[\ind_{N\ge2^{\#\{j:X_j\ne\tilde X_j\}}-1}\mid X,\bXk].
\end{equation*}
We want to show $f_\Phi( X,\bXk)=1$, a.s. Let
\begin{equation*}
T_{n,D}=\{( x,\bxk):|x_k-\xk_k|>1/n, k\in D\text{ and }x_k=\xk_k, k\notin D\}; 
\end{equation*}
therefore, $D$ is the set of coordinates where $x$ and $\xk$ could differ, and $1/n$ measures the minimum difference between these original and knockoff coordinates.
Since any point $( x,\bxk)\in\rr^{2p}$ is contained in
\begin{equation*}
T_{\left\lfloor1/\min\limits_{j,x_j\ne\xk_j}|x_j-\xk_j|\right\rfloor+1,\{j:x_j\ne\tilde x_j\}}
\end{equation*}
if $ x\ne\bxk$, and in $T_{1,\emptyset}$ if $ x=\bxk$, we have
\begin{equation*}
\rr^{2p}=\bigcup_{n=1}^\infty\bigcup_{D\subseteq {\{1,2,\dots,p\}}}T_{n,D}.
\end{equation*}
This is also a countable union, so in order to show $f_\Phi( X,\bXk)=1$ a.s., we only have to show that $f_\Phi( X,\bXk)=1$ a.s. for any $T_{n,D}$ that has positive probability of containing $( X,\bXk)$. Note that since there are exactly $|D|$ coordinates that differ for $ x$ and $\bxk$ in the set $T_{n,D}$,
\begin{equation*}
f_\Phi( X,\bXk)=\p_\Phi(N\ge2^{\#\{j:X_j\ne\tilde X_j\}}-1\mid X,\bXk)=\p_\Phi(N\ge2^{|D|}-1\mid X,\bXk)\text{ on }T_{n,D}.
\label{eq:d-delta}
\end{equation*}
So now we only need to show
\begin{equation*}
\p_\Phi(N\ge2^{|D|}-1\mid X,\bXk)=1,\quad a.s. \text{ on }T_{n,D},
\end{equation*}
which would be implied by \eqref{eq:aslarge} if we can show that
\begin{equation*}
\bigcup_{\substack{( r,\tilde{ r})\in\mathbb Q^{2p}\\r_k\ne\tilde r_k,k\in D}}E_{( r,\tilde{ r})}\supseteq T_{n,D}.
\end{equation*}
To see this, take any point $( x,\bxk)$ from $T_{n,D}$. Find rational numbers $r_k\in(x_k-1/5n,x_k+1/5n)$ and $\tilde r_k\in(\xk_k-1/5n,\xk_k+1/5n)$, $1\le k\le p$. We have $|r_k-\tilde r_k|>3/5n$ (hence also $r_k\ne\tilde r_k$) for $k\in D$, since $|x_k-\tilde x_k|>1/n$ for $k\in D$. We can now check that $( x,\bxk)\in E_{( r,\tilde{ r})}$. For $k\in D$ (if any), $|x_k-r_k|,|\tilde x_k-\tilde{r}_k|<1/5n<|r_k-\tilde r_k|/3$, and for $k\notin D$ (if any), $|r_k-x_k|,|\tilde r_k-\xk_k|<1/5n<1$.
\end{proof}

\subsection{Divide-and-conquer knockoffs}
\begin{proof}[Proof of Proposition \ref{prop:divide-and-conquer}] 
\label{proof:prop:divide-and-conquer}
    Consider the distribution of $ X$ conditional on $ X_C =  x_C$. Since $C$ separates $A$ and $B$ in the graph $G$, we have
\begin{equation*}
 X_A \independent  X_B  \mid    X_C.
\end{equation*}
The assumptions of the proposition then imply that
\begin{equation*}
( X_A,  X_B, \bXk_A, \bXk_B) \eqd ( X_A,  X_B, \bXk_A, \bXk_B)_{\swap(j)}  \mid    X_C,\quad j\in A\cup B
\end{equation*} 
and since $\bXk_C =  X_C$ a.s.,
\begin{equation*}
( X, \bXk) \eqd ( X, \bXk)_{\swap(j)}  \mid    X_C,\quad j\in A\cup B\cup C.
\end{equation*} 
Lastly, we note that conditional exchangeability implies marginal exchangeability, so
\begin{equation*}
( X, \bXk) \eqd ( X, \bXk)_{\swap(j)},\quad j\in A\cup B\cup C
\end{equation*}
as claimed.
\end{proof}

\subsection{Complexity proofs for Metropolized knockoff sampling}
\label{subapp:complexity-proofs}


\begin{lemma}\label{lemma:jtree}
When the proposal distributions are compatible for the junction tree $T$, for $1\le j\le p$, $\p(\Xk_j, X^*_j \mid   X, \bXk_{1:(j-1)},  X^*_{1:(j-1)})$, depends on $X_k$ only if $k\in\bar V_j$.
\end{lemma}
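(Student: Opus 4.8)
The plan is to prove the claim by induction on $j$, unwinding one step of Metro until everything reduces to a density ratio that the inductive hypothesis already governs. At step $j$, Algorithm~\ref{alg:finalscep} draws $X^*_j$ from the compatible proposal $q_j$ and then sets $\Xk_j = X^*_j$ with probability $\alpha_j$ and $\Xk_j = X_j$ otherwise, so conditionally on $(X, \Xk_{1:(j-1)}, X^*_{1:(j-1)})$ the pair $(\Xk_j, X^*_j)$ has (generalized) density proportional to $q_j(x^*_j \mid x_j)\bigl[\alpha_j\,\delta(\xk_j - x^*_j) + (1-\alpha_j)\,\delta(\xk_j - x_j)\bigr]$, with
\[
\alpha_j = \min\!\left(1,\ \frac{q_j(x_j \mid x^*_j)}{q_j(x^*_j \mid x_j)}\cdot\frac{\p(\Xno{j}, X_j = x^*_j, \Xk_{1:(j-1)}, X^*_{1:(j-1)})}{\p(\Xno{j}, X_j = x_j, \Xk_{1:(j-1)}, X^*_{1:(j-1)})}\right).
\]
Since $q_j$ depends on $X$ only through $X_{\bar V_j}$ by compatibility (Definition~\ref{def:consistent}), and $\delta(\xk_j - x_j)$ involves only $X_j$ with $j \in \bar V_j$, it suffices to show that the ratio $\rho(x^*_j)/\rho(x_j)$, where $\rho(z) := \p(\Xno{j}, X_j = z, \Xk_{1:(j-1)}, X^*_{1:(j-1)})$, depends on $X$ only through $X_{\bar V_j}$.

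To control this ratio I would expand $\rho(z)$ using the sequential decomposition \eqref{eq:seq-decomp} and examine each factor as a function of $z = X_j$. The leading factor $\p(\Xno{j}, X_j = z)$ is handled with the separation recorded after Definition~\ref{def:consistent}: $\bar V_j \setminus \{1,\dots,j\}$ separates $\{1,\dots,j\}$ from $\bar V_j^c$ in $G$, so $\p(X) = \p(X_{1:j}, X_{\bar V_j \setminus \{1,\dots,j\}})\,\p(X_{\bar V_j^c}\mid X_{\bar V_j \setminus \{1,\dots,j\}})$; the last term is free of $X_j$, so this factor's contribution to the ratio is a function of $X_{\bar V_j}$ only. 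Each proposal factor $\p(X^*_k \mid \cdots)$ with $k < j$ in \eqref{eq:seq-decomp} equals $q_k(x^*_k \mid X_{\bar V_k}, \Xk_{1:(k-1)}, X^*_{1:(k-1)})$, since $X^*_k$ is produced from those arguments and independent randomness; hence it depends on $z$ only if $j \in \bar V_k$. Each knockoff factor $\p(\Xk_k \mid \cdots)$ with $k < j$ can be written via Bayes' rule as $\p(\Xk_k, X^*_k \mid X, \Xk_{1:(k-1)}, X^*_{1:(k-1)})\big/\p(X^*_k \mid X, \Xk_{1:(k-1)}, X^*_{1:(k-1)})$; the inductive hypothesis at step $k$ governs the numerator and the previous observation the denominator, so this factor too depends on $z$ only if $j \in \bar V_k$.

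Combining these, in the ratio $\rho(x^*_j)/\rho(x_j)$ every factor that does not depend on $z$ cancels, and the surviving factors are the leading one, whose contribution depends on $X$ only through $X_{\bar V_j}$, together with step-$k$ factors for which $j \in \bar V_k$; for each such $k < j$, Lemma~\ref{lemma:nested-V} gives $\bar V_k \subseteq \bar V_j$, and each of those factors depends on $X$ only through $X_{\bar V_k}$, with $\bar V_k \subseteq \bar V_j$. Hence $\rho(x^*_j)/\rho(x_j)$, and therefore $\alpha_j$ and the full conditional density of $(\Xk_j, X^*_j)$, depends on $X$ only through $X_{\bar V_j}$; the base case $j = 1$ is the same argument with no product terms. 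The step I expect to be the main obstacle is the bookkeeping in this decomposition---in particular, recognizing each knockoff factor $\p(\Xk_k \mid \cdots)$ as a Bayes ratio controlled by the inductive hypothesis, and checking that ``the factor depends on $z$'' is exactly the condition under which a factor survives the ratio and hence needs the additional reduction via Lemma~\ref{lemma:nested-V}.
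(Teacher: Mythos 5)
Your proof is correct and follows essentially the same route as the paper's: induction on $j$, the sequential decomposition \eqref{eq:seq-decomp} of the density ratio in the acceptance probability, compatibility for the proposal factors, and the inductive hypothesis plus Lemma~\ref{lemma:nested-V} for the step-$k$ factors that survive cancellation. The only (harmless) difference is in the leading factor, where you invoke the separation property $X_{1:j} \ci X_{\bar V_j^c} \mid X_{\bar V_{j} \setminus \{1,\dots,j\}}$ directly, whereas the paper argues via the clique factorization that the ratio depends only on neighbors of $j$ in $G$ and then applies Lemma~\ref{lemma:G-V-relationship}.
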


\begin{proof}[Proof of Lemma \ref{lemma:jtree}]
We use induction over $j$. For the base case $j=1$, $\p(X_1^*\mid X)$ depends on $X_k$ only if $k \in\bar V_1$ by our assumption of compatible proposals. And $\p(\Xk_1\mid X,X_1^*)$ is a function of the acceptance probability
\begin{equation*}
\frac{\p(X_1^*=x_1^*\mid X_1=x_1,\Xno{1})}{\p(X_1^*=x_1\mid X_1=x_1^*,\Xno{1})}\frac{\p(X_1=x_1^*,\Xno{1})}{\p(X_1=x_1,\Xno{1})}.
\end{equation*}
The first term depends only on $X_k\in\bar V_1$ by assumption of compatible proposals. The second term depends on $X_k$ if $k$ is connected to $1$ in $G$ (or $k=1$). Since the variables are ordered by Algorithm~\ref{alg:jt-order}, $1$ only appears in $V_1$ by Lemma~\ref{lemma:var-ordering}, so $k$ has to appear in $V_1$ if $1$ and $k$ are connected. The base case is thus proved.

Suppose the claim is true for $1,\dots,j-1$. First we have $\p(X^*_j  \mid   X, \bXk_{1:(j-1)},  X^*_{1:(j-1)})$ depends on $X_k$ only if $k \in\bar V_j$ by our assumption of compatible proposals. Now $\p(\Xk_j =\xk_j  \mid X, \bXk_{1:(j-1)},  X^*_{1:j})$ is a function of the acceptance probability, which is computed from the ratio of the proposal densities (which depends only on $X_k$ for $k\in\bar V_j$ by assumption of compatible proposals) and the following ratio
\begin{equation}
\label{eq:lemma-acceptance-ratio}
\frac{\p(X_j = x_j^*, \Xno{j}, \bXk_{1:(j-1)},  X^*_{1:(j-1)})}
{\p(X_j = x_j, \Xno{j}, \bXk_{1:(j-1)},  X^*_{1:(j-1)})} 
= \frac{\p(X_j = x_j^* \mid \Xno{j})}{\p(X_j = x_j \mid \Xno{j})}\frac{\p(\bXk_{1:(j-1)},  X^*_{1:(j-1)}  \mid   X_j = x^*_j, \Xno{j})}
{ \p(\bXk_{1:(j-1)},  X^*_{1:(j-1)}  \mid   X_j = x_j, \Xno{j})}.
\end{equation}
We first consider the second term in the right hand side of the above. Consider the decomposition
\begin{equation*}
\p(\Xk_{1:(j-1)},X^*_{1:(j-1)}\mid X_j = z_j, \Xno{j})=\prod_{\ell=1}^{j-1}\p(\Xk_{\ell},X^*_{\ell}\mid X_j = z_j, \Xno{j} \Xk_{1:(\ell-1)},X^*_{1:(\ell-1)}).
\end{equation*}
The $\ell$th term in this product depends on $X_j$ only if $j \in \bar{V}_{\ell}$, by the induction hypothesis. Lemma~\ref{lemma:nested-V} then implies that for such $\ell$, $\bar{V}_\ell \subseteq \bar{V}_{j}$.
Any term in the product that does not depend on $X_j$ will be identical in the numerator and denominator of \eqref{eq:lemma-acceptance-ratio} and so will cancel. Together, this shows that the second term on the right hand side of \eqref{eq:lemma-acceptance-ratio} depends only on $k$ for $k \in \bar{V}_j$.

Next, the numerator and denominator of the first term of the right hand side of \eqref{eq:lemma-acceptance-ratio} only depends on $X_k$ if $k=j$ or $k$ is connected to $j$ in $G$. If $k=j$, then $k\in\bar V_j$ by definition; otherwise $k\in\bar V_j$ by Lemma~\ref{lemma:G-V-relationship}.
Now we have showed \eqref{eq:lemma-acceptance-ratio} depends only on $X_k$ for $k\in\bar V_j$ and the proof of the lemma is complete.
\end{proof}

\begin{proof}[Proof of Theorem \ref{thm:scep-runtime}]
Take $\gamma = 1$ for simplicitly, and take a proposal distribution $q_j$ that can be sampled from and evaluated without an evaluation of $\Phi$ (e.g., an independent Gaussian proposal in the continuous setting). We will show how Algorithm~\ref{alg:jt-order} uses the conditional dependence structure encoded in the graph $G$ to make computations of \eqref{eq:seq-decomp} simpler.

Define
\begin{equation*}
F_j( X_{V_j} =  z_{V_j}) := \p(\Xk_j, X^*_j  \mid    X_{V_j} =  z_{V_j},  X_{V_j^c}, \bXk_{1:(j-1)},  X^*_{1:(j-1)}).    
\end{equation*}
By the definition of Metro, we can write $F_j$ as the product of the proposal density and the acceptance/rejection probability: 
\begin{align*}
F_j( X_{V_j} =  z_{V_j})&=\p(\Xk_j, X^*_j   \mid    X_{V_j} =  z_{V_j},  X_{V_j^c}, \bXk_{1:(j-1)},  X^*_{1:(j-1)}) \\
&= \p(X^*_j \mid    X_{V_j} =  z_{V_j},  X_{V_j^c}, \bXk_{1:(j-1)},  X^*_{1:(j-1)}) \p(\Xk_j \mid    X_{V_j} =  z_{V_j},  X_{V_j^c}, \bXk_{1:(j-1)},  X^*_{1:j}) \\
&= q^{(2)}
    \alpha^{\ind_\text{accept}}(1 - \alpha)^{\ind_\text{reject}};
\end{align*}
where
\begin{align*}
 q^{(1)} &= \p(X_j^* = z_j  \mid   X_j = x^*_j,  X_{V_j\setminus j} =  z_{V_j \setminus j},  X_{V_j^c}, \bXk_{1:(j-1)},  X^*_{1:(j-1)}) \\
 q^{(2)} &= \p(X_j^* = x_j^*  \mid  X_j=z_j,  X_{V_j\setminus j} =  z_{V_j\setminus j},  X_{V_j^c}, \bXk_{1:(j-1)},  X^*_{1:(j-1)}). 
\end{align*}
are the proposal terms and 
\begin{equation*}
\alpha = \min \left(1,\frac{q^{(1)} \p(X_j = x^*_j,  X_{V_j \setminus j} =  z_{V_j \setminus j},  X_{V_j^c}, \bXk_{1:(j-1)},  X^*_{1:(j-1)})}{q^{(2)}\p(X_j = z_j,  X_{V_j \setminus j} =  z_{V_j \setminus j},  X_{V_j^c}, \bXk_{1:(j-1)},  X^*_{1:(j-1)})} \right)
\end{equation*}
is the acceptance probability. 
Sequentially decomposing the ratio of probabilities in the term $\alpha$, we get
\begin{equation}
\begin{aligned}
&\frac{\p(X_j = x^*_j,  X_{V_j \setminus j} =  z_{V_j \setminus j},  X_{V_j^c}, \bXk_{1:(j-1)},  X^*_{1:(j-1)})}{\p(X_j = z_j,  X_{V_j \setminus j} =  z_{V_j \setminus j},  X_{V_j^c}, \bXk_{1:(j-1)},  X^*_{1:(j-1)})}\\
&=\frac{\p(X_j = x^*_j,  X_{V_j \setminus j} =  z_{V_j \setminus j},  X_{V_j^c})}{\p(X_j = z_j,  X_{V_j \setminus j} =  z_{V_j \setminus j},  X_{V_j^c})}\times\prod_{k=1}^{j-1}\frac{\p(\Xk_k,X_k^*\mid X_j = x^*_j,  X_{V_j \setminus j} =  z_{V_j \setminus j},  X_{V_j^c},\Xk_{1:(k-1)},X^*_{1:(k-1)})}{\p(\Xk_k,X_k^*\mid X_j = z_j,  X_{V_j \setminus j} =  z_{V_j \setminus j},  X_{V_j^c},\Xk_{1:(k-1)},X^*_{1:(k-1)})}\\
&=\frac{\Phi(X_j = x^*_j,  X_{V_j \setminus j} =  z_{V_j \setminus j},  X_{V_j^c})}{\Phi(X_j = z_j,  X_{V_j \setminus j} =  z_{V_j \setminus j},  X_{V_j^c})}\times\prod_{k=1}^{j-1}\frac{\p(\Xk_k,X_k^*\mid X_j = x^*_j,  X_{V_j \setminus j} =  z_{V_j \setminus j},  X_{V_j^c},\Xk_{1:(k-1)},X^*_{1:(k-1)})}{\p(\Xk_k,X_k^*\mid X_j = z_j,  X_{V_j \setminus j} =  z_{V_j \setminus j},  X_{V_j^c},\Xk_{1:(k-1)},X^*_{1:(k-1)})},
\label{eq:step-decomp}
\end{aligned}
\end{equation}
where all we did in the second equality was cancel the normalizing constants in the first ratio.
In light of Lemma~\ref{lemma:jtree}, the ratio
\begin{equation*}
\frac{\p(\Xk_k,X_k^*\mid X_j = x^*_j,  X_{V_j \setminus j} =  z_{V_j \setminus j},  X_{V_j^c},\Xk_{1:(k-1)},X^*_{1:(k-1)})}{\p(\Xk_k,X_k^*\mid X_j = z_j,  X_{V_j \setminus j} =  z_{V_j \setminus j},  X_{V_j^c},\Xk_{1:(k-1)},X^*_{1:(k-1)})}
\end{equation*}
is one unless $j\in\bar V_k$, since the value of $X_j$ is the only one that differs between the numerator and denominator. Recall that $\bar V_k=V_k\cup\{1,2,\dots,k\}$ and note $j>k$, so $j\in\bar V_k$ is equivalent to $j\in V_k$. Thus,  \eqref{eq:step-decomp} gives 
\begin{multline*}
\frac{\p(X_j = x^*_j,  X_{V_j \setminus j} =  z_{V_j \setminus j},  X_{V_j^c}, \bXk_{1:(j-1)},  X^*_{1:(j-1)})}{\p(X_j = z_j,  X_{V_j \setminus j} =  z_{V_j \setminus j},  X_{V_j^c}, \bXk_{1:(j-1)},  X^*_{1:(j-1)})}\\
=\frac{\Phi(X_j = x^*_j,  X_{V_j \setminus j} =  z_{V_j \setminus j},  X_{V_j^c})}{\Phi(X_j = z_j,  X_{V_j \setminus j} =  z_{V_j \setminus j},  X_{V_j^c})}\times\\
\prod_{k:k<j,\ j\in V_k}\frac{\p(\Xk_k,X_k^*\mid X_j = x^*_j,  X_{V_j \setminus j} =  z_{V_j \setminus j},  X_{V_j^c},\Xk_{1:(k-1)},X^*_{1:(k-1)})}{\p(\Xk_k,X_k^*\mid X_j = z_j,  X_{V_j \setminus j} =  z_{V_j \setminus j},  X_{V_j^c},\Xk_{1:(k-1)},X^*_{1:(k-1)})}.
\end{multline*}
Now we will show that the numerators in the product satisfy
\begin{multline*}
\p(\Xk_k,X_k^*\mid X_j = x^*_j,  X_{V_j \setminus j} =  z_{V_j \setminus j},  X_{V_j^c},\Xk_{1:(k-1)},X^*_{1:(k-1)})\\
=F_k(X_j = x^*_j,  X_{V_k \cap V_j \setminus j}=  z_{V_k \cap V_j\setminus j},  X_{V_k \setminus V_j} =  x_{V_k \setminus V_j}).
\end{multline*}
By the definition of $F_k$, we need to show that
\begin{multline*}
\p(\Xk_k,X_k^*\mid X_j = x^*_j,  X_{V_j \setminus j} =  z_{V_j \setminus j},  X_{V_j^c},\Xk_{1:(k-1)},X^*_{1:(k-1)})\\
=\p(\Xk_k,X_k^*\mid X_j = x^*_j,  X_{V_k \cap V_j \setminus j} =  z_{V_k \cap V_j\setminus j}, X_{V_k \setminus V_j} =  x_{V_k \setminus V_j}, X_{V_k^c},\Xk_{1:(k-1)},X^*_{1:(k-1)}).
\end{multline*}
Inspecting the equation, we note that the only difference between the two quantities is the value of the variables that are being conditioned on, and only the values of $X_{V_j\setminus V_k}$ are are different. Thus, we only need to show this set of values do not affect the conditional density. By Lemma~\ref{lemma:jtree}, we just have to show that $V_j\setminus V_k$ does not overlap with $\bar V_k$. To see this, take any $\ell\in\bar V_k=V_k\cup\{1,2,\dots,k\}$. If $\ell\in V_k$, then certainly $\ell\notin V_j\setminus V_k$. Now we consider the case where $\ell\in\{1,2,\dots,k\}$ and thus less than $j$. If $\ell\in V_j\setminus V_k$, then it must be in $V_j$. By Lemma~\ref{lemma:var-ordering}, we have $V_j=V_\ell$, which means variables $\ell,\dots,j$ are all sampled when $V_\ell$ is selected; specifically, $k$ is sampled when $V_\ell$ is selected, so $V_k=V_\ell=V_j$. But in this case certainly $V_j\setminus V_k=\emptyset$, which is a contradiction.

Similarly, we also have that the corresponding denominators in the product satisfy 
\begin{multline*}
\p(\Xk_k,X_k^*\mid X_j = z_j,  X_{V_j \setminus j} =  z_{V_j \setminus j},  X_{V_j^c},\Xk_{1:(k-1)},X^*_{1:(k-1)})\\
=F_k(X_j = z_j,  X_{V_k \cap V_j \setminus j} =  z_{V_k \cap V_j\setminus j},  X_{V_k \setminus V_j} =  x_{V_k \setminus V_j}).
\end{multline*}
Combining all these together, the acceptance probability becomes
\begin{multline*}
 \alpha = \min \left(1,\frac{q^{(1)} \p(X_j = x^*_j,  X_{V_j \setminus j} =  z_{V_j \setminus j},  X_{V_j^c}, \bXk_{1:(j-1)},  X^*_{1:(j-1)})}{q^{(2)}\p(X_j = z_j,  X_{V_j \setminus j} =  z_{V_j \setminus j},  X_{V_j^c}, \bXk_{1:(j-1)},  X^*_{1:(j-1)})} \right) \\
 = \min \left(1,
\frac{q^{(1)} c^{(1)} \Phi(X_j = x^*_j,  X_{V_j \setminus j} =  z_{V_j \setminus j},  X_{V_j^c})}
{q^{(2)} c^{(2)} \Phi(X_j = z_j,  X_{V_j \setminus j} =  z_{V_j \setminus j},  X_{V_j^c})} \right),
\end{multline*}
where
\begin{align*}
 c^{(1)} &=\prod_{k:k<j,\ j\in V_k} F_k(X_j = x^*_j,  X_{V_k \cap V_j \setminus j} =  z_{V_k \cap V_j\setminus j},  X_{V_k \setminus V_j} =  x_{V_k \setminus V_j}), \\
 c^{(2)} &=\prod_{k:k<j,\ j\in V_k} F_k(X_j = z_j,  X_{V_k \cap V_j \setminus j} =  z_{V_k \cap V_j\setminus j},  X_{V_k \setminus V_j} =  x_{V_k \setminus V_j}).
\end{align*}
Note that the only difference between $c^{(1)}$ and $c^{(2)}$ is changing $x_j^*$ to $z_j$. From this expression, we see that $\left\{F_j(X_{V_j}= z_{V_j}): z_\ell\in\{x_\ell,x_\ell^*\}\text{ for all }\ell\in V_j\right\}$ can be computed in terms of 
\begin{enumerate}
    \item $F_{k}(X_{V_k}= z_{V_k})$ for all $k < j$ with $z_{V_k}$ such that $z_\ell \in \{x_\ell, x^*_\ell\}$ for all $\ell \in V_k$,
    \item $\Phi( X_{V_j} =  z_{V_j},  X_{V_j^c} =  x_{V_j^c})$
        for all $ z_{V_j}$ with $z_\ell \in \{x_\ell, x^*_\ell\}$ for all $\ell \in V_j$.
\end{enumerate}
Thus, it requires $2^{ \abs{V_j}   }$ evaluations of $\Phi$ to compute $\left\{F_j(X_{V_j}= z_{V_j}): z_\ell\in\{x_\ell,x_\ell^*\}\text{ for all }\ell\in V_j\right\}$ from the previously computed values of $F_k(X_{V_k}=z_{V_k})$ for $k < j$. Since $\abs{V_j} \le w+1$, $1\le j\le p$, we have that the total number of queries of $\Phi$ is $O(p2^w)$. Having access to the $F_k(X_{V_k}=z_{V_k})$'s is sufficient to run the algorithm, because at each step of the algorithm, it is clear that the acceptance ratio $\alpha$ can be computed from these $F_k(X_{V_k}=z_{V_k})$'s, so the proof is complete.
\end{proof}

More generally, if an evaluation of $\Phi$ costs $a$ units of computation and a floating point operation requires $1$ unit, then the same proof shows that the algorithm takes $O(p (p + a) 2^w)$ since computing the $F_j(X_{V_j}= z_{V_j})$ requires a total of $O(p 2^w)$ evaluations of $\Phi$ and each  $F_j(X_{V_j}= z_{V_j})$ requires $O(p)$ floating point operations to compute $c^{(1)}$ and $c^{(2)}$.


\subsubsection*{MTM}
For the MTM method of Section \ref{subsec:MTM}, the proposal distribution requires evaluations of $\Phi$, so the requirements of Theorem \ref{thm:scep-runtime} do not hold. The proof of Theorem \ref{thm:scep-runtime} can easily be adapted to apply to MTM, however. Instead, for the MTM method we see that at step $j$ we need access to 
\begin{equation*}
\p(X_j = z_j, \Xno{j} = \xno{j}, \bXk_{1:(j-1)} = \bXk_{1:(j-1)},  X^*_{1:(j-1)} =  x^*_{1:(j-1)})
\end{equation*}
up to a common constant for $z_j \in C^{m,t}_{x_j} \cup C^{m,t}_{x_j^*}$. By an analysis similar to the proof of Theorem \ref{thm:scep-runtime}, it suffices to have access to
\begin{enumerate}
    \item $F_k(X_{V_k}= z_{V_k})$ for all $k < j$ for $z_{V_k}$  such that $z_\ell \in C^{m,t}_{x_\ell} \cup C^{m,t}_{x_\ell^*}$ for all $k \in V_k$, 
    \item $\Phi(X_j = z_j, \Xno{j} = \xno{j})$ for all $z_j \in C^{m,t}_{x_j} \cup C^{m,t}_{x_j^*}$.
\end{enumerate}
In order to compute $F_j(X_{V_j}= z_{V_j})$ for all $ z_{V_j}$ such that $z_\ell \in C^{m,t}_{x_\ell} \cup C^{m,t}_{x_\ell^*}$ for all $\ell \in V_j$ for use in later steps, we additionally need to compute
\begin{equation*}
    \Phi( X_{V_j} =  z_{V_j},  X_{V_j^c} =  x_{V_j^c}) \text{ for all }  z_{V_j} \text{ such that } z_\ell \in C^{m,t}_{x_\ell} \cup C^{m,t}_{x_\ell^*} \text{ for all } \ell \in V_j.
\end{equation*}
Thus, MTM requires $O(p(3m + 1)^w)$ queries of $\Phi$, where $3m+1$ is an upper bound on $|C^{m,t}_{x_\ell} \cup C^{m,t}_{x_\ell^*}|$ for all $\ell$.

\subsubsection*{Discrete distributions with small support}
Consider the direct methods for discrete distributions with small support in Section \ref{subsec:rejection-free}, which can be viewed as a Metro algorithm that never rejects. The proof of Theorem \ref{thm:scep-runtime} can easily be adapted to apply to the this case. Note that since the procedure never rejects, we have $X^* = \Xk$ and we can omit writing terms of $X^*$ in all of the following discussion.

Let $C_j$ be the support of $X_j$ and suppose that $\abs{C_j} \le K$ for all $j$. Then at step $j$, the method requires access to
\begin{equation*}
\p(X_j = z_j, \Xno{j} = \xno{j}, \bXk_{1:(j-1)} = \bxk_{1:(j-1)})
\end{equation*}
for all $z_j \in C_j$. 
By an analysis similar to the proof of Theorem \ref{thm:scep-runtime}, it suffices to have access to
\begin{enumerate}
    \item $F_k(X_{V_k}= z_{V_k})$ for all $k < j$ for $ z_{V_k}$ such that $z_\ell \in C_k$ for all $\ell \in V_j$.
    \item $\Phi(X_j = z_j, \Xno{j} = \xno{j})$ for all $z_j \in C_j$. 
\end{enumerate}
In order to compute $F_j(X_{V_j}= z_{V_j})$ for all $ z_{V_j}$ such that $z_\ell \in C_\ell$ for all $\ell \in V_j$ for use in later steps, we additionally need to compute
\begin{equation*}
    \Phi( X_{V_j} =  z_{V_j},  X_{V_j^c} =  x_{V_j^c}) \text{ for all }  z_{V_j} \text{ such that } z_\ell \in C_\ell \text{ for all } \ell \in V_j.
\end{equation*}
Thus, the rejection-free procedure requires $O(pK^w)$ queries of $\Phi$.

\subsection{Lower bounds for graphical models}
\label{sec:chordal}

\begin{proof}[Proof of Proposition \ref{prop:gaussian-optimality}]
We only have to prove that we can design a Metro algorithm that meets the requirement in the proposition. The rest of the proposition is implied by Corollary \ref{coro:chordal-lb}, proved next. We focus on the continuous case (and when the support is all of $\rr^p$), because we are interested in the Gaussian case. 
Since Metro can only learn about the distribution by making queries to the oracle, we describe an algorithm which first makes about $p^2/2$ queries to attempt to recover the covariance matrix. This can be done in such a way that if the model is Gaussian as described in Proposition~\ref{prop:gaussian-optimality}, we recover the correct covariance matrix. Thus, covariance-guided proposals will accept at every step and get us a knockoff vector that differs with the input original vector at every coordinate (with probability one).

Let $e_j$ be the $j$th vector of the canonical basis. Assume the model is $\mathcal N( 0,\bm\Sigma)$ let $\Phi$ be the unnormalized density. Then 
\begin{equation*}
\begin{aligned}
\Phi(0)& =W\cdot \exp\left(-\frac12 0\bm\Sigma^{-1} 0^\top\right)=W,\\
\Phi( e_j) & =W\cdot \exp\left(-\frac12 e_j\bm\Sigma^{-1} e_j^\top\right)=W\cdot\exp\left(-\frac12\left(\bm\Sigma^{-1}\right)_{jj}\right),\ 1\le j\le p,\\
\Phi( e_j + e_k )& =W\cdot\exp\left(-\frac12\left(\left(\bm\Sigma^{-1}\right)_{jj}+\left(\bm\Sigma^{-1}\right)_{kk}+2\left(\bm\Sigma^{-1}\right)_{jk}\right)\right),\ j\ne k,
\end{aligned}
\end{equation*}
where $W$ is an unknown positive constant. Hence, if we query the oracle at the above $1+p(p+1)/2$ points, we can always solve for a potential precision matrix. The algorithm's next step depends on the solution to these equations.
\begin{itemize}
    \item If the matrix formed by the solution to this system of equations is positive definite, and reflects the structure of the graph $G$, i.e., the $(i,j)$th entry is non-zero only if $i=j$ or $i$ and $j$ are connected by an edge in $G$, then the algorithm inverts this solution matrix to get $\bm\Sigma$, and then proceeds with covariance-guided proposals with any positive $s$ which makes $\bm\Gamma(s)$ in  \eqref{eq:knockoff-cov} positive definite.
    \item Otherwise, the model must not be a multivariate Gaussian distribution with zero mean, positive definite covariance matrix and have the required conditional independence structure. In that case, the algorithm will just choose any proposal distribution (e.g., independent Gaussian proposals). 
\end{itemize}

Since running Metro with the covariance-guided proposal requires $O(p2^w)$ queries of $\Phi$, this algorithm in total requires $O(p^2+p2^w)$ queries of $\Phi$. If indeed $\Phi( x)\propto\exp\left(-x\bm\Sigma^{-1} x^\top/2\right)$, the algorithm will recover the right covariance matrix $\bm\Sigma$, and therefore will never reject, and produce a knockoff $\tilde{ X}$ such that $X_j\ne\tilde{X}_j$ for all $j$ (with probability one).
\end{proof}

\begin{proof}[Proof of Corollary \ref{coro:chordal-lb}]
Call the procedure $\mathcal K$. We argue by contradiction and show that if the Corollary did not hold, we would be able to exploit $\mathcal K$ and construct an algorithm that contradicts Theorem~\ref{theorem:timecomp}. Loosely speaking, if there is one clique ${c_0}$ for which, with positive probability, the inequality fails to hold, then we can design an algorithm that generates knockoffs for any $|{c_0}|$-dimensional random vector $X_{c_0}$ by inferring the ``missing'' variables $X_{\{1:p\}\setminus{c_0}}$, applying $\mathcal K$ to get $(\Xk_{c_0},\Xk_{\{1:p\}\setminus {c_0}})$, and keeping only $\Xk_{c_0}$, which is a valid knockoff of $X_{c_0}$.

Formally, if there exists a $\Phi_0 = \prod_{c \in C} \phi_c( x_c)$ such that with positive probability
\begin{equation*}N < \max_{c \in C}2^{\#\{j\in c: X_j \ne \Xk_j\}} - 1,
\end{equation*}
then there must exist some clique $c_0\in C$ such that simultaneously $\#\{j\in c_0: X_j \ne \Xk_j\} = \max_{c \in C} \#\{j\in c: X_j \ne \Xk_j\}$ and $N < 2^{\#\{j\in c_0: X_j \ne \Xk_j\}} - 1$ with positive probability. Note that such a $c_0$ must not be empty ($|c_0|\ge1$), since we cannot have $N<0$ with positive probability. Fix this distribution $\Phi_0$ and this clique $c_0$. For each $x_{c_0}$ in the domain, use $\Phi_0(\cdot \mid x_{c_0})$ to denote the normalized density of $X_{\{1:p\} \setminus {c_0}} \mid X_{c_0} = x_{c_0}$ when $X\sim\Phi_0$, and consider any sampler $\mathcal S_{x_{c_0}}$ for this conditional distribution. This sampler takes a $|c_0|$-dimensional vector $x_{c_0}$ and produces a sample from $\Phi_0(\cdot \mid x_{c_0})$. Now consider the following generic procedure for knockoff sampling:
\begin{enumerate}
    \item The user inputs unnormalized density $\Psi_{c_0}$ and vector $ X_{c_0}$, where $ X_{c_0}$ follows the distribution induced by the unnormalized density $\Psi_{c_0}$.
    \item Sample $ X_{\{1:p\} \setminus {c_0}}$ from the conditional distribution $\Phi_{c_0}(\cdot \mid X_{c_0})$ using the  sampler $\mathcal S_{X_{c_0}}$.
    \item Provide $\Phi^\prime( z) := \Psi_{c_0}( z_{c_0})\Phi_{c_0}( z_{\{1:p\} \setminus{c_0}} \mid  z_{c_0})$ as a function of $ z$ and the realization $( X_{c_0},  X_{\{1:p\} \setminus {c_0}})$ as an input to procedure $\mathcal K$, which then returns $(\bXk_{{c_0}}, \bXk_{\{1:p\} \setminus {c_0}})$. Let $N$ bet the number of queries of $\Phi^\prime$ required by $\mathcal K$.
    \item Return $\bXk_{c_0}$.
\end{enumerate}
This procedure queries $\Psi_{c_0}$ exactly $N$ times, since step 2 uses $\mathcal S_{x_{c_0}}$, which does not rely on $\Psi_{c_0}$ and does not query it. Furthermore, step 3 queries $\Phi^\prime$ and hence $\Psi_{c_0}$ exactly $N$ times. We now show that the procedure is also guaranteed to produce valid knockoffs for any $\Psi_{c_0}$. To do this, we only need to show that $\Phi'(z)$ factors over $G$; note that
\begin{equation*}
 \Phi^\prime( z) = \Psi_{c_0}( z_{c_0})\frac{\Phi_0(z)}{ \int\Phi_0( z_{c_0},w_{\{1:p\}\setminus {c_0}})\di{w_{\{1:p\}\setminus{c_0}}} }=\underbrace{\frac{\Psi_{c_0}(z_{c_0})\phi_{c_0}(z_{c_0})}{\int\Phi_0( z_{c_0},w_{\{1:p\}\setminus {c_0}})\di{w_{\{1:p\}\setminus {c_0}}}}}_{\text{only depends on }z_{c_0}}\prod_{\substack{c\in C\\c\ne {c_0}}}\phi_{c}(z_{c}).
\end{equation*}
Since $\Phi'$ has the assumed structure implied by $G$, by the assumption on the validity of $\mathcal K$, $(\bXk_{{c_0}}, \bXk_{\{1:p\} \setminus {c_0}})$ is a valid knockoff for the augmented random vector $( X_{c_0},  X_{\{1:p\} \setminus {c_0}})$. Marginally, $ X_{c_0} \sim \Psi_{c_0}$, so we simply marginalize out $ X_{\{1:p\} \setminus {c_0}}$ and $\bXk_{\{1:p\} \setminus {c_0}}$ which preserves pairwise exchangeability.

Finally, because ${c_0}$ is the complete graph on $|{c_0}|$ coordinates, this is a generic knockoff sampler for random vectors of dimension $|{c_0}|$. Specifically, by our initial choice of $\Phi_0$, letting $\Psi_{c_0}$ correspond to $\Phi_{c_0}$ (the marginal density of $ X_{c_0}$ when $ X \sim \Phi_0$) we have $N < 2^{\#\{j \in {c_0}: X_j \ne \Xk_j\}} - 1$ with positive probability. This contradicts Theorem $\ref{theorem:timecomp}$, which says the inequality must hold with zero probability for any input density, including $\Phi_{c_0}$.
\end{proof}

\section{Efficient matrix inversion for covariance-guided proposals}
\label{app:mat-inv}
Let $\bm\Sigma_j$ be the matrix composed of the first $(p+j)$ rows and columns of $$\bm\Gamma=\left[
  \begin{array}{cc}
\bm\Sigma & \bm\Sigma-\diag( s)\\
\bm\Sigma-\diag( s) & \bm\Sigma\\
  \end{array}
\right].$$
We want to find the inverses of $\bm\Sigma_0,\bm\Sigma_1,\dots,\bm\Sigma_{p-1}$ (assuming $\bm\Sigma_{p-1}$ is invertible). Note that
\begin{equation*}
\bm\Sigma_{j+1}=\left[
  \begin{array}{cc}
\bm\Sigma_j & \gamma_{j+1}\\
\gamma_{j+1}^\top & \sigma_{j+1}^2\\
  \end{array}
\right],
\end{equation*}
where $\sigma_{j+1}^2$ is the $(j+1)$th diagonal element of $\bm\Sigma$, and $\gamma_{j+1}$ is the truncated $(p+j)$th column of $\bm\Gamma$.
We have
\begin{equation*}
\bm\Sigma_{j+1}^{-1}=\left[
  \begin{array}{cc}
\left(\bm\Sigma_j-\frac{1}{\sigma_{j+1}^2}\gamma_{j+1}\gamma_{j+1}^\top\right)^{-1} & -\frac{1}{\sigma_{j+1}^2-\gamma_{j+1}^\top\bm\Sigma_j^{-1}\gamma_{j+1}}\bm\Sigma_j^{-1}\gamma_{j+1}\\
-\frac{1}{\sigma_{j+1}^2-\gamma_{j+1}^\top\bm\Sigma_j^{-1}\gamma_{j+1}}\gamma_{j+1}^\top\bm\Sigma_j^{-1} & \frac{1}{\sigma_{j+1}^2-\gamma_{j+1}^\top\bm\Sigma_j^{-1}\gamma_{j+1}}\\
  \end{array}
\right].
\end{equation*}
And by the Sherman--Morrison formula,
\begin{equation*}
\left(\bm\Sigma_j-\frac{1}{\sigma_{j+1}^2}\gamma_{j+1}\gamma_{j+1}^\top\right)^{-1}=\bm\Sigma_j^{-1}-\frac{\bm\Sigma_j^{-1}\gamma_{j+1}\left(\bm\Sigma_j^{-1}\gamma_{j+1}\right)^{\top}}{-\sigma_{j+1}^2+\gamma_{j+1}^\top\bm\Sigma_j^{-1}\gamma_{j+1}}.
\end{equation*}
With all the elements of recursion in place, we can invert $\bm\Sigma_0$ and recursively calculate the inverse matrices of $\bm\Sigma_1,\dots,\bm\Sigma_{p-1}$. We have made code available that implements this recursion efficiently.

\section{Group knockoffs}
\label{app:grpko}
We can easily generalize our work to the group knockoff filter first presented in \citet{pmlr-v48-daia16} to control the group false discovery rate. As in that work, let $\{I_1,I_2,\dots,I_k\}$ be a partition of $\{1,2,\dots,p\}$, and suppose we want to construct $\bXk$ such that for each $j = 1, \ldots, k$,
\begin{equation*}
({X}_{I_1},{X}_{I_2},\dots,{ X}_{I_k},\tilde { X}_{I_1},\tilde { X}_{I_2},\dots,\tilde { X}_{I_k})\eqd ({ X}_{I_1},{ X}_{I_2},\dots,{ X}_{I_k},\tilde { X}_{I_1},\tilde { X}_{I_2},\dots,\tilde { X}_{I_k})_{\text{swap}(I_j)}.
\end{equation*}
At each step, we can draw a proposal ${ X}^*_{I_j}= x^*_{I_j}$ from a faithful multivariate distribution, and accept it with probability
\begin{equation*}
\min\left(1,\frac{q_j( x_{I_j} \mid   x^*_{I_j})\p(\Xno{I_j}=\xno{I_j},{ X}_{I_j}= x^*_{I_j},\tilde { X}_{I_{1:(j-1)}}=\tilde{ x}_{I_{1:(j-1)}},{ X}^*_{I_{1:(j-1)}}= x^*_{I_{1:(j-1)}})}{q_j( x^*_{I_j} \mid   x_{I_j}) \p (\Xno{I_j}=\xno{I_j},{ X}_{I_j}= x_{I_j},\tilde { X}_{I_{1:(j-1)}}=\tilde{ x}_{I_{1:(j-1)}},{ X}^*_{I_{1:(j-1)}}= x^*_{I_{1:(j-1)}})}\right).
\end{equation*}

\section{Extended simulation results} \label{app:sim-details}


\subsection{Discrete Markov chains simulation details}
In Figure \ref{fig:dmc-compare}, the best MTM specification is taken from $\{(\gamma,m,t):\gamma=0.999,1\le m\le10,1\le t\le 5\}$ for $\alpha=0.2,0.3,0.4,0.5$, and from $\{(\gamma,m,t):\gamma=0.999,1\le m\le10,1\le t\le 5\}\cup\{(\gamma,m,t):m=4,t=1,\gamma=0.1,0.2,0.3,0.4,0.5,0.6,0.7,0.8,0.9,0.999\}$ for $\alpha=0,0.05,0.1,0.15$. Plots showing the the individual performance of each of these methods are included below.



\begin{figure}
\centering
\includegraphics[width = \textwidth]{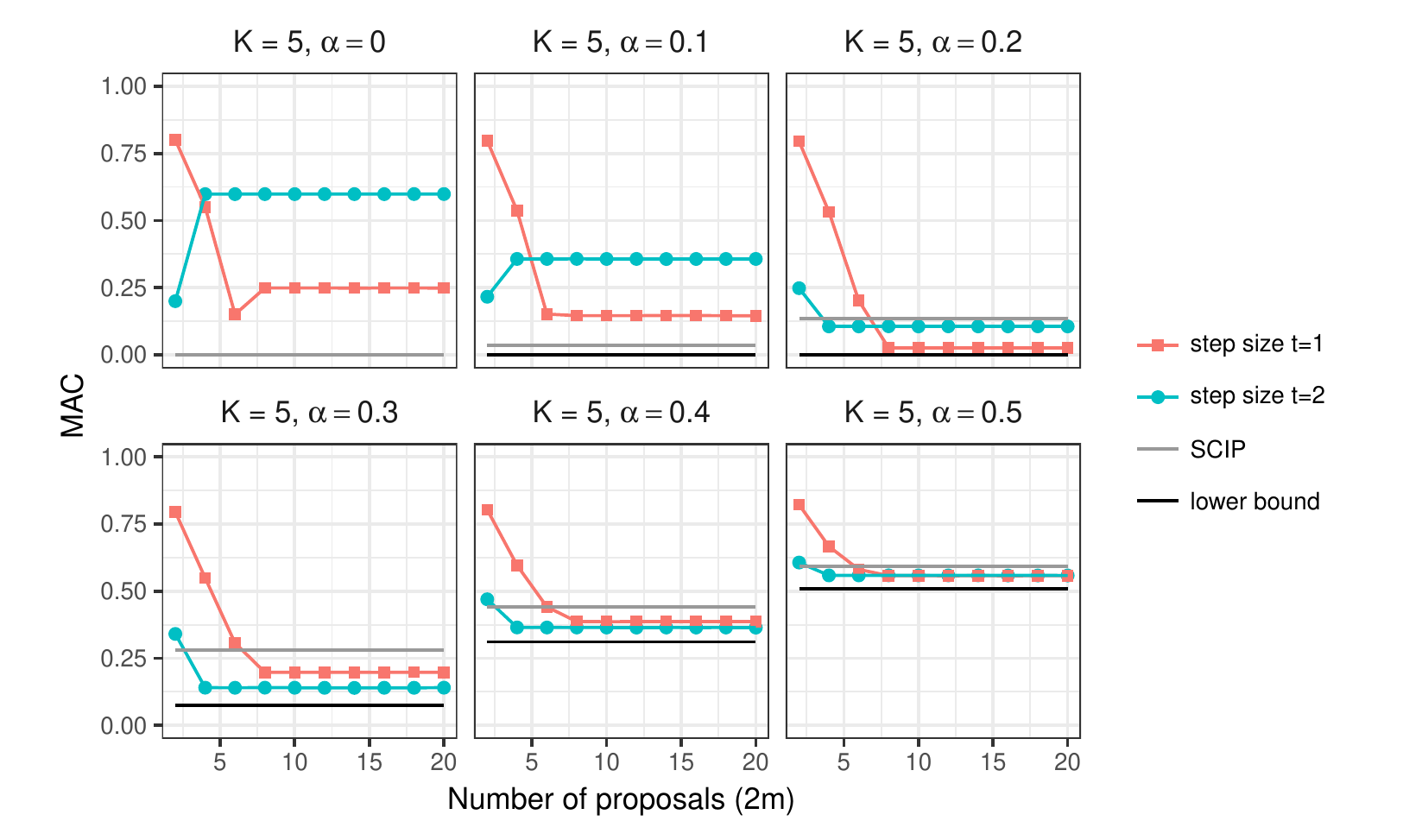}
    \caption{Simulation results for the discrete Markov chains with MTM, $K=5$, $\gamma=0.999$. All standard errors are below $0.001$.}
    \label{figure:DMC5}
\end{figure}

\begin{figure}
\centering
\includegraphics[width = \textwidth]{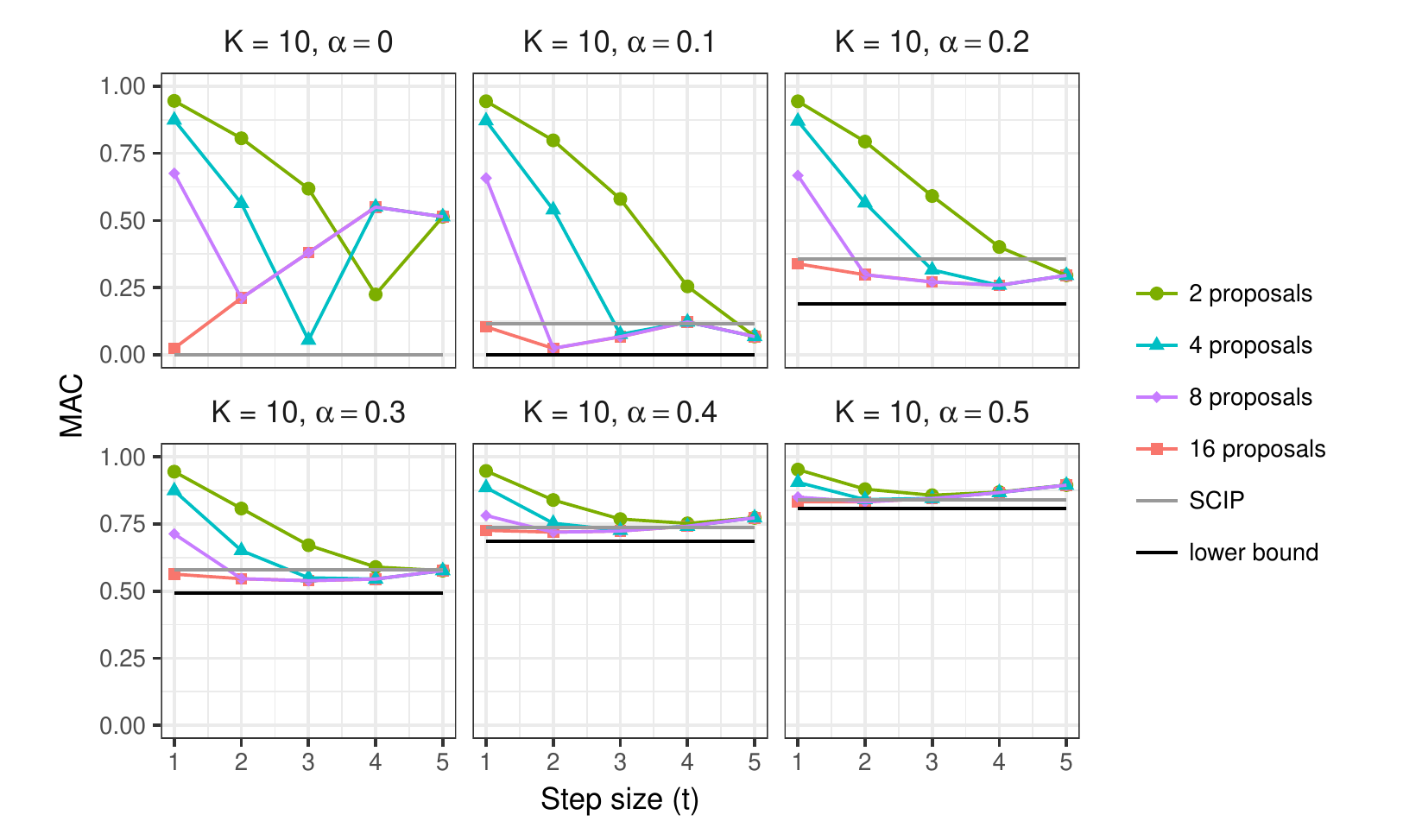}
    \caption{Simulation results for the discrete Markov chains with MTM, $K=10$, $\gamma=0.999$. All standard errors are below $0.001$.}
    \label{figure:DMC10}
\end{figure}

\subsection{Effect of \texorpdfstring{$\gamma$}{gamma}}
\label{subapp:effect-gamma}

The tuning parameter $\gamma$ for the MTM procedure introduced in Section \ref{subsec:MTM} may appear mysterious to the reader and warrants an explanation. In most of our MTM simulations, $\gamma$ is set to be $0.999$, but there are cases where it is necessary to tune $\gamma$ for improved performance. For example, in the discrete Markov chain experiment in Figure \ref{figure:DMC5}, we sometimes observe that MAC increases with the number of proposals. This is surprising, and upon closer inspection, we find that the reason is that many pairs $(X_j,\Xk_j)$ have negative correlations, which leads to increased MAC. In this case, what is happening is that we are proposing and accepting points that are so far away from $X_j$ that they become negatively correlated with $X_j$, which is undesirable. To shift the negative correlations toward zero, $\gamma$ can be decreased so that $X_j = \tilde{X}_j$ more frequently. We illustrate this in Figure \ref{figure:eff-gamma-all}. For example, in the setting where have independent coordinates taking on $K=5$ possible states, the best performance is obtained with $\gamma=4/5$, since with this value of $\gamma$ there will be a probability $1-\gamma=1/5$ of rejection, which makes $X_j$ and $\tilde X_j$ independent. Tuning $\gamma$ may also enable fewer rejections at later stages of the algorithms, since the knockoffs at later coordinates will be less constrained.\footnote{Choosing $\gamma$ less than $1$ means that no matter what the proposal $X_j^*$ is, it will be rejected with positive probability. While we typically want to avoid rejections, rejecting at early stages in the algorithm my lead to better performance by enabling higher quality knockoffs at later steps in the algorithm. In particular, with $\gamma = 1$, at some step $k > j$, it might be the case that none of the points in the proposal set have positive probability, because any point in the proposal set is inconsistent with a rejection that occurred previously in step $j$.  When $\gamma < 1$, however, any proposed value at step $k$ is consistent with a rejection at step $j$, because there is always at least a $1 - \gamma$ chance of rejecting at step $j$, so we avoid the undesirable situation described above.}
 Based on our simulation results, we only recommend tuning $\gamma$ when the variables are discrete with small support and the dependence between variables is weak.

\begin{figure}[h]
\centering
\includegraphics[width = \textwidth]{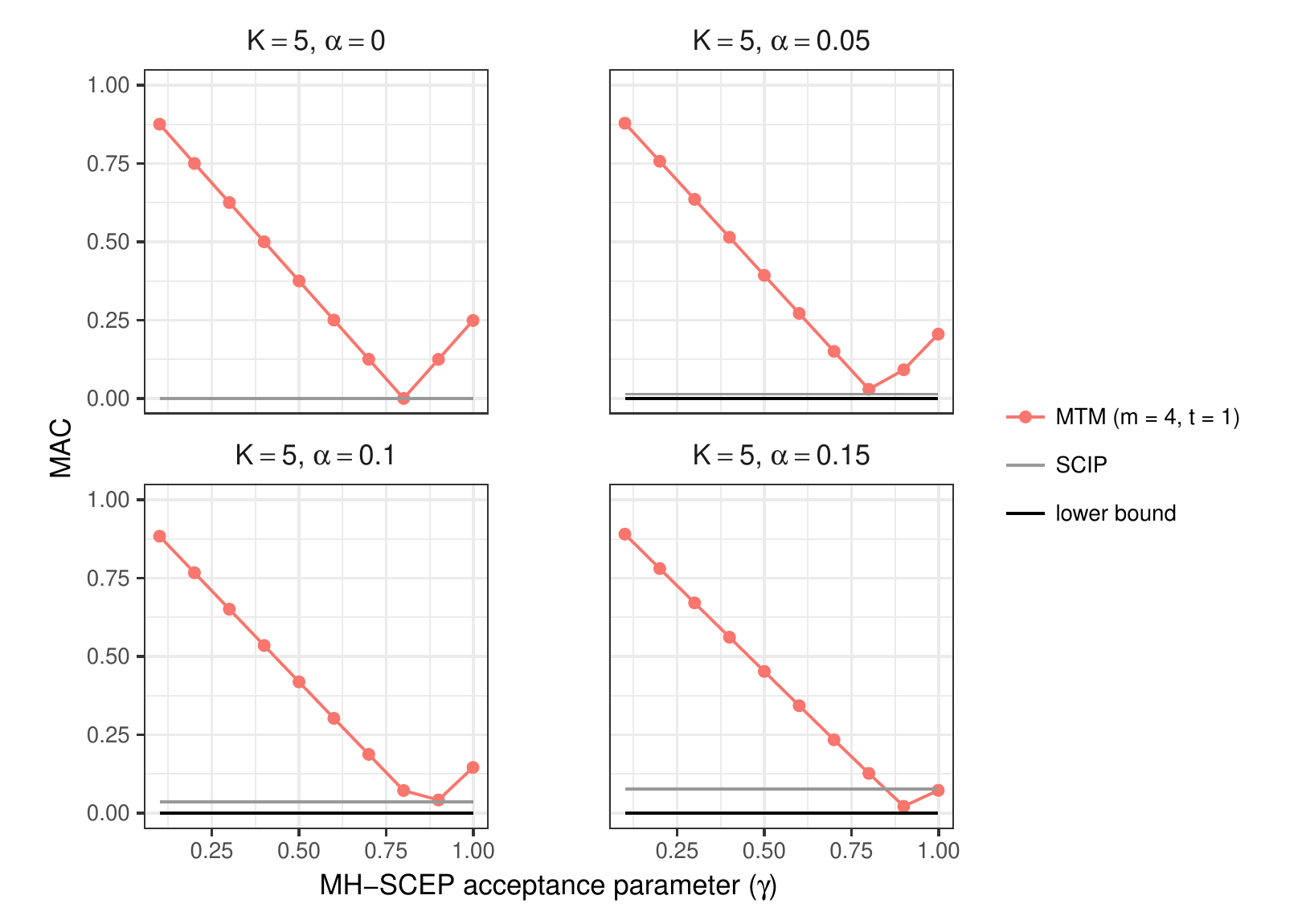}
    \caption{Simulation results showing the effect of the parameter $\gamma$ for the discrete Markov chains with the MTM method. Here, $K=5$, $\alpha=0$, $0.05$, $0.1$, $0.15$, $m=4$ and $t=1$. All standard errors are below $0.001$.}
    \label{figure:eff-gamma-all}
\end{figure}




\subsection{Ising model simulation details} \label{subapp:ising-sims}
Sampling Ising variables $X$ is done with a Metroplis--Hastings sampler implemented in the \texttt{bayess} R package.

\subsubsection{Divide-and-conquer simulation details}
We set $\Xk_{i_1,i_2} := X_{i_1,i_2}$ for all $(i_1,i_2)$ such that $i_1$ belongs to the set $C$ of columns defined as $C = \{1 \le i \le 100 : i = a_0 + b(w-1), b \in \mathbb N \}$; the spacing $w$ is a fixed constant (see Figure \ref{fig:grid-seperator} for an illustration) and the offset $a_0$ is chosen uniformly from $\{2,\dots,w+1\}$. This implies that  for all sites $i_1, i_2)$, $\p(X_{i_1,i_2} = \Xk_{i_1,i_2}) < 1$. 

The SDP lower bound is not available in this case, because it would require computing a $10000 \times 10000$ covariance matrix and then solving the SDP, which is intractable. Instead, to evaluate the quality of the knockoffs, we compare to Ising model knockoffs on a smaller grid that does not require the divide-and-conquer technique. Our baseline is thus the MAC evaluated at interior nodes  $1 < i_1, i_2 < 10$ achieved by 
the SCIP procedure. We consider interior nodes because we recall that correlations on the edges of the grid  are smaller. We compare this figure of merit to the MAC of the interior variables of the $100 \times 100$ grid. Without the divide-and-conquer strategy, the MAC of the two procedures would be very similar---hence, this is a sensible baseline.

\subsection{Gibbs model simulation details}
In Figure \ref{fig:gibbs}, the best MTM specification is taken from $\{(\gamma,m,t,w):\gamma=0.999,1\le m\le5,1\le t\le 7,w=3\}\cup\{(\gamma,m,t,w):\gamma=0.999,m=1,1\le t\le10,w=5\}$ for $\beta=0.07,0.1,0.3$, and from $\{(\gamma,m,t,w):\gamma=0.999,1\le m\le5,1\le t\le 7,w=3\} \cup \{(\gamma,m,t,w):\gamma=0.999,m=1,1\le t\le 10,w=3,5\}$ for $\beta=0.003,0.01,0.02,0.05$. The best $m=1$, $w=3$ MTM for each $\beta$ is taken from the same set intersecting $\{(\gamma,m,t,w):m=1,w=3\}$.

\end{document}